\pgfplotsset{compat=1.18}
\newtheorem{theorem}{Theorem}
\newtheorem{corollary}{Corollary}
\newtheorem{lemma}{Lemma}
\newtheorem{proposition}{Proposition}
\newcommand{\abs}[1]{\left| #1 \right|}
\newcommand{\ang}[1]{\left\langle #1 \right\rangle}
\newcommand{\bkt}[1]{\left[ #1 \right]}
\newcommand{\bktz}[1]{\left\llbracket #1 \right\rrbracket}
\newcommand{\brc}[1]{\left\{ #1 \right\}}
\newcommand{\floor}[1]{\left\lfloor #1 \right\rfloor}
\newcommand{\frnorm}[1]{\left\lVert #1 \right\rVert_\mathrm{F}}
\newcommand{\nop}[1]{\left. #1 \right.}
\newcommand{\norm}[1]{\left\lVert #1 \right\rVert}
\newcommand{\prn}[1]{\left( #1 \right)}
\newcommand{\mbkt}[1]{\mleft[ #1 \mright]}
\newcommand{\mbrc}[1]{\mleft\{ #1 \mright\}}
\newcommand{\mprn}[1]{\mleft( #1 \mright)}
\newcommand{\iabs}[1]{\bigl| #1 \bigr|}
\newcommand{\ibkt}[1]{\bigl[ #1 \bigr]}
\newcommand{\ibrc}[1]{\bigl\{ #1 \bigr\}}
\newcommand{\iprn}[1]{\bigl( #1 \bigr)}
\DeclareMathOperator*{\cl}{cl}
\DeclareMathOperator*{\diag}{diag}
\DeclareMathOperator*{\poly}{poly}
\DeclareMathOperator*{\spn}{span}
\newcommand{\Bernoulli}[1]{\mathsf{Bernoulli} \mprn{#1}}
\newcommand{\bigO}[1]{O \mprn{#1}}
\newcommand{\bigTheta}[1]{\Theta \mprn{#1}}
\newcommand{\Categorical}[1]{\mathsf{Categorical} \mprn{#1}}
\newcommand{\circled}[1]{   % https://tex.stackexchange.com/a/7045
    \tikz[baseline=(char.base)]{\node[shape=circle, draw, inner sep=0.5pt] (char) {#1};}
}
\newcommand{\E}[1]{\mathbb{E} \mbkt{#1}}
\newcommand{\equad}{\mathrel{\phantom{=}}}
\newcommand{\given}{\:\middle\vert\:}
\renewcommand{\H}[1]{H \mprn{#1}}
\newcommand{\I}[1]{I \mprn{#1}}
\newcommand{\Iv}[1]{\mathbbm{1} \mbrc{#1}}
\newcommand{\lfrac}[2]{\nop{#1 \middle/ #2}}
\newcommand{\littleO}[1]{o \mprn{#1}}
\renewcommand{\mod}[1]{\, \prn{\mathrm{mod} \ #1}}   % https://tex.stackexchange.com/a/137076
\newcommand{\multibinom}[2]{\prn{\kern-.3em \prn{\genfrac{}{}{0pt}{}{#1}{#2}} \kern-.3em}}   % https://tex.stackexchange.com/a/5825
\renewcommand{\P}[1]{\mathbb{P} \mprn{#1}}
\newcommand{\rank}[1]{\mathrm{rank} \mprn{#1}}
\renewcommand{\Re}{\mathrm{Re}}
\newcommand{\smashwave}[1]{\smash{\uwave{#1}}}   % https://tex.stackexchange.com/a/474671
\newcommand{\stackclap}[2]{\stackrel{\mathclap{#1}}{#2}}   % https://tex.stackexchange.com/a/257531
\renewcommand{\top}{\mathrm{T}}
\newcommand{\Uniform}[2]{\mathsf{Uniform} \mprn{#1, #2}}
\newcommand{\zero}{\phantom{0}}
\newcommand{\Cregp}{\mathcal{C}_\mathsf{perm}}
\newcommand{\Csum}{\mathrm{C}_\mathsf{psum}}
\newcommand{\Perror}[1]{\mathrm{P}_\mathsf{error}^{#1}}
\newcommand{\M}{\mathcal{M}}
\newcommand{\R}{\mathbb{R}}
\newcommand{\Sg}{\mathrm{S}}
\begin{document}

\bstctlcite{IEEEexample:BSTcontrol}   % to ensure that adjacent author names that are the same are not dashed in the bibliography

\title{Permutation Capacity Region of \\ Adder Multiple-Access Channels}
\author{William~Lu~and~Anuran~Makur
    \thanks{The author ordering is alphabetical. An earlier version of this work was presented in part at the IEEE International Symposium on Information Theory (ISIT) 2023 \cite{LuMakurISIT2023}.}
    \thanks{W. Lu is with the Department of Computer Science, Purdue University, West Lafayette, IN 47907, USA (e-mail: lu909@purdue.edu).}
    \thanks{A. Makur is with the Department of Computer Science and the Elmore Family School of Electrical and Computer Engineering, Purdue University, West Lafayette, IN 47907, USA (e-mail: amakur@purdue.edu).}
}

\maketitle

\begin{abstract}
    Point-to-point permutation channels are useful models of communication networks and biological storage mechanisms and have received theoretical attention in recent years. Propelled by relevant advances in this area, we analyze the \emph{permutation adder multiple-access channel} (PAMAC) in this work. In the PAMAC network model, $d$ senders communicate with a single receiver by transmitting $p$-ary codewords through an adder multiple-access channel whose output is subsequently shuffled by a random permutation block. We define a suitable notion of \emph{permutation capacity region} $\Cregp$ for this model, and establish that $\Cregp$ is the simplex consisting of all rate $d$-tuples that sum to $\lfrac{d(p - 1)}{2}$ or less. We achieve this sum-rate by encoding messages as i.i.d. samples from categorical distributions with carefully chosen parameters, and we derive an inner bound on $\Cregp$ by extending the concept of time sharing to the permutation channel setting. Our proof notably illuminates various connections between mixed-radix numerical systems and coding schemes for multiple-access channels. Furthermore, we derive an alternative inner bound on $\Cregp$ for the binary PAMAC by analyzing the root stability of the probability generating function of the adder's output distribution. Using eigenvalue perturbation results, we obtain error bounds on the spectrum of the probability generating function's companion matrix, providing quantitative estimates of decoding performance. Finally, we obtain a converse bound on $\Cregp$ matching our achievability result.
\end{abstract}

\begin{IEEEkeywords}
Noisy permutation channel, adder multiple-access channel, capacity region, time sharing, spectral stability.
\end{IEEEkeywords}

\section{Introduction} \label{section:introduction}

The \emph{noisy permutation channel} model introduced in \cite{Makur2018} is a natural abstraction of point-to-point communication through networks in which packets arrive out-of-order. It consists of a discrete memoryless channel (DMC) followed by a random permutation block that permutes the output codeword of the DMC. Several recent advances have been made to understand this model, including the original capacity bounds in \cite{Makur2020a, Makur2020b}, the subsequent covering-number-based bounds in \cite{TangPolyanskiy2023}, and the coding schemes for related models in \cite{KovacevicTan2018b, ShomoronyHeckel2021, ShomoronyVahid2021, TamirMerhav2021}.

While the capacity of point-to-point permutation channels has been established, no prior work has generalized these results to a network setting. Motivated by intrinsic theoretical interest in the mathematical formulations underpinning permutation channels, and auxiliary applications in communication networks to boot, we initiate an information-theoretic study of \emph{permutation networks} by analyzing the permutation capacity region of the $p$-ary PAMAC in this paper. Before presenting our formal model, we briefly outline some applications of permutation channels and multiple-access channels (MACs) in coding theory, communication networks, and molecular data storage systems.

\subsection{Motivation and Related Literature}

A classical model in coding theory is the \emph{random deletion channel}, wherein symbols or packets are successfully transferred from the sender to the receiver with some probability and silently dropped otherwise. This contrasts with the \emph{erasure channel}, in which the receiver is notified of dropped symbols. Constructing error-correcting codes that achieve capacity of the random deletion channel is a notable problem in coding theory. As discussed in \cite{DiggaviGrossglauser2001, Metzner2009}, attaching sequence numbers to packets reduces the problem to the well-understood task of coding for the erasure channel. On the other hand, \emph{low density parity check codes} allow the decoder to implement verifications that are robust to shuffling of the packets \cite{Mitzenmacher2006}. These results may be interpreted as preliminary steps towards analyzing the capacity of an erasure channel followed by a random permutation block.

Analogous results exist in the communication networks literature. For example, \emph{Reed-Solomon codes} \cite{XuZhang2002} are comparable to using sequence numbers to convert a random deletion channel into a random erasure channel. Packet impairment errors can be corrected by using a code designed for a permutation channel, even if sequence numbers are not included \cite{GadouleauGoupil2010}. These results may be interpreted as constructing codes for specific types of noisy permutation channels.

In general, noisy permutation channels are suitable models of \emph{multipath routed networks}, wherein each packet takes one of several possible routes from the sender to the receiver. The random permutation block models the effect of differing route latencies causing the packets to arrive at the receiver out-of-order. Mobile networks whose topologies change over time, and load-balanced networks where packets are often re-routed, are concrete examples of multipath routed networks. Past works have analyzed rate-delay tradeoffs for multipath routed networks without accounting for packet impairments such as insertions, substitutions, erasures, and deletions \cite{WalshWeberMaina2008, WalshWeberMaina2009}. More recent work takes packet impairments into account and uses message encodings that are invariant under packet permutation \cite{KovacevicVukobratovic2013}. In this vein, \emph{multiset codes} analyzed in \cite{KovacevicVukobratovic2015} encode messages as samples from some probability distribution and decode by analyzing the empirical distribution of output samples. Other multiset codes based on \emph{Sidon sets} are investigated in \cite{KovacevicTan2018b}.

Another motivation for studying permutation channels arises from their relevance to DNA storage systems, which are attractive mediums for archival storage due to their high density and reliability over long periods of time \cite{YazdiKiahGarciaruizMaZhaoMilenkovic2015, ErlichZielinski2016}. Implementations of such systems store data in relatively short DNA molecules as strings of a few hundred nucleotides \cite{HeckelShomoronyRamchandranTse2017}. Each DNA molecule can be interpreted as a codeword whose alphabet is the nucleobases $\brc{A, C, G, T}$. The receiver uses shotgun sequencing to randomly sample short fragments of codewords from the DNA pool in an unordered fashion. This setting assumes that the DNA molecules are not corrupted and are read by the receiver without noise. Alternatively, in the \emph{noisy shuffling channel} \cite{ShomoronyHeckel2019}, a DMC models the potential for DNA molecules to be corrupted during synthesis or storage. We refer readers to \cite{YazdiKiahGarciaruizMaZhaoMilenkovic2015} for an overview of DNA storage systems, and \cite{KiahPuleoMilenkovic2016, KovacevicTan2018a} for further examples of coding for DNA storage. Recent work \cite{ShomoronyHeckel2021} introduces the \emph{noisy shuffling-sampling channel}, which reflects practical constraints in the current implementations of DNA storage systems. The authors characterize the capacity of this channel using a coding scheme based on simple indexing.

We remark that the permutation channel model is similar to two other models from the recent information theory literature. In the \emph{torn-paper coding} setting \cite{ShomoronyVahid2021}, message codewords are split into chunks at random indices, and the chunks are permuted as they pass through the torn-paper channel. The receiver must recover the original message from these shuffled chunks. This model lacks a DMC since even the noise-free version of this problem is non-trivial. In the \emph{bee-identification problem} \cite{TamirMerhav2021}, the sender and receiver have access to a ground-truth codebook containing a list of codewords. The codebook is randomly shuffled by a random permutation block and each codeword is passed through a DMC. Here, the receiver's goal is to recover the permutation sampled by the block.

In a different vein, classical analyses of adder MACs include results pertaining to their capacity regions, sum-capacities, and optimal coding schemes \cite{ChangWeldon1979,ElGamalKim2011}. For example, \cite{KasamiLin1976} investigated multiple-access binary erasure channels in both the noiseless and noisy settings, and recent work \cite{GyorfiLaczay2004} analyzed \emph{random-access channels}, wherein only a subset of users are active and codes must identify the active users on top of recovering the sent messages. In general, adder MACs are suitable models for wireless networks where senders transmit messages using orthogonal signals, such as in the \emph{frequency shift keying} scheme, and the signal energy detected by the receiver is the sum of each user's signal energy \cite{Chevillat1981}. Prior works have incorporated adder MACs into networks of point-to-point channels to model hybrid wired/wireless networks \cite{NazerGastpar2006}. Adder MACs also find application in modeling communication satellites, which are multiple-access by nature and use \emph{frequency-division multiplexing} for channelization \cite{Weldon1978}.

As noted earlier, notwithstanding the relevance of permutation networks to the communications and molecular storage domains, our primary objective in this paper is to commence a rigorous theoretical study of permutation networks to elucidate new insights on coding for such models. Previous work in information theory has studied specializations of MACs (including orthogonal MACs, multiplier MACs, and adder MACs) due to their simplicity being conducive to theoretical analysis \cite{PolyanskiyWu2017Notes}. Following this precedent, we adopt the adder MAC with addition over $\mathbb{Z}$ (cf. \cite{Gu2018}) in our initial investigation of permutation networks. Moreover, our PAMAC model can be interpreted as an abstraction of a system in which multiple senders communicate (usually wirelessly) with a base station that is in turn connected to a receiver through a network. The adder captures the effect of signal addition (in the analog domain) before the signal is sent through the network, and the ensuing noisy permutation channel captures the effects of noise and the network. As a concrete example, cellular mobile networks are divided into geographic regions, each of which is serviced by a \emph{base transceiver station} connected to a public telephone network \cite[Figure 7.18]{KuroseRoss2016}. In fact, recent work proposes a \emph{non-orthogonal multiple-access} concept for cellular mobile communications, wherein signals from multiple users are superposed in the power domain \cite{SaitoKishiyamaBenjebbourNakamuraLiHiguchi2013}. Our work thus presents a simple model that captures some high-level characteristics of such systems from the networking literature and is amenable to rigorous information-theoretic analysis.

\subsection{Notation}

Let $[n] = \mathbb{Z} \cap [1, n]$, $\bktz{n} = \mathbb{Z} \cap [0, n]$, and $\bktz{a, b} = \mathbb{Z} \cap [a, b]$ denote integer intervals. Let $\mathbb{N}$ denote the natural numbers starting from $1$. Let $\mathbb{B} = \brc{0, 1}$ be the Boolean domain. Let $\mathbb{R}_+ = [0, \infty)$ denote the non-negative real numbers. Let $\mathcal{S}_{n-1}$ denote the probability simplex in $\mathbb{R}^n$. Let $\Sg_n$ denote the symmetric group of all permutations on $n$ elements. Let $\multibinom{n}{k} = \binom{n + k - 1}{k}$ denote multiset coefficients \cite[Section 1.2]{Stanley2011} representing the number of $k$-multisubsets of $[n]$.\footnote{A $k$-multisubset of a set $\mathcal{T}$ is a multiset with cardinality $k$, counting multiplicities, where each element is a member of $\mathcal{T}$.} Given a set $A \subseteq \mathbb{R}^n$, let $\cl A$ denote its closure. Let $\Iv{\cdot}$ denote the Iverson bracket. Given a non-negative integer $x \in \mathbb{N} \cup \brc{0}$ and a natural number $y \in \mathbb{N}$, let $x \mod{y}$ be the remainder of $x$ divided by $y$. We assume empty sums are $0$ and empty products are $1$, namely for all $a > b$,
\begin{align}
    \sum_{i=a}^b f(i) = 0 \enspace \text{and} \enspace \prod_{i=a}^b f(i) = 1 \, .
\end{align}

In the context of Landau notation, let $\poly(n)$ be an unspecified polynomial function of $n$. For functions $f, g: \mathbb{N} \rightarrow \mathbb{R}$, we write $f(n) \lesssim g(n)$ or $g(n) \gtrsim f(n)$ to mean that $f$ is asymptotically dominated by $g$, namely
\begin{align}
    \exists n_0 \in \mathbb{N}, \, \forall n \geq n_0, \, f(n) \leq g(n) \, .
\end{align}

Bold letters denote column vectors or matrices unless otherwise stated. Given a vector $\mathbf{x}$, let $\bkt{\mathbf{x}}_i$ denote its $i$th entry. For any $p \in \bkt{1, \infty}$, let $\norm{\mathbf{x}}_p$ denote the $\ell^p$-norm of $\mathbf{x}$. Let $\mathbf{1}_n$ be the length-$n$ vector of all ones.

Given a matrix $\mathbf{A} \in \mathbb{R}^{m \times n}$, let $\bkt{\mathbf{A}}_{\ang{i}}$ denote its $i$th row represented as a column vector, let $\bkt{\mathbf{A}}_j$ denote its $j$th column, and let $\bkt{\mathbf{A}}_{i,j}$ denote the entry at row $i$ and column $j$. Let $\frnorm{\mathbf{A}}$ denote the matrix's Frobenius norm, let $\mathbf{A}^{-1}$ denote its inverse (when $m = n$ and $\mathbf{A}$ is non-singular), let $\mathbf{A}^\top$ denote its transpose, let $\sigma_j(\mathbf{A})$ denote its $j$th greatest singular value, and let $\sigma_{\min}(\mathbf{A})$ denote its $\min \brc{m, n}$th greatest singular value. For any $p \in \bkt{1, \infty}$, let $\norm{\mathbf{A}}_p$ denote the induced $\ell^p$-operator norm of $\mathbf{A}$. Let $\mathbf{I}_n$ be the $n \times n$ identity matrix. Let $\boldsymbol{\diag}(x_1, \dots, x_n) \in \mathbb{R}^{n \times n}$ be the diagonal matrix with diagonal entries $x_1, \dots, x_n$.

Given alphabet sets $\mathcal{X}$ and $\mathcal{Y}$ and random variables $X \in \mathcal{X}$ and $Y \in \mathcal{Y}$, denote the probability mass function of $X$ as the \emph{row} vector $\mathbf{p}_X \in \mathcal{S}_{\abs{\mathcal{X}} - 1}$, or equivalently, the function
\begin{align}
    p_X: \mathcal{X} \rightarrow [0, 1], \enspace p_X(x) = \bkt{\mathbf{p}_X}_x \, .
\end{align}
Denote the conditional probability distribution of $Y$ given $X$ as the \emph{row} stochastic matrix $\mathbf{P}_{Y|X} \in \mathbb{R}^{\abs{\mathcal{X}} \times \abs{\mathcal{Y}}}$, or equivalently, the kernel
\begin{align}
    p_{Y|X}: \mathcal{Y} \times \mathcal{X} \rightarrow [0, 1], \enspace p_{Y|X}(y|x) = \bkt{\mathbf{P}_{Y|X}}_{x,y} \, .
\end{align}

We occasionally use ``sequence-builder'' notation for tuples or multisets of subscripted variables:
\begin{align}
    \brc{x_i}_{i=1}^n &= (x_1, \dots, x_n) \, , \\
    \brc{x_{i,j}}_{i=1,j=1}^{m,n} &= (x_{1,1}, \dots, x_{1,n}, \dots, x_{m,1}, \dots, x_{m,n}) \, .
\end{align}
In the context of codewords, let $y_1^n = (y_1, \dots, y_n)$ and $\prn{x_i}_1^n = (x_{i,1}, \dots, x_{i,n})$. Throughout this paper, we measure Shannon entropy $\H{\cdot}$ and mutual information $\I{\cdot; \cdot}$ using bits.

\subsection{Formal Model} \label{subsection:formal-model}

\begin{figure*}[ht]
    \centering
    \includegraphics[trim = 10mm 66mm 10mm 68mm, clip, width=0.95\linewidth]{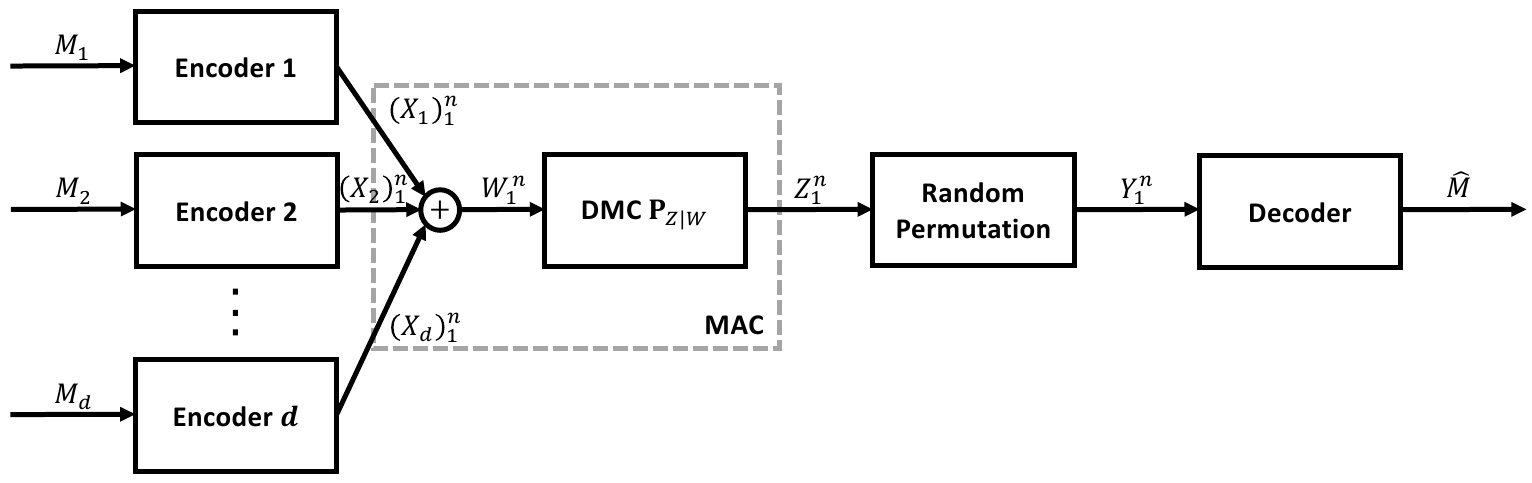}
    \caption{Illustration of a $d$-user PAMAC communication system.}
    \label{figure:formal-model}
\end{figure*}

Let $d \geq 2$ denote a fixed number of senders and $n \in \mathbb{N}$ denote the blocklength. We consider a simple single-hop network model consisting of the noisy $d$-user adder MAC followed by a random permutation block. We refer to this network model as the \emph{permutation adder multiple-access channel} (PAMAC).

In the PAMAC model, the senders (indexed by $i \in [d]$) seek to transmit independent \emph{messages} $M = (M_1, \dots, M_d)$, respectively. Each $M_i$ is uniformly distributed on a finite \emph{message set} $\mathcal{M}_i$ with $\abs{\mathcal{M}_i} \geq 2$. For notational simplicity, let $\mathcal{M} = \mathcal{M}_1 \times \cdots \times \mathcal{M}_d$.

Let $p \geq 2$ be the alphabet size of each sender's codeword. Let $\mathcal{X} = \bktz{p - 1}$ and $\mathcal{Y} = \bktz{d(p - 1)}$ be the input and output alphabets of the adder MAC, respectively. Let $f_{i,n}: \mathcal{M}_i \rightarrow \mathcal{X}^n$ be a (possibly randomized) \emph{encoder} for sender $i$, where the auxiliary randomness in the encoder functions $f_{1,n}, \dots, f_{d,n}$ are mutually independent. Let $g_n: \mathcal{Y}^n \rightarrow \mathcal{M} \cup \brc{\mathsf{error}}$ be a (possibly randomized) \emph{decoder} at the receiver, which may return an ``error''.

Each sender $i$ uses its encoder $f_{i,n}$ to encode its message $M_i$ into an $n$-length $p$-ary codeword $f_{i,n}(M_i) = \prn{X_i}_1^n \in \mathcal{X}^n$. These codewords are transmitted through a noiseless \emph{adder MAC} in a memoryless fashion:
\begin{align}
    \forall j \in [n], \enspace W_j = \sum_{i=1}^d X_{i,j} \, .
\end{align}
The output $W_1^n \in \mathcal{Y}^n$ then passes through a stationary DMC to produce $Z_1^n \in \mathcal{Y}^n$. This DMC is defined by an invertible and entry-wise strictly positive row stochastic matrix $\mathbf{P}_{Z|W} \in \mathbb{R}^{\abs{\mathcal{Y}} \times \abs{\mathcal{Y}}}$.\footnote{We impose the invertibility and strict positivity assumptions for analytical convenience, although they could be weakened in principle.} In particular, we have
\begin{align}
    \forall w_1^n, z_1^n \in \mathcal{Y}^n, \enspace p_{Z_1^n|W_1^n}(z_1^n|w_1^n) = \prod_{j=1}^n p_{Z|W}(z_j|w_j) \, .
\end{align}

Finally, $Z_1^n$ passes through an independent \emph{random permutation block} to generate the output codeword $Y_1^n \in \mathcal{Y}^n$ at the receiver. The random permutation block operates as follows. First, a random permutation $\sigma \in \Sg_n$ is drawn uniformly and independently of all other variables; the permutation $\sigma:[n] \rightarrow [n]$ is unknown to the senders and the receiver. Then, $Y_1^n$ is generated by permuting $Z_1^n$ according to $Y_j = Z_{\sigma(j)}$ for $j \in [n]$. At the end of transmission, the receiver decodes $Y_1^n$ using its decoder $g_n$ to produce \emph{estimates of the messages}
\begin{align}
    \hat{M} = g_n(Y_1^n) \in \mathcal{M} \cup \brc{\mathsf{error}} \, .
\end{align}
This communication system is illustrated in \cref{figure:formal-model}. We represent the true and empirical distributions of $W$, $Z$, and $Y$ as zero-indexed \emph{row} vectors in the $d(p - 1)$-dimensional probability simplex, e.g.,
\begin{align}
    \mathbf{p}_W, \hat{\mathbf{p}}_W \in \mathcal{S}_{d(p-1)} \subset \mathbb{R}^{d(p-1)+1} \, ,
\end{align}
and we zero-index the channel matrix $\mathbf{P}_{Z|W}$. In this context, let $\brc{\mathbf{e}_t}_{t=0}^{d(p-1)}$ be the standard basis \emph{column} vectors of $\mathbb{R}^{d(p-1)+1}$.

For any given code $(f_{1,n}, \dots, f_{d,n}, g_n)$, we let the \emph{average probability of error} be
\begin{align}
    \Perror{n} \triangleq \P{M \neq \hat{M}} \, ,
\end{align}
where the probability is computed with respect to the independent sources of randomness in the messages, the encoders, the DMC, the random permutation block, and the decoder. We define the \emph{rate $d$-tuple} $R = (R_1, \dots, R_d) \in \mathbb{R}_+^d$ of the code $(f_{1,n}, \dots, f_{d,n}, g_n)$ as (cf. \cite{Makur2020b})
\begin{align}
    (R_1, \dots, R_d) \triangleq \prn{\frac{\log_2 \abs{\M_1}}{\log_2 n}, \dots, \frac{\log_2 \abs{\M_d}}{\log_2 n}} \, .
\end{align}
We say that $R \in \mathbb{R}_+^d$ is \emph{achievable} if there exists a sequence of codes $\brc{(f_{1,n}, \dots, f_{d,n}, g_n)}_{n \in \mathbb{N}}$ with rate $d$-tuple $R$ such that $\lim_{n \rightarrow \infty} \Perror{n} = 0$. Lastly, we operationally define the \emph{permutation capacity region} $\Cregp$ of the PAMAC as the closure of the set of all achievable rate $d$-tuples:
\begin{align}
    \Cregp \triangleq \cl \brc{R \in \mathbb{R}_+^d: \text{$R$ is achievable}} \, ,
\end{align}
and the \emph{permutation sum-capacity} $\Csum$ of the PAMAC as the supremum of all achievable sum-rates (cf. \cite[Section 4.1]{ElGamalKim2011}):
\begin{align}
    \Csum \triangleq \sup \brc{\sum_{i=1}^d R_i: \text{$(R_1, \dots, R_d)$ is achievable}} \, .
\end{align}

\subsection{Main Contributions} \label{subsection:main-contributions}

In closing \cref{section:introduction}, we briefly enumerate the main results of our paper. We present three achievability bounds on the permutation capacity region of the PAMAC:
\begin{itemize}
    \item In \cref{theorem:binary-achievability-root-stability}, we restrict our analysis to the case of binary input alphabets and frame decoding as a polynomial root-finding problem by examining the probability generating function of the adder's output distribution. Using eigenvalue perturbation results from the literature, we analyze the spectral stability of the probability generating function's companion matrix, yielding a preliminary inner bound on $\Cregp$ for the binary PAMAC.
    \item In \cref{theorem:binary-achievability-time-sharing}, we devise a somewhat different coding scheme that allows individual senders to achieve greater rates than the approach presented in \cref{theorem:binary-achievability-root-stability}. Using formulations reminiscent of mixed-radix numerical systems, we extend the notion of time sharing to the permutation channel setting, yielding a tight inner bound on $\Cregp$ for the binary PAMAC.
    \item In \cref{theorem:general-achievability}, we adapt the message sets, encoders, and decoder from \cref{theorem:binary-achievability-time-sharing} to general input alphabets, and generalize the central techniques of \cref{theorem:binary-achievability-time-sharing} to derive a tight inner bound on $\Cregp$ for the $p$-ary PAMAC.
\end{itemize}
Next, we present two converse bounds on the permutation capacity region of the PAMAC that match our strongest achievability results:
\begin{itemize}
    \item In \cref{theorem:general-converse}, we adapt some arguments from the prior literature on permutation channels to derive a tight outer bound on $\Cregp$ for the $p$-ary PAMAC.
    \item In \cref{corollary:binary-converse}, we instantiate \cref{theorem:general-converse} on the $p = 2$ case, yielding a tight outer bound on $\Cregp$ for the binary PAMAC.
\end{itemize}
By juxtaposing the bounds listed above, we obtain explicit characterizations of the permutation capacity region and permutation sum-capacity:
\begin{itemize}
    \item \cref{theorem:general-permutation-capacity-region} is the principal contribution of our paper, and characterizes the permutation capacity region of the $p$-ary PAMAC by combining \cref{theorem:general-achievability,theorem:general-converse}.
    \item \cref{corollary:binary-permutation-capacity-region} specializes \cref{theorem:general-permutation-capacity-region} to the $p = 2$ case, characterizing the permutation capacity region of the binary PAMAC by combining \cref{theorem:binary-achievability-time-sharing,corollary:binary-converse}.
    \item \cref{corollary:general-permutation-sum-capacity} characterizes the permutation sum-capacity of the $p$-ary PAMAC as a direct consequence of \cref{theorem:general-permutation-capacity-region}.
\end{itemize}
We remark that the preliminary bound in \cref{theorem:binary-achievability-root-stability} is sufficient to characterize $\Csum$ for the binary PAMAC when combined with \cref{corollary:binary-converse}, and may thus be interpreted as an alternative achievability scheme for an extremal point on $\Cregp$ in light of the stronger result in \cref{theorem:binary-achievability-time-sharing}.

\subsection{Outline}

Finally, we provide a brief synopsis of the remaining sections of our paper. In \cref{section:main-results-and-discussion}, we present the formal mathematical statements of the results listed in \cref{subsection:main-contributions} and provide proof sketches that holistically elucidate the high-level intuition behind our main contributions. In \cref{section:binary-achievability-root-stability,section:general-achievability}, we prove \cref{theorem:binary-achievability-root-stability,theorem:general-achievability}, respectively. We defer the proof of \cref{theorem:binary-achievability-time-sharing} to \cref{appendix:binary-achievability-time-sharing}, as this result is a specialization of \cref{theorem:general-achievability} to the binary alphabet case; we include its proof nonetheless as a concise alternative for interested readers. Lastly, we prove \cref{theorem:general-converse} in \cref{section:converse-proof}, and provide proofs of auxiliary lemmas in \cref{appendix:auxiliary-results}.

\section{Main Results and Discussion} \label{section:main-results-and-discussion}

\subsection{Achievability Bound Using Root Stability} \label{subsection:binary-achievability-root-stability}

Our first main result is a preliminary inner bound on the permutation capacity region of the binary PAMAC (i.e., $p = 2$):

\begin{theorem}[Binary Achievability Using Root Stability] \label{theorem:binary-achievability-root-stability}
    The permutation capacity region of the binary PAMAC satisfies
    \begin{align}
        \Cregp \supseteq \bkt{0, \frac{1}{2}}^d \, .
    \end{align}
\end{theorem}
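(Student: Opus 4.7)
The plan is to encode each sender's message as an i.i.d.\ Bernoulli codeword whose parameter is a one-to-one function of the message, and to decode by recovering those parameters from the empirical output distribution via polynomial root-finding on the probability generating function of the adder output $W$. Since $\Cregp$ is closed and monotone in each coordinate, it suffices to exhibit a sequence of codes with vanishing error probability achieving the symmetric rate tuple $(R, \dots, R)$ for every $R < 1/2$; taking $R \uparrow 1/2$ and closing then yields $\bkt{0, 1/2}^d \subseteq \Cregp$.

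Concretely, fix $R < 1/2$, set $\abs{\M_i} = \lceil n^R \rceil$, and pick $d$ pairwise-disjoint closed subintervals $I_1, \dots, I_d$ of $(0, 1)$, each bounded away from $0$ and $1$ by a positive constant. I would assign the $\abs{\M_i}$ messages of sender $i$ to $\abs{\M_i}$ equi-spaced parameters $\brc{\theta^{(i)}_m}_{m=1}^{\abs{\M_i}} \subseteq I_i$, and have sender $i$'s encoder map $M_i = m$ to i.i.d.\ $\Bernoulli{\theta^{(i)}_m}$ samples. On the decoder's side, the random permutation preserves empirical distributions, so $\hat{\mathbf{p}}_Y = \hat{\mathbf{p}}_Z$; by coordinate-wise Hoeffding and a union bound, $\norm{\hat{\mathbf{p}}_Z - \mathbf{p}_Z}_2 \leq n^{-1/2 + \epsilon}$ with probability tending to $1$ for any $\epsilon > 0$. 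The decoder then inverts the DMC to produce $\hat{\mathbf{p}}_W \triangleq \hat{\mathbf{p}}_Z \mathbf{P}_{Z|W}^{-1}$, which retains the same order of accuracy since $\norm{\mathbf{P}_{Z|W}^{-1}}_2 < \infty$ by hypothesis, and forms the empirical polynomial $\hat{G}(z) = \sum_{k=0}^d \hat{p}_W(k) z^k$. The true generating function $G(z) = \prod_{i=1}^d (1 - \theta_i + \theta_i z)$ has $d$ roots $r_i = 1 - 1/\theta_i$, one per sender. The decoder computes the eigenvalues $\hat{r}_1, \dots, \hat{r}_d$ of the monic companion matrix $\hat{\mathbf{C}}$ associated to $\hat{G}$, assigns each eigenvalue to the sender whose root window contains it, and rounds to the nearest candidate root inside that window to recover $(M_1, \dots, M_d)$.

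The technical heart is a root-stability bound of the form $\max_i \iabs{\hat{r}_{\pi(i)} - r_i} \leq \kappa(V) \cdot O\iprn{\norm{\hat{\mathbf{C}} - \mathbf{C}}_2}$ for some permutation $\pi$, obtained by applying Bauer--Fike (or Elsner's theorem) to the companion-matrix pair $(\mathbf{C}, \hat{\mathbf{C}})$, where $V$ is the Vandermonde eigenvector matrix of $\mathbf{C}$ built from the true roots. Because the $d$ true roots lie in pairwise-disjoint windows bounded away from $0$, the condition number $\kappa(V)$ is bounded by a constant depending only on $d$ and on the chosen windows, uniformly in the random message tuple. Combining with the $\ell^2$ concentration bound yields $\max_i \iabs{\hat{r}_{\pi(i)} - r_i} = O(n^{-1/2 + \epsilon})$ with probability tending to $1$, which is $o(n^{-R})$ whenever $\epsilon < 1/2 - R$. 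Since candidates inside each window are spaced $\Omega(n^{-R})$ apart and the windows themselves are separated by a positive constant, the window-assignment and rounding steps recover $(M_1, \dots, M_d)$ correctly with probability tending to $1$. The main obstacle is securing the uniformity of the root-stability constants: the design must keep the true roots well-separated across senders so that $\kappa(V)$ is bounded, and it must keep the leading coefficient $\prod_i \theta_i$ of $G$ bounded away from $0$ so that the entries of the companion matrix, and their perturbations in terms of $\norm{\hat{\mathbf{p}}_W - \mathbf{p}_W}_2$, remain uniformly controlled---both of which are enforced by choosing each $I_i$ to avoid $0$.
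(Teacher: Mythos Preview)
Your proposal is correct and follows essentially the same approach as the paper: disjoint Bernoulli-parameter windows per sender, i.i.d.\ Bernoulli encoding, decoding by inverting the DMC and root-finding on the companion matrix of the empirical generating function, with root stability controlled via Bauer--Fike and the Vandermonde eigenbasis whose condition number is bounded thanks to the enforced root separation and the leading coefficient being bounded away from zero. The only cosmetic differences are your use of $\ell^2$ norms where the paper works in $\ell^\infty$/$\ell^1$, and your window-assignment step in place of the paper's sort-then-round; the paper additionally invokes Gershgorin to bound the root magnitudes appearing in the Vandermonde norm estimate, a detail you would need when making ``$\kappa(V)$ bounded by a constant'' explicit.
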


This theorem directly establishes the binary PAMAC's permutation sum-capacity as $\Csum = \frac{d}{2}$ when combined with the converse bound in \cref{corollary:binary-converse}. Thus, \cref{theorem:binary-achievability-root-stability} may be interpreted as providing an alternative coding scheme achieving the extremal rate $d$-tuple $R = \prn{\frac{1}{2}, \dots, \frac{1}{2}}$ in light of the tighter bound in \cref{theorem:binary-achievability-time-sharing}.

We defer the technical details of our proof to \cref{section:binary-achievability-root-stability}. Below, we describe our encoding and decoding schemes to elucidate the key insights in our proof. By definition of permutation capacity region, it suffices to show that $R = \prn{\frac{1}{2} - \alpha, \dots, \frac{1}{2} - \alpha}$ is achievable for any $\alpha > 0$. Thus, for any fixed $\alpha > 0$, consider the following message sets, encoders, and decoder.

\textbf{Message sets.} By definition of rate $d$-tuple, $\abs{\mathcal{M}_i} = n^{\frac{1}{2} - \alpha}$ for sender $i \in [d]$. Without loss of generality, assume $\mathcal{M}_i = \brc{\theta_{i,\ell}}_{\ell=1}^{\abs{\mathcal{M}_i}}$ where
\begin{align}
    \theta_{i,\ell} = \frac{2i - 1}{2d + 1} + \prn{\frac{\ell - 1}{\abs{\mathcal{M}_i} - 1}} \frac{1}{2d + 1} \, . \label{eq:root-stability-message-sets}
\end{align}
This is well-defined, i.e., the denominators $\abs{\mathcal{M}_i} - 1$ are non-zero, because our formal model stipulates that $\abs{\mathcal{M}_i} \geq 2$.

\textbf{Encoders.} Let $\brc{\theta_i}_{i=1}^d \in \mathcal{M}$ denote the messages to send. Given a message $\theta_i \in \mathcal{M}_i$, sender $i$'s randomized encoder outputs $n$ independent and identically distributed (i.i.d.) samples from a Bernoulli distribution with mean $\theta_i$. Formally,
\begin{align}
    \forall \ell \in [\abs{\mathcal{M}_i}], \enspace f_{i,n}(\theta_{i,\ell}) = \prn{X_i}_1^n \stackrel{\text{i.i.d.}}{\sim} \Bernoulli{\theta_{i,\ell}} \, .
\end{align}
Intuitively, we split the Bernoulli parameter space (the unit interval $[0, 1]$) into equal-length \emph{subintervals} and assign every other subinterval to a sender. This maintains padding between consecutive senders' subintervals and around the boundary points $0$ and $1$. We partition each sender's subinterval into equal-length \emph{slices} and assign each message to a slice boundary. Without loss of generality, we elide the difference between messages and Bernoulli parameters in our definition of message sets, for notational simplicity. \cref{figure:encoder} illustrates this setup.

\begin{figure}[t] \label{figure:encoder}
    \begin{tikzpicture}
        \def\l{8.25}   % length of number line
        \def\R{0.2}    % radius of major ticks
        \def\r{0.1}    % radius of minor ticks
        
        % draw number line
        \draw[thick] (0,1) -- (\l,1);
        
        % draw major ticks
        \draw[thin] (0, 1) -- (0, 1 + \R) node [above] {$0$};
        \draw[thin] (1/5 * \l, 1 - \r) node [below, scale=0.8] {$\theta_{1,1}$} -- (1/5 * \l, 1 + \R) node [above] {$\frac{1}{5}$};
        \draw[thin] (2/5 * \l, 1 - \r) node [below, scale=0.8] {$\theta_{1,3}$} -- (2/5 * \l, 1 + \R) node [above] {$\frac{2}{5}$};
        \draw[thin] (3/5 * \l, 1 - \r) node [below, scale=0.8] {$\theta_{2,1}$} -- (3/5 * \l, 1 + \R) node [above] {$\frac{3}{5}$};
        \draw[thin] (4/5 * \l, 1 - \r) node [below, scale=0.8] {$\theta_{2,4}$} -- (4/5 * \l, 1 + \R) node [above] {$\frac{4}{5}$};
        \draw[thin] (\l, 1) -- (\l, 1 + \R) node [above] {$1$};
        
        % draw minor ticks
        \draw[thin] (3/10 * \l, 1 - \r) node [below, scale=0.8] {$\theta_{1,2}$} -- (3/10 * \l, 1);
        \draw[thin] (10/15 * \l, 1 - \r) node [below, scale=0.8] {$\theta_{2,2}$} -- (10/15 * \l, 1);
        \draw[thin] (11/15 * \l, 1 - \r) node [below, scale=0.8] {$\theta_{2,3}$} -- (11/15 * \l, 1);
        
        % draw underbraces
        \draw[thick, decorate, decoration={calligraphic brace, mirror}] (1/5 * \l, 0.6 - \r) -- (2/5 * \l, 0.6 - \r)
        node [pos=0.5, anchor=north, yshift=-0.1cm, scale=0.7] {sender 1};
        \draw[thick, decorate, decoration={calligraphic brace, mirror}] (3/5 * \l, 0.6 - \r) -- (4/5 * \l, 0.6 - \r)
        node [pos=0.5, anchor=north, yshift=-0.1cm, scale=0.7] {sender 2};
    \end{tikzpicture}
    \caption{Visualization of our encoding scheme in the case $d = 2, \abs{\mathcal{M}_1} = 3, \abs{\mathcal{M}_2} = 4$. The $[0, 1]$ number line represents the Bernoulli parameter space. Each tick annotated with $\theta_{i,\ell}$ under the number line is a message in $\mathcal{M}_i$. Note the padding around each sender's subinterval.}
\end{figure}

\textbf{Decoder.} Given the output codeword $y_1^n$, the decoder $g_n: \bktz{d}^n \rightarrow \M$ executes the following:

\begin{enumerate}
    \item Compute the empirical distribution $\hat{\mathbf{p}}_Z \in \mathcal{S}_d$, given by
    \begin{align}
        \forall t \in \bktz{d}, \enspace \bkt{\hat{\mathbf{p}}_Z}_t = \frac{1}{n} \sum_{j=1}^n \Iv{y_j = t} \, .
    \end{align}
    \item Compute the estimated distribution
    \begin{align}
        \tilde{\mathbf{p}}_W = \hat{\mathbf{p}}_Z \mathbf{P}_{Z|W}^{-1} \in \mathbb{R}^{d+1} \, ,
    \end{align}
    represented as a zero-indexed \emph{row} vector. (Note that $\tilde{\mathbf{p}}_W$ is not a PMF, in general.)
    \item Form the estimated \emph{probability generating function} $\tilde{G}_W: \mathbb{C} \rightarrow \mathbb{C}$, given by
    \begin{align}
        \tilde{G}_W(\xi) = \sum_{t=0}^d \bkt{\tilde{\mathbf{p}}_W}_t \xi^t \, .
    \end{align}
    \item Compute the roots $\ibrc{\tilde{\xi}_i}_{i=1}^d \subset \mathbb{C}$ of $\tilde{G}_W$.\footnote{Technically, there are less than $d$ roots if $\tilde{G}_W$ has degree less than $d$, i.e., $\bkt{\tilde{\mathbf{p}}_W}_d = 0$. To be fully rigorous, in this exception the decoder may use dummy values outside $[0, 1]$ for the remaining estimated Bernoulli parameters. The specific way this exception is handled is immaterial, because the proof of \cref{theorem:binary-achievability-root-stability} restricts to the case where $\bkt{\tilde{\mathbf{p}}_W}_d \geq \frac{1}{2} \bkt{\mathbf{p}_W}_d > 0$.}
    \item Transform the roots into estimated Bernoulli parameters $\ibrc{\tilde{\theta}_i}_{i=1}^d \subset \mathbb{R}$, using the relations
    \begin{align}
        \forall i \in [d], \enspace \tilde{\theta}_i = \frac{1}{1 - \Re \brc{\tilde{\xi}_i}} \, . \label{eq:estimated-root-to-parameter-transform}
    \end{align}
    \item Sort the estimated Bernoulli parameters in ascending order $\tilde{\theta}_1 \leq \cdots \leq \tilde{\theta}_d$.
    \item Return the predicted messages $\ibrc{\hat{\theta}_i}_{i=1}^d \in \M$ given by $\hat{\theta}_i = \arg \min_{\theta \in \mathcal{M}_i} \iabs{\tilde{\theta}_i - \theta}$.
\end{enumerate}

In our model, each sender independently samples a message from its message set, with no collusion between senders possible when sending messages. However, the senders collude to establish the protocol described above for sharing the multiple-access channel, wherein each sender uses a disjoint subinterval of the Bernoulli parameter space. In this regard, our model is similar to standard regimes described in \cite[Section 4.1]{ElGamalKim2011}, where different senders' codewords $\prn{X_i}_1^n$ and $\prn{X_{i'}}_1^n$ are independent but each sender uses a different alphabet.

Next, we provide a high-level overview of the key techniques used in \cref{section:binary-achievability-root-stability}. Because the random permutation destroys the ordering of the output letters, we encode a message $\theta_i \in \mathcal{M}_i$ as samples from a Bernoulli distribution parameterized by $\theta_i$, since recovering the parameter from the samples is agnostic to the ordering of the samples. (In this sense, our strategy is comparable to the notion of \emph{multiset codes}, cf. \cite{KovacevicVukobratovic2015,Makur2020b}.) The decoder correctly rounds off the noise in a predicted parameter $\tilde{\theta}_i$ if $\theta_i$ is the closest of sender $i$'s messages to $\tilde{\theta}_i$. Thus, $\Perror{n}$ is upper bounded by the probability that for some sender $i$, the error in $\tilde{\theta}_i$ is no less than half the gap length between adjacent messages in sender $i$'s subinterval. In our encoding scheme, sender $i$'s messages are evenly spaced over a subinterval of length $\frac{1}{2d + 1}$. Consequently, sender $i$'s subinterval is split into $\abs{\mathcal{M}_i} - 1$ slices, and we want to upper bound
\begin{align}
    \P{\exists i \in [d], \, \abs{\tilde{\theta}_i - \theta_i} \geq \frac{1}{2(2d + 1)(\abs{\mathcal{M}_i} - 1)}} \, .
\end{align}

Due to the random permutation, the empirical probability vector $\hat{\mathbf{p}}_Y = \hat{\mathbf{p}}_Z$ is a sufficient statistic of $Z_1^n$. As a warmup, assume the empirical and true probability vectors of $Z$ match, i.e., $\hat{\mathbf{p}}_Z = \mathbf{p}_Z$. We can simulate running the DMC backwards by inverting its stochastic matrix, thus obtaining the true probability vector of $W$:
\begin{align}
    \mathbf{p}_W = \mathbf{p}_Z \mathbf{P}_{Z|W}^{-1} = \mathbf{\hat{p}}_Z \mathbf{P}_{Z|W}^{-1} \, .
\end{align}
Since $W$ is the sum of $d$ independent random variables $\brc{X_i}_{i=1}^d$, its probability generating function $G_W$ is the product of the probability generating functions $\brc{G_{X_i}}_{i=1}^d$ of the summands. Since $\brc{X_i}_{i=1}^d$ are Bernoulli random variables, $\brc{G_{X_i}}_{i=1}^d$ are linear functions. By the fundamental theorem of algebra, computing the factorization $G_W(\xi) = \prod_{i=1}^d G_{X_i}(\xi)$ reduces to finding the roots of $G_W$. Each root of $G_W$ corresponds to one of the $G_{X_i}$, so finding the roots of $G_W$ is sufficient to exactly recover the Bernoulli parameters. This intuition is formalized in \cref{prop:binary-adder-mac-output-distribution} below.

\begin{proposition}[Binary Adder MAC Output Distribution {\cite[Proposition 4.9]{Makur2019}}] \label{prop:binary-adder-mac-output-distribution}
    Let $W \in \bktz{d}$ be a random variable with an entry-wise strictly positive probability vector $\mathbf{p}_W > \mathbf{0}$. Then $W$ is the sum of $d$ independent Bernoulli random variables, i.e.,
    \begin{align}
        W = \sum_{i=1}^d X_i \enspace \text{with} \enspace X_i \sim \Bernoulli{p_i} \enspace \text{and} \enspace p_i \in (0, 1) \, ,
    \end{align}
    iff its probability generating function $G_W: \mathbb{C} \rightarrow \mathbb{C}$ given by
    \begin{align}
        G_W(\xi) = \E{\xi^W} = \sum_{t=0}^d \bkt{\mathbf{p}_W}_t \xi^t
    \end{align}
    has all real roots. Furthermore, the roots $\brc{\xi_i}_{i=1}^d$ of $G_W$, counted with multiplicity, determine the Bernoulli parameters $\brc{p_i}_{i=1}^d$ up to permutations of the indices, via the relations
    \begin{align}
        \forall i \in [d], \enspace \xi_i = \frac{p_i - 1}{p_i} \, .
    \end{align}
\end{proposition}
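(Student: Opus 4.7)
The plan is to prove the equivalence via direct manipulation of the probability generating function (PGF), then extract the parameter recovery formula. Since $[\mathbf{p}_W]_d > 0$ by the strict positivity of $\mathbf{p}_W$, the polynomial $G_W$ has degree exactly $d$ and thus exactly $d$ roots in $\mathbb{C}$ counted with multiplicity.

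For the forward direction, I would use independence and the definition of the Bernoulli PGF to factor
\begin{align}
    G_W(\xi) = \prod_{i=1}^d G_{X_i}(\xi) = \prod_{i=1}^d \iprn{1 - p_i + p_i \xi} \, .
\end{align}
Each linear factor has a real root at $\xi_i = (p_i - 1)/p_i$, which gives all $d$ roots of $G_W$, proving the ``only if'' direction together with the parameter recovery formula.

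For the backward direction, suppose $G_W$ has all real roots $\brc{\xi_i}_{i=1}^d$. The hypothesis $\mathbf{p}_W > \mathbf{0}$ forces every coefficient of $G_W$ to be strictly positive, so $G_W(\xi) > 0$ for all $\xi \geq 0$; hence every root satisfies $\xi_i < 0$. Define $p_i \triangleq 1/(1 - \xi_i)$, which lies in $(0, 1)$ since $1 - \xi_i > 1$. Using the identity $\xi - \xi_i = p_i^{-1}(1 - p_i + p_i \xi)$, I would rewrite
\begin{align}
    G_W(\xi) = \bkt{\mathbf{p}_W}_d \prod_{i=1}^d \prn{\xi - \xi_i} = \frac{\bkt{\mathbf{p}_W}_d}{\prod_{i=1}^d p_i} \prod_{i=1}^d \iprn{1 - p_i + p_i \xi} \, .
\end{align}
Evaluating at $\xi = 1$ and using $G_W(1) = 1$ pins down the leading coefficient: $\bkt{\mathbf{p}_W}_d = \prod_i p_i$. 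Therefore $G_W$ coincides with the PGF of a sum of $d$ independent $\Bernoulli{p_i}$ variables, and by the uniqueness of a distribution determined by its PGF, $W$ has the asserted distribution.

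The main subtlety, and the place where strict positivity of $\mathbf{p}_W$ is essential, is ensuring the recovered parameters satisfy $p_i \in (0, 1)$ strictly; without $\mathbf{p}_W > \mathbf{0}$, a root at $\xi_i = 0$ could arise and yield a degenerate $p_i = 1$. All other steps are routine algebraic manipulations of the PGF and an appeal to the fundamental theorem of algebra to ensure the product factorization is complete over $\mathbb{C}$.
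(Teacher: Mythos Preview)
Your proposal is correct and follows essentially the same approach as the paper: factor the PGF into linear factors in the forward direction, and in the backward direction use strict positivity of the coefficients to force all real roots to be negative, invert the root-to-parameter relation, and normalize via $G_W(1)=1$. The only cosmetic difference is that the paper cites Descartes' rule of signs to conclude the roots are negative, whereas you use the more direct observation that $G_W(\xi)>0$ for $\xi\geq 0$; both arguments are equivalent here.
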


\begin{proof}
    Suppose $W = \sum_{i=1}^d X_i$, where $X_i \sim \Bernoulli{p_i}$ with $p_i \in (0, 1)$ are independent. Then,
    \begin{align}
        G_W(\xi) = \prod_{i=1}^d G_{X_i}(\xi) = \prn{\prod_{i=1}^d p_i} \prod_{i=1}^d \prn{\xi + \frac{1 - p_i}{p_i}} \, .
    \end{align}
    Thus, $G_W$ has all real roots: $\xi_i = \frac{p_i - 1}{p_i}$ for $i \in [d]$.
    
    Conversely, suppose $G_W$ has all real roots $\brc{\xi_i}_{i=1}^d \subset \mathbb{R}$. By the fundamental theorem of algebra,
    \begin{align}
        G_W(\xi) = \sum_{t=0}^d \bkt{\mathbf{p}_W}_t \xi^t = \beta \prod_{i=1}^d (\xi - \xi_i) \, .
    \end{align}
    By Descartes' rule of signs, $\brc{\xi_i}_{i=1}^d < 0$. Furthermore, none of the roots are zero because $\bkt{\mathbf{p}_W}_0 > 0$. Therefore, we may define $\brc{p_i}_{i=1}^d \subset (0, 1)$ via the relations $\xi_i = \frac{p_i - 1}{p_i}$ for $i \in [d]$. This yields
    \begin{align}
        G_W(\xi) &= \beta \prn{\prod_{i=1}^d p_i}^{-1} \prod_{i=1}^d \prn{1 - p_i + p_i \xi} \\
        &\stackclap{(a)}{=} \prod_{i=1}^d \prn{1 - p_i + p_i \xi} \\
        &\stackclap{(b)}{=} \prod_{i=1}^d G_{X_i}(\xi) \, ,
    \end{align}
    where (a) follows because $G_W(1) = \sum_{t=0}^d \bkt{\mathbf{p}_W}_t = 1$ and therefore $\beta = \prod_{i=1}^d p_i$, and we define independent $X_i \sim \Bernoulli{p_i}$ in (b). Hence, $W = \sum_{i=1}^d X_i$ as desired.
\end{proof}

\cref{prop:binary-adder-mac-output-distribution} generalizes the result in \cite[Lemma 1]{AjjanagaddePolyanskiy2015}, which proves the $d = 2$ case using a somewhat different approach. We also note that \cref{prop:binary-adder-mac-output-distribution} can be easily extended to include the edge cases where some $p_i \in \brc{0, 1}$.

Now consider the general case where $\hat{\mathbf{p}}_Z \neq \mathbf{p}_Z$ due to sampling noise. By Hoeffding's inequality, we can derive a high-probability upper bound on the infinity-norm error in $\hat{\mathbf{p}}_Z$. This incurs an error bound on the estimated probability vector $\tilde{\mathbf{p}}_W = \hat{\mathbf{p}}_Z \mathbf{P}_{Z|W}^{-1}$ and thus an error bound on the coefficients of $\tilde{G}_W$. Consequently, our problem reduces to bounding the stability of the roots of $\tilde{G}_W$, or equivalently, bounding the error in the eigenvalues of the Frobenius companion matrix of $\tilde{G}_W$ \cite[Definition 3.3.13]{HornJohnson2013}.

At a high level, the \emph{Bauer-Fike theorem} \cite[Theorem 3.3, Chapter IV]{StewartSun1990} from matrix perturbation theory states that the spectral stability of a diagonalizable matrix depends on the conditioning of the matrix's eigenbasis. By inspection, companion matrices are diagonalizable by Vandermonde matrices \cite[Section 0.9.11]{HornJohnson2013}, and the stability of Vandermonde systems has been studied in the prior literature \cite[Theorem 2.1]{Gautschi1990}. Our coding scheme maintains padding between consecutive senders' message sets, thereby enforcing sufficient separation between the roots of $G_W$ to ensure that the companion matrix's Vandermonde eigenbasis is well-conditioned. Hence, we can invoke the Bauer-Fike theorem to bound the error in the companion eigenvalues as desired, concluding our proof sketch.

We finish with several pertinent remarks. Firstly, although local Lipschitz continuity of polynomial roots is a well-studied phenomenon \cite{Brink2010}, extracting explicit Lipschitz constants in closed-form from these works is difficult. This motivates the spectral stability analysis in our proof of achievability. Secondly, while our proof focuses on establishing that the extremal rate $d$-tuple $R = \prn{\frac{1}{2} - \alpha, \dots, \frac{1}{2} - \alpha}$ is achievable for arbitrary $\alpha > 0$, it is straightforward to extend our analysis to any rate $d$-tuple in $\bigl[ 0, \frac{1}{2} \bigr)^d$, which yields \cref{theorem:binary-achievability-root-stability}.

Finally, our randomized encoders and decoder have polynomial time complexity with respect to $n$. The encoders each run in $\bigO{n}$ time since they take $n$ samples from a $\Bernoulli{\theta}$ distribution; each sample can be done in $\bigO{1}$ time by sampling $U \sim \Uniform{0}{1}$ to a fixed precision and computing $X = \Iv{U \geq 1 - \theta}$. The first decoding step (computing $\hat{\mathbf{p}}_Z$) costs $\bigO{n + d} = \bigO{n}$ time, steps 2 through 6 cost $\bigO{\poly(d)} = \bigO{1}$, and step 7 costs $O \iprn{\sum_{i=1}^d n^{R_i}}$. Hence, our randomized coding scheme does not suffer from intractable decoding complexity.

\subsection{Achievability Bounds Using Time Sharing} \label{subsection:general-achievability}

Our second main result is a tight inner bound on the permutation capacity region of the binary PAMAC (i.e., $p = 2$), which matches our converse result presented later in \cref{subsection:converse-bounds}:

\begin{theorem}[Binary Achievability Using Time Sharing] \label{theorem:binary-achievability-time-sharing}
    The permutation capacity region of the binary PAMAC satisfies
    \begin{align}
        \Cregp \supseteq \brc{R \in \mathbb{R}_+^d: \sum_{i=1}^d R_i \leq \frac{d}{2}} \, .
    \end{align}
\end{theorem}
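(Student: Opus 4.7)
The plan is to extend Theorem~1's i.i.d.\ Bernoulli encoding using a mixed-radix structure on the parameter space, enabling the rate to be distributed across senders in any proportion summing to at most $d/2$. By the closure in the definition of $\Cregp$, it suffices to achieve rate tuples strictly inside the simplex, i.e.\ with $\sum_i R_i < d/2$, since the boundary is covered by taking limits of sequences of interior rate tuples.

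Fix such a target rate tuple $(R_1, \ldots, R_d)$. Each sender $i$ uses an i.i.d.\ Bernoulli encoder $X_{i,j} \sim \Bernoulli{\theta_i(m_i)}$ with a parameter set $\{\theta_i(m_i) : m_i \in \M_i\}$ of size $n^{R_i}$. Rather than Theorem~1's disjoint-subinterval assignment (which caps each individual rate at $1/2$ and yields only the cube $[0, 1/2]^d$), I would arrange the parameters in a mixed-radix hierarchy: interpret the joint tuple $(m_1, \ldots, m_d) \in \M$ as a single integer in a positional numerical system with bases $(|\M_1|, \ldots, |\M_d|)$, and use this integer to index a distinguished multiset of $d$ points drawn from a common fine grid on $[0, 1]$ of spacing $\gtrsim n^{-1/2}$. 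Each sender's own digit then selects its Bernoulli parameter from this grid. The decoder mirrors that of Theorem~1: form $\hat{\mathbf{p}}_Z$ from the received codeword, invert the DMC to obtain $\tilde{\mathbf{p}}_W$, apply the probability generating function / root-finding correspondence of \cref{prop:binary-adder-mac-output-distribution} to recover the multiset of Bernoulli parameters, and finally invert the mixed-radix encoding to decode each $m_i$.

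The main obstacle is designing the mixed-radix parameter assignment so that simultaneously (i)~distinct joint messages produce $\mathbf{p}_W$'s separated by $\gtrsim n^{-1/2}$ (tolerating the estimation noise from Hoeffding's inequality), (ii)~the recovered multiset unambiguously decomposes into sender-specific parameters, and (iii)~the per-sender rates match the target tuple. In effect, this construction implements a notion of ``time sharing'' in the Bernoulli parameter space, since time-axis sharing as in classical MACs is unavailable (the permutation block destroys blockwise time structure). The error analysis then follows Theorem~1's template: Hoeffding's inequality controls $\|\hat{\mathbf{p}}_Z - \mathbf{p}_Z\|_\infty$, the root-stability / Bauer--Fike bounds from Theorem~1 control the multiset reconstruction error, and the mixed-radix slicing guarantees correct rounding of each $\tilde\theta_i$ to $\theta_i(m_i)$ with high probability. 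Because Theorem~2 is a binary specialization of Theorem~3, I would ultimately expect the detailed argument to parallel the general-case proof of \cref{theorem:general-achievability} with $p = 2$ substituted throughout.
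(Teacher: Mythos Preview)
Your proposal has a genuine gap: it stays within the root-finding framework of \cref{theorem:binary-achievability-root-stability}, and that framework cannot produce any rate tuple with an individual $R_i > \tfrac{1}{2}$. Under i.i.d.\ Bernoulli encoding, sender $i$ controls exactly one root $\theta_i \in [0,1]$; the decoder recovers each root to resolution $\Theta(n^{-1/2})$, so at most $O(n^{1/2})$ distinguishable values of $\theta_i$ exist regardless of how cleverly you index them. Hence $R_i \le \tfrac{1}{2}$ is a hard cap per sender, and your scheme cannot reach, e.g., the corner $(R_1,\dots,R_d)=(\tfrac{d}{2},0,\dots,0)$ of the target region. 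Your mixed-radix map from the joint message to a ``distinguished multiset'' of $d$ grid points does not escape this, because the senders encode \emph{independently}: sender $i$'s parameter may depend only on $m_i$, not on the joint index, so you cannot actually assign the multiset jointly. (If two senders' parameter ranges overlap, distinct joint messages can also yield the same multiset, breaking injectivity.)

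The paper's proof takes a route you explicitly ruled out: it \emph{does} use time-axis sharing. The key observation is that although the permutation destroys position, the senders can cooperatively tag blocks so the tags survive permutation. The codeword indices are partitioned into $d$ equal segments; in segment $c$, the $d-1$ passive senders transmit deterministic bits (ones for $i\le c-1$ in a suitable pattern, zeros otherwise) so that the adder output satisfies $W_j\in\{c-1,c\}$ there. Because $\mathbf{P}_{Z|W}$ is full rank, the coefficients of $\mathbf{p}_Z$ in the basis of channel rows separate the segments, and a least-squares inversion recovers one scalar $\phi_c$ per segment to accuracy $\Theta(n^{-1/2})$---no root-finding is used. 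The mixed-radix idea is then applied \emph{inside each segment}: each segment is sub-partitioned with proportions $\rho_b$ from \cref{eq:subsegment-proportions}, sender $b$ is active only in subsegment $b$, and the bijection $h$ of \cref{eq:weighted-sum-of-subsegment-thetas} packs all $d$ senders' Bernoulli parameters for that segment into the single recoverable scalar $\phi_c$. This is what lets one sender absorb the full sum-rate $\tfrac{d}{2}$ when the others have rate zero. Your intuition that mixed-radix numeration is the right sharing device is correct; the missing idea is the domain-shift segmentation that creates $d$ independently recoverable scalars in the first place.
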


Our third main result is an extension of this inner bound to the general $p$-ary PAMAC:

\begin{theorem}[General Achievability] \label{theorem:general-achievability}
    The permutation capacity region of the $p$-ary PAMAC satisfies
    \begin{align}
        \Cregp \supseteq \brc{R \in \mathbb{R}_+^d: \sum_{i=1}^d R_i \leq \frac{d(p - 1)}{2}} \, .
    \end{align}
\end{theorem}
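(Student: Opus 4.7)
The plan is to extend the mixed-radix time-sharing construction of \cref{theorem:binary-achievability-time-sharing} from binary to $p$-ary alphabets. By the definition of $\Cregp$, it suffices to show achievability of every rate tuple $(R_1, \dots, R_d) \in \R_+^d$ with $\sum_i R_i \leq d(p-1)/2 - \epsilon$ for arbitrary $\epsilon > 0$. The guiding intuition is that $\mathbf{p}_W$ lies in the simplex of distributions on $\Y = \bktz{d(p-1)}$, which has dimension $d(p-1)$; each coordinate is estimable at the $\bigTheta{1/\sqrt{n}}$ CLT rate via $\hat{\mathbf{p}}_Z$, supports $\bigTheta{\sqrt{n}}$ distinguishable values, and contributes rate $1/2$ to an aggregate budget of $d(p-1)/2$ that can be distributed among the senders in arbitrary proportions.

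For the encoding, I would partition $[n]$ into $d$ time-chunks of length $n/d$ and assign chunk $k \in \bktz{d-1}$ the fixed offset $v_k = k(p-1) \in \Y$. Within each chunk, an \emph{active} sender transmits i.i.d.\ samples from a categorical distribution $\mathbf{q}_k \in \mathcal{S}_{p-1}$ on $\X$ whose free entries encode a portion of that sender's message, while each non-active sender transmits a deterministic symbol in $\brc{0, p-1}$ chosen so that the non-active senders' contributions sum to $v_k$. This is feasible because chunk $k$ requires exactly $k$ of the $d - 1$ non-active senders to contribute $p-1$, and the maximum required offset $(d-1)(p-1)$ in chunk $d - 1$ matches the maximum attainable total. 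A mixed-radix allocation distributes $k_i \approx 2 R_i$ free parameters to sender $i$ across its active chunks (with the remaining categorical entries held at predetermined values known to the decoder), and each free parameter ranges over a grid of $\bigTheta{\sqrt{n}}$ evenly spaced values; this yields $\abs{\M_i} \approx n^{R_i}$ messages per sender and $K \triangleq \sum_i k_i \leq d(p-1)$ total free parameters fitting within the $d$ chunks of capacity $p-1$.

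The decoder first forms $\tilde{\mathbf{p}}_W = \hat{\mathbf{p}}_Z \mathbf{P}_{Z|W}^{-1}$ as in \cref{theorem:binary-achievability-root-stability,theorem:binary-achievability-time-sharing}. Because chunk $k$'s contribution to $\mathbf{p}_W$ is supported on $\ibrc{v_k, v_k + 1, \dots, v_k + (p-1)}$ and consecutive chunks overlap only at the single endpoint $v_{k+1} = v_k + (p-1)$, the chunk parameters $\brc{q_{k,j}}_{k,j}$ can be recovered from $\tilde{\mathbf{p}}_W$ by a triangular recursion: for $j \in [1, p-2]$ the interior coordinates give $q_{k,j} = d \cdot \bkt{\tilde{\mathbf{p}}_W}_{v_k + j}$ directly, the top entry $q_{k, p-1}$ is pinned by the normalization $\sum_j q_{k,j} = 1$, and the next chunk's bottom entry $q_{k+1, 0}$ is extracted from the boundary relation $\bkt{\tilde{\mathbf{p}}_W}_{v_{k+1}} = (q_{k, p-1} + q_{k+1, 0})/d$. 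Hoeffding's inequality gives $\norm{\hat{\mathbf{p}}_Z - \mathbf{p}_Z}_\infty = \bigO{\sqrt{\log n / n}}$ with high probability, and the DMC inversion and triangular recovery amplify this perturbation only by factors depending on $p$ and $d$, so rounding each recovered parameter to its nearest grid point succeeds with probability tending to $1$ and $\Perror{n} \to 0$.

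The main obstacle is organizing the chunk-to-sender allocation and the accompanying reference codewords so that arbitrary rate tuples in the simplex are realized, including extremal points such as $(d(p-1)/2, 0, \dots, 0)$ that far exceed the single-sender cap $(p-1)/2$ attainable by i.i.d.\ categorical encoding alone. When $\sum_i \lceil k_i/(p-1) \rceil$ exceeds $d$, certain chunks must be split among multiple senders, requiring careful design of the fixed categorical entries in the shared chunks so that the (now non-purely-triangular but still full-rank) linear system for parameter recovery remains well-conditioned. Once this mixed-radix bookkeeping is in place, the error analysis parallels \cref{theorem:binary-achievability-time-sharing} with $p - 1$ replacing the binary alphabet degree throughout.
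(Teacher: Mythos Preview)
Your proposal captures the correct high-level architecture---domain shifting via passive senders' deterministic $\{0,p-1\}$ contributions to create offsets $v_k = k(p-1)$, categorical sampling by the active sender, DMC inversion followed by Hoeffding concentration---and this matches the paper. The gap is in the time-sharing mechanism, which you correctly identify as ``the main obstacle'' but do not resolve.

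Your scheme allocates $k_i \approx 2R_i$ discrete ``free parameters'' to sender $i$, each on a $\Theta(\sqrt{n})$ grid; this directly achieves only half-integer rate tuples, and you defer the general case to unspecified ``chunk splitting'' with a ``non-purely-triangular but still full-rank'' recovery system. The paper sidesteps this entirely via a two-level hierarchy: each of the $d$ segments is further partitioned into $d$ \emph{subsegments}, and \emph{every} sender is active in \emph{every} segment (in its own subsegment). The subsegment proportions $\rho_b = (m_b - 1)/\prod_{i \le b} m_i$ are chosen as mixed-radix place values with radices $m_i = n^{(R_i + \alpha/2d)/d(p-1)}$, so that the active mixture at each (segment, symbol) slot becomes the single scalar $\phi_{c,k} = \sum_{b=1}^d \rho_b \,\theta_{b,c,k}$, which bijectively encodes all $d$ senders' parameters at that slot. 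The decoder recovers the $dp$ scalars $\{\phi_{c,k}\}$ via a least-squares system (the paper packages the overlap structure and the sum-to-one constraints $\sum_k \phi_{c,k} = 1$ into block matrices $\mathbf{C}_1, \mathbf{C}_2$), rounds each to the nearest point in a grid of spacing $1/\prod_i m_i$, and inverts the mixed-radix bijection $h$. Because the subsegment \emph{lengths} vary continuously with the target rates, arbitrary rate tuples are handled without any chunk-splitting patch.

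Your triangular recursion is a reasonable alternative to the paper's least-squares formulation for recovering segment-level parameters; what is missing is the realization that the mixed radix should live in the subsegment durations (so that one recovered scalar per slot simultaneously carries all senders' information) rather than in an integer allocation of coordinates to individual senders.
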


We defer the technical details of the general achievability proof to \cref{section:general-achievability}, and provide the binary achievability proof as a more concise and intuitive alternative in \cref{appendix:binary-achievability-time-sharing}. We organize our proof sketch into three steps. Call a sender \emph{active} when it is sending randomly-generated letters that encode its message, and \emph{passive} when it is sending hard-coded letters that do not encode its message.

\emph{Step 1: Achieving the desired permutation sum-capacity for binary alphabets.} In this step, we fix $p = 2$ and describe a simple coding scheme which achieves $\Csum = \frac{d}{2}$ and hence matches the converse bound's permutation sum-capacity. We consider a de facto ``single-access'' setting where only sender $1$ actively sends messages at a positive rate; all other senders passively transmit a deterministic code at rate zero on each iteration.\footnote{Essentially, each sender $2$ through $d$ has a singleton message set and repeatedly sends the sole message in its set. This interpretation violates the stipulation in our formal model that $\abs{\mathcal{M}_i} \geq 2$, but this discrepancy is immaterial in the context of our preliminary analysis and will be remedied in Step 2 and our formal proofs.} In this scenario, the receiver merely has to recover sender $1$'s message from the output codeword $Y_1^n$. The crux of our proof is to cleverly construct deterministic codes for the passive senders that maximally aid the receiver in this task, allowing sender $1$ to achieve rate $\frac{d}{2}$.

We split the indices $[n] = \brc{1, \dots, n}$ of the codeword letters into $d$ equal-length \emph{segments} indexed by $c \in [d]$. Each segment is a contiguous integer interval of $\frac{n}{d}$ indices. Without loss of generality, let each message in sender $1$'s message set be a $d$-tuple of numbers, where each number is chosen from a linearly spaced grid in $[0, 1]$. Given a message $(\theta_1, \dots, \theta_d) \in \mathcal{M}_1$, sender $1$ encodes the $c$th \emph{component} $\theta_c$ within segment $c$ using i.i.d. samples from a Bernoulli distribution parameterized by $\theta_c$. (Essentially, sender $1$ adopts the randomized code described in \cref{subsection:binary-achievability-root-stability}, but treats each segment as an independent subset of letters to apply the randomized code within.)

To motivate the crucial insight underpinning our argument, we recap the intuition behind how the rates $R_i = \frac{1}{2}$ were achieved in \cref{theorem:binary-achievability-root-stability}. Under the root stability approach, successful recovery of the messages $\brc{\theta_i}_{i=1}^d$ was contingent on accurately estimating the true output distribution $\mathbf{p}_Y = \mathbf{p}_Z$. By Hoeffding's inequality, given $n$ samples, the empirical distribution $\hat{\mathbf{p}}_Z$ approximates the true distribution within error
\begin{align}
    \norm{\mathbf{p}_Z - \hat{\mathbf{p}}_Z}_\infty \leq \sqrt{\frac{\log_e n}{n}} = \littleO{n^{-\frac{1}{2} + \alpha}} \, ,
\end{align}
with high probability. By Lipschitz continuity and root stability, the same asymptotic error bound holds on the estimates $\ibrc{\tilde{\theta}_i}_{i=1}^d$. Thus, the decoder correctly rounds $\tilde{\theta}_i$ to the true message $\theta_i$ if the gaps between adjacent messages in $\mathcal{M}_i$ are at least $\Omega \iprn{n^{-\frac{1}{2} + \alpha}}$ in length, or equivalently, if $\mathcal{M}_i$ has at most $O \iprn{n^{\frac{1}{2} - \alpha}}$ messages. In short, the rate $R_i = \frac{1}{2}$ was a consequence of the Hoeffding bound on an empirical distribution vector computed from $n$ samples.

Next, we apply this intuition to our present setting. Since each of the $d$ segments contains $\frac{n}{d}$ letters, the randomized Bernoulli code allows each segment, in isolation, to encode a value from a set of size
\begin{align}
    \prn{\frac{n}{d}}^{\frac{1}{2} - \alpha} \approx \frac{n^{\frac{1}{2}}}{\sqrt{d}} = \bigO{n^{\frac{1}{2}}} \, .
\end{align}
The crucial step in our proof is to define the following deterministic code, which ensures the receiver can recover \emph{each} segment's encoded value (matched with the segment's index) from $Y_1^n$: Each passive sender $i \in \bktz{2, d}$ transmits all ones in segments $c \geq i$ and all zeros otherwise. Equivalently, in any segment $c \in [d]$, senders $i \in \bktz{2, c}$ transmit ones and senders $i \in \bktz{c + 1, d}$ transmit zeros. It follows that the output codeword of the adder MAC is equal to sender $1$'s codeword with a \emph{domain shift} of $c - 1$ in segment $c$, i.e.,
\begin{align}
    W_j = X_{1,j} + (c - 1)
\end{align}
for all indices $j$ in segment $c$. (Note that the alphabets of $W$ from different segments intersect at no more than one value, namely value $c$ for a pair of adjacent segments $c$ and $c + 1$ and no value otherwise.)

Combined with the full-rank assumption on the DMC, this imposes a unique structure on the distributions of $Z$ from different segments, allowing segment-specific information to be recovered even after the random permutation block. In essence, our coding scheme enables sender $1$ to outperform the rate achieved in \cref{theorem:binary-achievability-root-stability} by probabilistically overcoming the destructive effect of the random permutation on ordering information, leading to a combinatorial increase in the number of messages the decoder can distinguish between:
\begin{align}
    \abs{\mathcal{M}_1} \approx \prn{\frac{n^{\frac{1}{2}}}{\sqrt{d}}}^d = \bigO{n^{\frac{d}{2}}} \, .
\end{align}
Our segmentation procedure thus incurs an immaterial \emph{multiplicative} penalty of $\sqrt{d}$ within each segment, but introduces an \emph{exponent} of $d$ in the number of distinguishable messages when all segments are taken into consideration.

Next, we describe how the decoder recovers $(\theta_1, \dots, \theta_d)$ from $Y_1^n$ with high probability. For the remainder of this step, let $\brc{\mathbf{v}_t}_{t=0}^d$ denote the rows of the channel $\mathbf{P}_{Z|W}$, namely the conditional output distributions for each input value. We use the terms \emph{input distribution} and \emph{output distribution} with respect to the DMC, i.e., to respectively refer to $\mathbf{p}_W$ and $\mathbf{p}_Z$. Let $\mathbf{p}_{W,c}$ and $\mathbf{p}_{Z,c}$ denote the input and output distribution of segment $c$, respectively.

Since sender $1$ transmits a binary codeword, the alphabet of $W$ in each segment has size $2$. Equivalently, the input distribution of a segment $c \in [d]$ is the \emph{two-hot} vector
\begin{align}
    \mathbf{p}_{W,c} = (1 - \theta_c) \mathbf{e}_{c-1} + \theta_c \mathbf{e}_c \, .
\end{align}
The output distribution of segment $c$ is thus a \emph{convex combination} of two consecutive channel rows, with the convex coefficients in $\mathbf{p}_{W,c}$ corresponding to the Bernoulli probabilities used by sender $1$ in segment $c$:
\begin{align}
    \mathbf{p}_{Z,c} = (1 - \theta_c) \mathbf{v}_{c-1} + \theta_c \mathbf{v}_c \, .
\end{align}
The overall output distribution is the equally-weighted mean of the segment-specific output distributions, since all segments have equal length:
\begin{align}
    \mathbf{p}_Z = \frac{1}{d} \sum_{c=1}^d \mathbf{p}_{Z,c} = \frac{1}{d} \sum_{c=1}^d \brc{(1 - \theta_c) \mathbf{v}_{c-1} + \theta_c \mathbf{v}_c} \, .
\end{align}

Since the DMC $\mathbf{P}_{Z|W}$ is full-rank, its rows $\brc{\mathbf{v}_t}_{t=0}^d$ form a basis of $\mathbb{R}^{d+1}$. Each segment's output distribution lies in the \emph{span} of two consecutive basis vectors $\mathbf{v}_{c-1}$ and $\mathbf{v}_c$. Equivalently, each basis vector (except the first and last vectors $\mathbf{v}_0$ and $\mathbf{v}_d$) in the above representation of $\mathbf{p}_Z$ is \emph{weighted} by two consecutive Bernoulli parameters, as evidenced by simple rearrangement:
\begin{align} \label{eq:consecutive}
    \mathbf{p}_Z = \frac{1}{d} \prn{(1 - \theta_1) \mathbf{v}_0 + \sum_{c=1}^{d-1} (1 + \theta_c - \theta_{c+1}) \mathbf{v}_c + \theta_d \mathbf{v}_d} \, .
\end{align}

As a warmup, assume sampling noise is absent and thus $\hat{\mathbf{p}}_Y = \hat{\mathbf{p}}_Z = \mathbf{p}_Z$. The decoder computes and represents $\hat{\mathbf{p}}_Z$ with respect to the basis induced by the DMC, obtaining coefficients $\brc{\gamma_t}_{t=0}^d$ such that $\hat{\mathbf{p}}_Z = \sum_{t=0}^d \gamma_t \mathbf{v}_t$, and automatically recovering $\theta_d = \gamma_d$. Next, the decoder recovers $\brc{\theta_c}_{c=1}^{d-1}$ in reverse order by back-substituting the known variables into the equations $\gamma_c = 1 + \theta_c - \theta_{c+1}$ for $c \in [d - 1]$. Since $\hat{\mathbf{p}}_Z = \mathbf{p}_Z$ possesses the structure of \cref{eq:consecutive}, the value of $\theta_1$ recovered in this manner does not contradict the extra equation $\gamma_0 = 1 - \theta_1$.

Finally, consider the general case where $\hat{\mathbf{p}}_Z \neq \mathbf{p}_Z$ due to sampling noise. After expressing $\hat{\mathbf{p}}_Z$ with respect to the basis $\brc{\mathbf{v}_t}_{t=0}^d$, the decoder computes the least squares solution to the overdetermined mapping from $\brc{\gamma_t}_{t=0}^d$ to $\brc{\theta_c}_{c=1}^d$. Using various bounds on matrix norms, our achievability proof relates the concentration bound on $\hat{\mathbf{p}}_Z$ to the least squares approximation error, providing quantitative guarantees on decoding performance as desired. This concludes Step 1.

We remark that our segmentation procedure is fundamentally capable of creating at most $d$ segments, because there are $d - 1$ passive senders and thus $d$ unique domain shifts. Nonetheless, the permutation sum-capacity attained using $d$ segments matches the converse bound, so no additional segments are needed.

\emph{Step 2: Achieving the desired permutation capacity region by time sharing.} In this step, we extend the notion of time sharing to the permutation channel setting to arbitrarily distribute the PAMAC's sum-capacity among the $d$ senders. Combined with the characterization of $\Csum$ from Step 1, this implies
\begin{align}
    \Cregp \supseteq \brc{R \in \mathbb{R}_+^d: \sum_{i=1}^d R_i \leq \frac{d}{2}} \, ,
\end{align}
thus proving \cref{theorem:binary-achievability-time-sharing}.

Under our time sharing strategy, we partition each segment into $d$ \emph{subsegments} (indexed by $b \in [d]$) with carefully chosen lengths. In each segment, sender $i \in [d]$ actively encodes the respective component of its message $(\theta_{i,1}, \dots, \theta_{i,d}) \in \mathcal{M}_i$ with the aforementioned Bernoulli coding scheme in subsegment $b = i$, and passively transmits a deterministic code in all other subsegments to contribute to domain shifting. Hence, each segment contains active letter indices for each sender, and only one sender is active at any given letter index. Our subsegmentation scheme is thus comparable to classic notions of \emph{time division}, wherein only one sender transmits in each time slot \cite[Section 4.4]{ElGamalKim2011}.

Note that our segmentation and subsegmentation procedures serve the orthogonal purposes of \emph{increasing} a single sender's rate by maintaining ordering information and \emph{distributing} this increased rate arbitrarily among the senders, respectively. \cref{table:binary-encoder} visualizes the encoders' behavior in the case of $d = 3$ senders.
\begin{table*}[t]
    \caption{Behavior of the encoders in the case of $d = 3$ senders. Each column corresponds to a subsegment in a segment. Each row corresponds to a sender. Each cell contains the value returned by the encoder, which is a sample from a Bernoulli distribution (in active phase) or a constant (in passive phase). Observe that the deterministic codes in segment $c$ produce a domain shift of $c - 1$, commensurate with the approach for a single active sender discussed in Step 1.}
    \centering
    \begin{tabular}{c|c|c|c|c|c|c|c|c|c|}
        \cline{2-10}
        & \multicolumn{3}{c|}{segment 1} & \multicolumn{3}{c|}{segment 2} & \multicolumn{3}{c|}{segment 3} \\
        \cline{2-10}
        & sub 1 & sub 2 & sub 3 & sub 1 & sub 2 & sub 3 & sub 1 & sub 2 & sub 3 \\
        \hline
        \multicolumn{1}{|c|}{sender 1} & $\mathsf{Ber}(\theta_{1,1})$ & $0$ & $0$ & $\mathsf{Ber}(\theta_{1,2})$ & $1$ & $1$ & $\mathsf{Ber}(\theta_{1,3})$ & $1$ & $1$ \\
        \multicolumn{1}{|c|}{sender 2} & $0$ & $\mathsf{Ber}(\theta_{2,1})$ & $0$ & $1$ & $\mathsf{Ber}(\theta_{2,2})$ & $0$ & $1$ & $\mathsf{Ber}(\theta_{2,3})$ & $1$ \\
        \multicolumn{1}{|c|}{sender 3} & $0$ & $0$ & $\mathsf{Ber}(\theta_{3,1})$ & $0$ & $0$ & $\mathsf{Ber}(\theta_{3,2})$ & $1$ & $1$ & $\mathsf{Ber}(\theta_{3,3})$ \\
        \hline
    \end{tabular}
    \label{table:binary-encoder}
\end{table*}

Next, we briefly distill the essence of Step 1 to contextualize the intuition behind our achievability proof. In Step 1, the decoder recovered least-squares estimates of $d$ scalars $\brc{\theta_c}_{c=1}^d$, where each $\theta_c$ was the true proportion of ones among the letters actively sent in segment $c$. Since only sender $1$ was active, $\theta_c$ was precisely the parameter of the Bernoulli distribution generating sender $1$'s letters in segment $c$. Thus, each recovered scalar contributed rate $\frac{1}{2}$ for sender $1$, but contributed nothing for all other senders.

The crux of Step 2 is to \emph{share each recovered scalar among all the senders} by constructing a bijection $h$ between the Bernoulli parameters $\brc{\theta_b}_{b=1}^d$ used in a segment and the active proportion of ones $\phi$ in that segment.\footnote{For notational simplicity, we elide the subscript $c$ from $\phi$ and $\theta_b$ for the remainder of this step, since our time sharing strategy is identical for, and self-contained within, each segment. We emphasize that $\theta_c$ in Step 1 is subscripted by \emph{segment} index $c$, and $\theta_b$ in Step 2 is subscripted by \emph{subsegment} index $b$.} Then, accurate recovery of $\phi$ by the least-squares decoding procedure contributes information about each sender's message. The bijection $h$ controls how the information content of $\phi$ is distributed among the senders: By changing $h$, we alter the granularity at which each $\theta_b$ is encoded, allowing different rate $d$-tuples to be achieved. Implementing such a bijection in our coding scheme entails choosing the subsegment lengths to satisfy
\begin{align}
    \sum_{b=1}^d \rho_b \theta_{b} = h(\theta_{1}, \dots, \theta_{d}) \, , \label{eq:bijection-canonical-form}
\end{align}
where $\rho_b$ is the proportion of letter indices assigned to subsegment $b$. The left-hand side of this equation denotes that $\phi$ is a convex combination of the Bernoulli parameters, weighted by the relative sizes of the corresponding subsegments. The right-hand side asserts that the desired value of $\phi$ is specified by the bijection $h$.

As a warmup, assume $n$ and $R$ are such that each sender's message set has size $\abs{\mathcal{M}_i} = 10^d$, and so each $\theta_b$ lies on a grid of $10$ linearly spaced points between $0$ and $1$. We use the closed interval $[0, 1]$ for $b < d$ and the half-open interval $[0, 1)$ for $b = d$:
\begin{align}
    \theta_b = \begin{cases}
        \frac{\ell_b}{9} \, , & \text{if $b < d$} \\
        \frac{\ell_b}{10} \, , & \text{if $b = d$}
    \end{cases} \quad \text{where} \enspace \ell_b \in \bktz{9} \, . \label{eq:theta-b-base-10}
\end{align}
In this setting, a natural encoding of $\brc{\theta_b}_{b=1}^d$ within a scalar variable $\phi$ arises from taking each $\ell_b$ to be the $b$th most significant digit in the decimal representation of $\phi$:
\begin{align}
    h(\theta_1, \dots, \theta_d) = 0.\ell_1 \ell_2 \cdots \ell_d = \sum_{b=1}^d \frac{\ell_b}{10^b} \, . \label{eq:phi-base-10}
\end{align}
Combining \cref{eq:theta-b-base-10} and \cref{eq:phi-base-10} yields a characterization of $\phi$ as a weighted sum of the Bernoulli parameters $\brc{\theta_b}_{b=1}^d$:
\begin{align}
    \phi = \sum_{b=1}^{d-1} \underbrace{\frac{9}{10^b}}_{\rho_b} \theta_b + \underbrace{\frac{10}{10^d}}_{\rho_d} \theta_d \, .
\end{align}
By induction, the last $t$ weights evidently sum to $10^{-d+t}$. Hence, the entire set of weights $\brc{\rho_b}_{b=1}^d$ sums to $1$ and we may interpret $\rho_b$ as the proportional size of subsegment $b$, in accordance with \cref{eq:bijection-canonical-form}.\footnote{This motivates the use of closed and half-open intervals in \cref{eq:theta-b-base-10}.} This allocation of indices to senders results in $\phi = 0.\ell_1 \ell_2 \cdots \ell_d$ being the proportion of ones among the letters actively sent in segment $c$, as desired.

Next, we extrapolate this approach to the general case involving arbitrary rate $d$-tuples with sum-rate $\frac{d}{2} - \alpha$. Let $m_i = n^{\frac{R_i}{d}}$ be the size of the domain of $\theta_i$. (The warmup above corresponds to the special case where $m_i = 10$ for all $i \in [d]$.) Choosing linearly spaced points between $0$ and $1$ for the Bernoulli parameters, we have
\begin{align}
    \theta_b = \begin{cases}
        \frac{\ell_b}{m_b - 1} \, , & \text{if $b < d$} \\
        \frac{\ell_b}{m_b} \, , & \text{if $b = d$}
    \end{cases} \quad \text{where} \enspace \ell_b \in \bktz{m_b - 1} \, . \label{eq:theta-b-mixed-radix}
\end{align}
We encode $\brc{\theta_b}_{b=1}^d$ into $\phi$ by taking each $\ell_b$ to be the $b$th most significant digit in the \emph{mixed-radix} representation \cite[p. 208-209]{Knuth1997}, \cite{Cantor1869} of $\phi$, where the $b$th most significant digit has base $m_b$:
\begin{align}
    h(\theta_1, \dots, \theta_d) &= 0.(\ell_1)_{m_1} (\ell_2)_{m_2} \cdots (\ell_d)_{m_d} \\
    &= \sum_{b=1}^d \frac{\ell_b}{\prod_{i=1}^b m_i} \, . \label{eq:phi-mixed-radix}
\end{align}
Finally, the characterization of $\phi$ as a weighted sum of the Bernoulli parameters $\brc{\theta_b}_{b=1}^d$, obtained by combining \cref{eq:theta-b-mixed-radix} and \cref{eq:phi-mixed-radix}, is
\begin{align}
    \phi = \sum_{b=1}^{d-1} \underbrace{\frac{m_b - 1}{\prod_{i=1}^b m_i}}_{\rho_b} \theta_b + \underbrace{\frac{m_d}{\prod_{i=1}^d m_i}}_{\rho_d} \theta_d \, . \label{eq:phi-characterization-mixed-radix}
\end{align}
For a rigorous derivation of these equations, refer to the discussion of time sharing in \cref{section:general-achievability}.

At a high level, this mapping interprets $\brc{\theta_{b}}_{b=1}^d$ as normalized axis-wise indices into a multi-dimensional array of message components, and $\phi$ as the corresponding normalized index for the flattened one-dimensional view of the array. The domain of $\phi$ is a sufficiently fine grid (whose gap width is determined by sum-rate) to encode the $\brc{\theta_{b}}_{b=1}^d$ values lying on coarser grids (whose gap widths are determined by the individual senders' rates). \cref{figure:time-sharing} visualizes this interpretation in the case of $d = 2$ senders. The single-dimensional array along each axis is the grid of linearly spaced points from which the corresponding sender's message components $\theta_i$ are chosen; we label this domain as $\sqrt[d]{\mathcal{M}_i}$ with some abuse of notation. Each cell in the two-dimensional array represents a possible combination of transmitted message components, labeled with the corresponding $\phi$ value that encodes the combination.
\begin{figure}[b]
    \centering
    \begin{tblr}{
        colspec = {Q[c,m]Q[c,m]Q[c,m]Q[c,m]Q[c,m]Q[c,m]Q[c,m]},
        row{3} = {0mm, rowsep=8pt},
        column{3} = {0mm, colsep=8pt},
        stretch = 0,
        hline{2} = {4-7}{},
        hline{3} = {4-7}{},
        hline{4} = {2-2, 4-7}{},
        hline{5} = {2-2, 4-7}{},
        hline{6} = {2-2, 4-7}{},
        hline{7} = {2-2, 4-7}{},
        vline{2} = {4-6}{},
        vline{3} = {4-6}{},
        vline{4} = {2-2, 4-6}{},
        vline{5} = {2-2, 4-6}{},
        vline{6} = {2-2, 4-6}{},
        vline{7} = {2-2, 4-6}{},
        vline{8} = {2-2, 4-6}{}
    }
        & & & \SetCell[c=4]{} $\sqrt[d]{\mathcal{M}_2}$ & & & \\
        & & & {$\ell_2 = 0$ \\ $\theta_2 = \frac{0}{4}$} & {$\ell_2 = 1$ \\ $\theta_2 = \frac{1}{4}$} & {$\ell_2 = 2$ \\ $\theta_2 = \frac{2}{4}$} & {$\ell_2 = 3$ \\ $\theta_2 = \frac{3}{4}$} \\
        & & & & & & \\
        \SetCell[r=3]{} \rotatebox[origin=c]{90}{$\sqrt[d]{\mathcal{M}_1}$} & {$\ell_1 = 0$ \\ $\theta_1 = \frac{0}{2}$} & & $\phi = \frac{0}{12}$ & $\phi = \frac{1}{12}$ & $\phi = \frac{2}{12}$ & $\phi = \frac{3}{12}$ \\
        & {$\ell_1 = 1$ \\ $\theta_1 = \frac{1}{2}$} & & $\phi = \frac{4}{12}$ & $\phi = \frac{5}{12}$ & $\phi = \frac{6}{12}$ & $\phi = \frac{7}{12}$ \\
        & {$\ell_1 = 2$ \\ $\theta_1 = \frac{2}{2}$} & & $\phi = \frac{8}{12}$ & $\phi = \frac{9}{12}$ & $\phi = \frac{10}{12}$ & $\phi = \frac{11}{12}$
    \end{tblr}
    \caption{Visual interpretation of our time sharing bijection in the case $d = 2$, $m_1 = 3$, and $m_2 = 4$ (hence, $\abs{\mathcal{M}_1} = 9$ and $\abs{\mathcal{M}_2} = 16$).}
    \label{figure:time-sharing}
\end{figure}

We remark that the lengths of the $d$ subsegments within each segment are monotonically non-increasing, regardless of the rate distribution among the senders. In the analysis below, assume $n$ is sufficiently large such that each $m_b \geq 2$. Then, for every $b \in [d - 1]$,
\begin{align}
    \rho_b &\stackclap{(a)}{=} \frac{m_b - 1}{\prod_{i=1}^b m_i} \\
    &\geq \frac{1}{\prod_{i=1}^b m_i} \\
    &\geq \frac{m_{b+1} - \Iv{b + 1 < d}}{\prod_{i=1}^{b+1} m_i} \\
    &\stackclap{(b)}{=} \rho_{b+1} \, ,
\end{align}
where (a) and (b) follow from \cref{eq:phi-characterization-mixed-radix}. Thus, irrespective of the rate $d$-tuple, sender $1$ sends the most active letters and sender $d$ sends the least active letters in each of the $d$ segments. We note that it is unnecessary to rotate the assignment of senders to subsegments to prevent any one sender from consistently receiving the shortest subsegment. Under the current time sharing procedure, the size of the $b$th subsegment is
\begin{align}
    \rho_b \, \frac{n}{d} &\stackclap{(a)}{\geq} \frac{m_b}{2 \prod_{i=1}^b m_i} \cdot \frac{n}{d} \\
    &\stackclap{(b)}{=} \frac{1}{2 \prod_{i=1}^{b-1} n^{\frac{R_i}{d}}} \cdot \frac{n}{d} \\
    &= \frac{1}{2d} \, n^{1 - \sum_{i=1}^{b-1} \frac{R_i}{d}} \\
    &\stackclap{(c)}{=} \frac{1}{2d} \, n^{\frac{1}{2} + \frac{\alpha}{d} + \sum_{i=b}^d \frac{R_i}{d}} \\
    &\stackclap{(d)}{\geq} \frac{1}{2d} \, n^{\frac{1}{2} + \frac{R_b}{d} + \frac{\alpha}{d}} \\
    &\stackclap{(e)}{\geq} \frac{1}{2d} \, n^{\frac{2R_b}{d} + \frac{\alpha}{d}} \, ,
\end{align}
where (a) follows from \cref{eq:phi-characterization-mixed-radix}, (b) follows because $m_i = n^{\frac{R_i}{d}}$, (c) follows because we consider rate $d$-tuples with sum-rate $\frac{d}{2} - \alpha$, (d) follows because rates are non-negative, and (e) follows because $\frac{1}{2} > \sum_{i=1}^d \frac{R_i}{d} \geq \frac{R_b}{d}$. Thus, the $b$th subsegment can encode
\begin{align}
    \frac{1}{\sqrt{2d}} \, n^{\frac{R_b}{d} + \frac{\alpha}{2d}} \gtrsim n^{\frac{R_b}{d}}
\end{align}
messages, as desired.

We finish Step 2 with a cautionary remark. As prior work \cite{Makur2018} has shown that the permutation capacity of the binary symmetric channel is $\frac{1}{2}$, it is tempting (but incorrect) to conclude that $\Cregp \subseteq \bkt{0, \frac{1}{2}}^d$ for the binary PAMAC, since sharing a communication channel with additional senders should not increase any individual sender's rate. This intuition is flawed because the dimensionality of the DMC in the PAMAC increases with the number of senders, so any conjecture about the PAMAC's permutation capacity region should consider the alphabet size of the DMC's input codeword $W_1^n$ instead of each encoder's output codeword $X_1^n$. (Indeed, prior work \cite{Makur2020b} establishes the permutation capacity of a full-rank, strictly-positive $(d + 1) \times (d + 1)$ DMC as $C_\mathsf{perm} = \frac{d}{2}$.)

\emph{Step 3: Generalizing to $p$-ary alphabets.} The techniques described in Steps 1 and 2 naturally extend to general $p$-ary alphabets, with modifications to exploit the expanded alphabet size. Active letters are generated by sampling from a categorical distribution over $\bktz{p - 1}$ instead of a Bernoulli distribution. Each component of a message $(\boldsymbol{\theta}_{i,1}, \dots, \boldsymbol{\theta}_{i,d}) \in \mathcal{M}_i$ is a categorical PMF chosen from a lattice $\mathcal{L}_i$ embedded in $\mathcal{S}_{p-1}$. The lattice $\mathcal{L}_i$ is the higher-dimensional analogue of the grid of linearly spaced points between $0$ and $1$ defined in \cref{eq:theta-b-mixed-radix} and visualized along the $i$th axis of \cref{figure:time-sharing}. A visualization of $\mathcal{L}_i$ for $p = 3$ is provided in \cref{figure:message-set}. To maintain the property that the alphabets of $W$ from adjacent segments overlap at exactly one value, passive senders transmit $p - 1$ values to contribute to a domain shift of $(c - 1)(p - 1)$ in segment $c$.

The sent messages $M \in \mathcal{M}$ are comprised of scalar variables $\brc{\theta_{i,c,k}}_{i=1,c=1,k=0}^{d,d,p-1}$ subscripted by indices denoting the sender, segment, and alphabet symbol, respectively. The time sharing bijection computes the weighted sum in \cref{eq:phi-characterization-mixed-radix} along the sender axis, flattening this three-dimensional array of scalars into $\brc{\phi_{c,k}}_{c=1,k=0}^{d,p-1}$. The decoder recovers least-squares estimates of these $dp$ scalars from an overdetermined system of $dp + 1$ equations:
\begin{itemize}
    \item Analogously to \cref{eq:consecutive}, the output distribution $\mathbf{p}_Z \in \mathbb{R}^{d(p-1)+1}$ is a convex combination of the rows of $\mathbf{P}_{Z|W}$, whose convex coefficients are given by a simple rearrangement of $\brc{\phi_{c,k}}_{c=1,k=0}^{d,p-1}$. This contributes $d(p - 1) + 1$ equations.
    \item By \cref{eq:phi-characterization-mixed-radix}, for each $c \in [d]$, the vector $\boldsymbol{\phi}_c = \brc{\phi_{c,k}}_{k=0}^{p-1}$ is a convex combination of the sender-wise categorical PMFs $\boldsymbol{\theta}_{i,c} = \brc{\theta_{i,c,k}}_{k=0}^{p-1}$. Hence, each $\boldsymbol{\phi}_c$ is a categorical PMF which sums to $1$. This contributes $d$ equations.
\end{itemize}

We finish with two pertinent remarks. Firstly, our randomized coding scheme has polynomial time complexity with respect to $n$, since most decoding steps are operations on matrices and vectors of size $\bigO{\poly(d, p)} = \bigO{1}$. (Refer to the proof of \cref{theorem:general-achievability} in \cref{section:general-achievability} for specific details.) Secondly, the high probability bound in \cref{eq:high-probability-bound} holds conditioned on \emph{any} message values. Hence, although $\Perror{n}$ is the average probability of error over all possible messages, a standard \emph{expurgation argument} similar to \cite[Section 7.7, p. 204]{CoverThomas2006} shows that the inner bound on $\Cregp$ remains the same under a maximal probability of error criterion.

\subsection{Converse Bounds} \label{subsection:converse-bounds}

Our fourth main result is an outer bound on the permutation capacity region of the general $p$-ary PAMAC:

\begin{theorem}[General Converse] \label{theorem:general-converse}
    The permutation capacity region of the $p$-ary PAMAC satisfies
    \begin{align}
        \Cregp \subseteq \brc{R \in \mathbb{R}_+^d: \sum_{i=1}^d R_i \leq \frac{d(p - 1)}{2}} \, .
    \end{align}
\end{theorem}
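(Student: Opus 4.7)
The plan is to adapt the converse template developed for point-to-point noisy permutation channels in \cite{Makur2020a,Makur2020b} to the MAC setting. The starting point is the observation that, because $\sigma$ is drawn uniformly and independently of every other variable, the empirical distribution $\hat{\mathbf{p}}_Y$ is a sufficient statistic for $M$ given $Y_1^n$: averaging over $\sigma$ shows that $p_{Y_1^n|M}(y_1^n|m)$ is a function of $(\hat{\mathbf{p}}_Y(y_1^n), m)$ only, so a Fisher--Neyman factorization gives $\I{M; Y_1^n} = \I{M; \hat{\mathbf{p}}_Y}$. Combined with Fano's inequality and the uniformity of $M$, the target bound $\sum_{i=1}^d R_i \leq \lfrac{d(p-1)}{2}$ therefore reduces to showing
\begin{align}
    \I{M; \hat{\mathbf{p}}_Y} \leq \frac{d(p-1)}{2} \log_2 n + O(1) \, . \label{eq:plan-info-bound}
\end{align}

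I would decompose $\I{M; \hat{\mathbf{p}}_Y} = \H{\hat{\mathbf{p}}_Y} - \H{\hat{\mathbf{p}}_Y \given M}$ and control the two terms separately. A crude counting bound handles the first: $\hat{\mathbf{p}}_Y$ takes values in the set of length-$n$ types over $\Y$, whose cardinality $\multibinom{\abs{\Y}}{n} = \bigO{n^{d(p-1)}}$ yields $\H{\hat{\mathbf{p}}_Y} \leq d(p-1) \log_2 n + O(1)$. The matching lower bound $\H{\hat{\mathbf{p}}_Y \given M} \geq \lfrac{d(p-1)}{2} \log_2 n - O(1)$ produces the characteristic ``half-log'' factor and is the main obstacle. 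I would first exploit the Markov chain $M \to W_1^n \to Z_1^n \to Y_1^n$, together with $\hat{\mathbf{p}}_Y = \hat{\mathbf{p}}_Z$ and the fact that the conditional law of $\hat{\mathbf{p}}_Z$ given $W_1^n$ depends on $W_1^n$ only through its type, to obtain
\begin{align}
    \H{\hat{\mathbf{p}}_Y \given M} \geq \H{\hat{\mathbf{p}}_Y \given M, W_1^n} = \H{\hat{\mathbf{p}}_Z \given \hat{\mathbf{p}}_W} \, .
\end{align}

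Conditionally on $\hat{\mathbf{p}}_W$, the vector $n \hat{\mathbf{p}}_Z$ is a sum of independent multinomial random vectors indexed by $w \in \Y$, with sample sizes $n \bkt{\hat{\mathbf{p}}_W}_w$ and parameters given by the rows of $\mathbf{P}_{Z|W}$. The invertibility and strict positivity of $\mathbf{P}_{Z|W}$ imply that each summand has full-rank covariance $\boldsymbol{\Sigma}_w$ on the $d(p-1)$-dimensional tangent space of $\mathcal{S}_{d(p-1)}$, with minimum eigenvalue bounded below uniformly in $w$. A local central limit theorem---or, equivalently, a direct Stirling-based expansion of the multinomial entropy in the spirit of \cite{Makur2020a}---then yields $\H{\hat{\mathbf{p}}_Z \given \hat{\mathbf{p}}_W = \boldsymbol{\pi}} \geq \lfrac{d(p-1)}{2} \log_2 n - O(1)$ uniformly in $\boldsymbol{\pi}$. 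Substituting back establishes \cref{eq:plan-info-bound} and closes the converse. I expect the hardest technical step to be securing the $O(1)$ constant uniformly across all admissible types $\hat{\mathbf{p}}_W$, including degenerate ones concentrated on a single $w$; this would be handled by a case analysis together with a worst-case lower bound on the smallest eigenvalue of the aggregated covariance $\sum_w \bkt{\hat{\mathbf{p}}_W}_w \boldsymbol{\Sigma}_w$.
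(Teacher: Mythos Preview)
Your proposal is correct and follows essentially the same approach as the paper. The paper's proof invokes Fano's inequality, the sufficiency of $\hat{\mathbf{p}}_Y$ via Fisher--Neyman, and the data processing inequality to reduce to bounding $\I{W_1^n;\hat{\mathbf{p}}_Y}$, then cites a black-box lemma from \cite{Makur2020b} stating that this mutual information is at most $d(p-1)\bigl(\log_2(n+1)-\tfrac{1}{2}\log_2(\alpha n)+o(1)\bigr)$; your steps 3--6 (entropy decomposition, type counting for $\H{\hat{\mathbf{p}}_Y}$, and the multinomial/Stirling lower bound on $\H{\hat{\mathbf{p}}_Z\mid\hat{\mathbf{p}}_W}$) are precisely the contents of that cited lemma, and your worry about degenerate types is exactly what the strict positivity assumption on $\mathbf{P}_{Z|W}$ is there to handle, since it forces every row's multinomial covariance to have minimum tangent-space eigenvalue bounded away from zero uniformly.
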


For convenience, we state the direct specialization of this result to the binary ($p = 2$) case:

\begin{corollary}[Binary Converse] \label{corollary:binary-converse}
    The permutation capacity region of the binary PAMAC satisfies
    \begin{align}
        \Cregp \subseteq \brc{R \in \mathbb{R}_+^d: \sum_{i=1}^d R_i \leq \frac{d}{2}} \, .
    \end{align}
\end{corollary}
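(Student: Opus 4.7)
The plan is to observe that, from a sum-rate perspective, the PAMAC is just a single-user noisy permutation channel with output alphabet $|\mathcal{Y}| = d(p-1)+1$, and to import the converse already proved in the noisy-permutation-channel literature. I would start by viewing the entire system as one channel whose input is the joint message $M \in \M$ and whose output is $Y_1^n$; by definition of the rate $d$-tuple, $\sum_{i=1}^d R_i = \log_2 \abs{\M}/\log_2 n$, where $\abs{\M} = \prod_i \abs{\mathcal{M}_i}$. The adder MAC followed by the DMC and random permutation block composes into a single noisy permutation channel whose pre-permutation alphabet has size $K \triangleq d(p-1)+1$.

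Next, I would verify that the two structural properties of noisy permutation channels still hold in this composite setting. Because the uniformly random permutation $\sigma$ is independent of everything else, conditioning on the type $T_Y$ of $Y_1^n$ makes $Y_1^n$ uniform over all sequences of that type; hence $T_Y$ is a sufficient statistic for $M$, giving $I(M; Y_1^n) = I(M; T_Y)$. Moreover, since the DMC is memoryless and symbol-wise, $T_Y = T_Z$ depends on $W_1^n$ only through its type $T_W$, yielding the Markov chain $M \to T_W \to T_Y$ and hence $I(M; T_Y) \leq I(T_W; T_Y)$ by data processing. Since $T_W$ takes values in a $1/n$-discretization of $\mathcal{S}_{K-1}$, the converse reduces to bounding the capacity of the single-user ``empirical-distribution channel'' $T_W \mapsto T_Y$ on alphabet size $K$.

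The main step is the ``factor of one-half'' bound $I(T_W; T_Y) \leq \frac{K-1}{2}\log_2 n + \littleO{\log n}$. The heuristic is that although there are $\binom{n+K-1}{K-1} = \bigTheta{n^{K-1}}$ possible input types, the channel $T_W \mapsto T_Y$ smears each input into a multinomial-convolution cloud of standard deviation $\bigTheta{1/\sqrt{n}}$ around the mean $T_W \mathbf{P}_{Z|W}$, so at most $\bigTheta{n^{(K-1)/2}}$ output centers can be reliably distinguished in the bounded simplex. I would formalize this either by a covering/packing argument combined with a Le Cam-type hypothesis-testing inequality, or by writing $I(T_W; T_Y) = H(T_Y) - H(T_Y \mid T_W)$ and estimating $H(T_Y) \leq (K-1)\log_2 n + \bigO{1}$ against a matching lower bound $H(T_Y \mid T_W) \geq \frac{K-1}{2}\log_2 n - \littleO{\log n}$ from a Gaussian approximation to the multinomial. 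The invertibility and strict positivity assumed for $\mathbf{P}_{Z|W}$ ensure that the induced covariance is non-degenerate, so all $K-1$ coordinates of $T_Y$ genuinely contribute; this is the same technique used in prior noisy-permutation-channel converses, and it should transplant once the alphabet $K = d(p-1)+1$ is substituted.

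Finally, applying Fano's inequality to the original system gives $(1 - \Perror{n})\log_2 \abs{\M} \leq I(M; Y_1^n) + 1$; chaining with the bound above, dividing through by $\log_2 n$, and sending $n \to \infty$ with $\Perror{n} \to 0$ yields $\sum_{i=1}^d R_i \leq (K-1)/2 = d(p-1)/2$, as desired. The hard part will be the mutual-information bound in the third paragraph, since that is the only genuinely nontrivial step; everything else (Fano, the sufficient-statistic identification, and the Markov chain $M \to T_W \to T_Y$) is bookkeeping.
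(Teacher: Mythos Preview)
Your proposal is correct and follows essentially the same route as the paper: Fano's inequality, identification of the empirical output distribution as a sufficient statistic, data processing to reduce to the single-user noisy-permutation-channel mutual information, and then the $\tfrac{K-1}{2}\log_2 n$ bound with $K=d(p-1)+1$. The only cosmetic difference is that the paper imports that mutual-information bound as a black-box lemma from \cite{Makur2020b} (their Lemma~\ref{lemma:perturbation-channel-mutual-information}) rather than re-deriving it via the entropy decomposition or packing argument you sketch, and the paper passes through $W_1^n$ rather than its type $T_W$ in the data-processing step; neither changes the substance.
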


We defer the technical details of our proof to \cref{section:converse-proof}. We adapt the proof technique in \cite[Section III-C]{Makur2020b} based on Fano's inequality \cite[Theorem 2.10.1]{CoverThomas2006} and the standard argument in \cite[Section 7.9]{CoverThomas2006}. Our derivations make use of the independence between messages and the fact that each message is uniformly distributed. By the Fisher-Neyman factorization theorem \cite[Theorem 3.6]{Keener2010}, $\hat{\mathbf{p}}_Y$ is a sufficient statistic of $Y_1^n$, since the probability mass function $p_{Y_1^n | M}(y_1^n | m)$ depends on $y_1^n$ through $\hat{\mathbf{p}}_Y$. Thus, it suffices to upper-bound $\I{W_1^n; \hat{\mathbf{p}}_Y}$. Upper-bounding the mutual information between the inputs and outputs of a noisy permutation channel has been studied in the prior literature, and we adopt the analysis in the proof of \cite[Theorem 2]{Makur2020b} based on analyzing the Shannon entropy of binomial random variables.

\subsection{Permutation Capacity Region}

Lastly, we combine the achievability and converse bounds discussed above to obtain explicit characterizations of the PAMAC's permutation capacity region. The following result for the $p$-ary PAMAC incorporates \cref{theorem:general-achievability,theorem:general-converse}:

\begin{theorem}[General Permutation Capacity Region] \label{theorem:general-permutation-capacity-region}
    The permutation capacity region of the $p$-ary PAMAC is
    \begin{align}
        \Cregp = \brc{R \in \mathbb{R}_+^d: \sum_{i=1}^d R_i \leq \frac{d(p - 1)}{2}} \, .
    \end{align}
\end{theorem}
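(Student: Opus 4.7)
The plan is to observe that this theorem is a direct combination of the two bounds already stated: the inner bound in \cref{theorem:general-achievability} and the outer bound in \cref{theorem:general-converse}. No new argument is needed beyond invoking both results and noting that their conclusions coincide on the simplex $\{R \in \mathbb{R}_+^d : \sum_{i=1}^d R_i \leq d(p-1)/2\}$.

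Concretely, I would split the proof into the two containments. For the $\supseteq$ direction, I would invoke \cref{theorem:general-achievability}, which already guarantees that every rate $d$-tuple in this simplex is achievable, hence lies in $\Cregp$. For the $\subseteq$ direction, I would invoke \cref{theorem:general-converse}, which shows that any achievable rate $d$-tuple must satisfy $\sum_{i=1}^d R_i \leq d(p-1)/2$; since $\Cregp$ is defined as the closure of the set of achievable tuples and the simplex is itself closed in $\mathbb{R}_+^d$, taking closures preserves the containment.

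There is no meaningful obstacle: the entire content lies in the two referenced theorems, whose proofs are developed elsewhere in the paper (the time-sharing/mixed-radix construction for achievability and the Fano-plus-binomial-entropy argument for the converse). The only minor bookkeeping is to note that the simplex $\{R \in \mathbb{R}_+^d : \sum_i R_i \leq d(p-1)/2\}$ is closed, which ensures that the achievability inclusion is preserved under the closure operation used in the definition of $\Cregp$, and that the converse inclusion, stated in terms of achievable tuples, passes to the closure for the same reason.
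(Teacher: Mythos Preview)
Your proposal is correct and matches the paper's approach exactly: the paper simply states that \cref{theorem:general-permutation-capacity-region} ``incorporates \cref{theorem:general-achievability,theorem:general-converse}'' without giving any additional argument, so your two-containment combination of the achievability and converse bounds is precisely what is intended.
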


Combining \cref{theorem:binary-achievability-time-sharing,corollary:binary-converse} gives rise to the permutation capacity region of the binary PAMAC:

\begin{corollary}[Binary Permutation Capacity Region] \label{corollary:binary-permutation-capacity-region}
    The permutation capacity region of the binary PAMAC is
    \begin{align}
        \Cregp = \brc{R \in \mathbb{R}_+^d: \sum_{i=1}^d R_i \leq \frac{d}{2}} \, .
    \end{align}
\end{corollary}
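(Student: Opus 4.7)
The plan is to obtain the claim by mutual inclusion, stitching together two results already stated in the excerpt: the achievability bound in \cref{theorem:binary-achievability-time-sharing} and the converse in \cref{corollary:binary-converse}. Since both bounds are already present, no new machinery is needed; the proof is essentially an observation that the two matching half-space descriptions coincide on the simplex $\brc{R \in \R_+^d : \sum_{i=1}^d R_i \leq \lfrac{d}{2}}$.

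Concretely, I would first invoke \cref{theorem:binary-achievability-time-sharing} to write
\begin{align}
    \Cregp \supseteq \brc{R \in \R_+^d : \textstyle\sum_{i=1}^d R_i \leq \tfrac{d}{2}} \, ,
\end{align}
and then invoke \cref{corollary:binary-converse} to write
\begin{align}
    \Cregp \subseteq \brc{R \in \R_+^d : \textstyle\sum_{i=1}^d R_i \leq \tfrac{d}{2}} \, .
\end{align}
Combining these two set inclusions yields the desired equality. As an equivalent and even shorter route, one may simply specialize \cref{theorem:general-permutation-capacity-region} to $p = 2$: substituting $p = 2$ into the expression $\lfrac{d(p - 1)}{2}$ immediately gives $\lfrac{d}{2}$, recovering the same simplex.

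There is no substantive obstacle: all the analytic content (the time-sharing construction with mixed-radix bijections on the achievability side, and the Fano/sufficient-statistic argument on the converse side) is already subsumed in the results being combined. The only care required is to note that the right-hand sides of the inner and outer bounds are literally the same closed convex set (in particular, the closure operation in the definition of $\Cregp$ is compatible with both inclusions because the achievability bound's set is already closed), so no topological subtlety intervenes when passing from the two half-inclusions to equality.
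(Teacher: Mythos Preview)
Your proposal is correct and matches the paper's approach exactly: the paper states the corollary as an immediate consequence of combining \cref{theorem:binary-achievability-time-sharing} with \cref{corollary:binary-converse}, and also presents it as the $p=2$ specialization of \cref{theorem:general-permutation-capacity-region}. No additional argument is needed.
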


An immediate corollary of \cref{theorem:general-permutation-capacity-region} is the following characterization of the PAMAC's permutation sum-capacity:

\begin{corollary}[General Permutation Sum-Capacity] \label{corollary:general-permutation-sum-capacity}
    The permutation sum-capacity of the $p$-ary PAMAC is
    \begin{align}
        \Csum = \frac{d(p - 1)}{2} \, .
    \end{align}
\end{corollary}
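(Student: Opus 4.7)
The plan is to derive \cref{corollary:general-permutation-sum-capacity} directly from \cref{theorem:general-permutation-capacity-region} by unwinding the definitions of $\Csum$ and $\Cregp$. Recall that $\Csum$ is the supremum of $\sum_{i=1}^d R_i$ taken over all achievable rate $d$-tuples, whereas $\Cregp$ is the \emph{closure} of the set of achievable rate $d$-tuples. The linear functional $R \mapsto \sum_{i=1}^d R_i$ is continuous on $\mathbb{R}^d$, so its supremum over any set equals its supremum over that set's closure. Hence, the task reduces to showing that this supremum equals $\lfrac{d(p-1)}{2}$ when taken over $\Cregp$.

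First, I would establish the upper bound $\Csum \leq \lfrac{d(p-1)}{2}$. Take any achievable rate $d$-tuple $R$. Then $R \in \Cregp$, and by the converse direction of \cref{theorem:general-permutation-capacity-region} (i.e., \cref{theorem:general-converse}), we immediately have $\sum_{i=1}^d R_i \leq \lfrac{d(p-1)}{2}$. Taking the supremum over all achievable $R$ yields $\Csum \leq \lfrac{d(p-1)}{2}$.

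Next, I would establish the matching lower bound $\Csum \geq \lfrac{d(p-1)}{2}$ by exhibiting a sequence of achievable rate $d$-tuples whose sum-rates approach $\lfrac{d(p-1)}{2}$. For any $\alpha > 0$, set $R^{(\alpha)} = \prn{\lfrac{(p-1)}{2} - \lfrac{\alpha}{d}, \dots, \lfrac{(p-1)}{2} - \lfrac{\alpha}{d}}$, so that $\sum_{i=1}^d R^{(\alpha)}_i = \lfrac{d(p-1)}{2} - \alpha$. This tuple lies in the simplex characterized by the achievability direction of \cref{theorem:general-permutation-capacity-region} (i.e., \cref{theorem:general-achievability}), so $R^{(\alpha)}$ is achievable, giving $\Csum \geq \lfrac{d(p-1)}{2} - \alpha$. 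Letting $\alpha \downarrow 0$ yields $\Csum \geq \lfrac{d(p-1)}{2}$, and combining with the upper bound completes the proof.

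There is no substantive obstacle here, since \cref{theorem:general-permutation-capacity-region} already encodes both the achievability and converse portions of the argument; the only minor subtlety worth flagging is the closure in the definition of $\Cregp$, which is handled cleanly by continuity of the sum functional. The proof is therefore essentially a one-line deduction from \cref{theorem:general-permutation-capacity-region}, included as a corollary purely for emphasis.
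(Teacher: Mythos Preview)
Your proposal is correct and matches the paper's approach: the paper states this as ``an immediate corollary of \cref{theorem:general-permutation-capacity-region}'' and gives no separate proof, and your unpacking of the definitions (with the continuity observation to handle the closure in $\Cregp$) is exactly the routine verification that justifies the word ``immediate.''
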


\section{Proof of Binary Achievability Using Root Stability} \label{section:binary-achievability-root-stability}

In this section, we prove \cref{theorem:binary-achievability-root-stability}. Our argument makes use of two auxiliary results (\cref{lemma:lipschitz-continuity-of-roots,lemma:difference-of-quotients}), which we prove at the end of this section.

\begin{proof}[Proof of \cref{theorem:binary-achievability-root-stability}]
    Recall that we utilize the message sets, randomized encoders, and decoder outlined in \cref{subsection:binary-achievability-root-stability} with rate $d$-tuple $R = \iprn{\frac{1}{2} - \alpha, \dots, \frac{1}{2} - \alpha}$ for some arbitrary, fixed $\alpha > 0$. For notational simplicity, let
    \begin{align}
        \gamma = (2d + 1)^{d+1} d^2 \prn{2 + (2d + 1)^d}^{2d + 2}
    \end{align}
    throughout this proof. By definition of achievable rate tuples, we want to show that
    \begin{align}
        \forall \epsilon > 0, \, \exists n_0 \in \mathbb{N}, \, \forall n \geq n_0, \, \Perror{n} \leq \epsilon \, .
    \end{align}
    
    Fix any $\epsilon > 0$. Since logarithms are asymptotically dominated by polynomials, there exists an $n_1 \in \mathbb{N}$ such that
    \begin{align}
        \sqrt{\frac{\log_e n}{n}} \leq \prn{4 \gamma \norm{\mathbf{P}_{Z|W}^{-1}}_1}^{-1} n^{-\frac{1}{2} + \alpha}
    \end{align}
    for all $n \geq n_1$. (Recall that $d$ is a constant in our formal model.) Choose
    \begin{align}
        n_0 = \max \brc{\prn{6 \gamma \norm{\mathbf{P}_{Z|W}^{-1}}_1}^4, n_1, \sqrt{\frac{2(d + 1)}{\epsilon}}} \, ,
    \end{align}
    and fix any $n \geq n_0$.
    
    \emph{Step 0: Upper-bounding $\norm{\mathbf{p}_W - \tilde{\mathbf{p}}_W}_\infty$ with high probability.} In this proof, all probabilities $\P{\cdot}$ are conditioned on sending the fixed message values $M = (\theta_1, \dots, \theta_d)$. By definition of $\tilde{\mathbf{p}}_W$,
    \begin{align}
        \norm{\mathbf{p}_W - \tilde{\mathbf{p}}_W}_\infty \leq \norm{\mathbf{P}_{Z|W}^{-1}}_1 \norm{\mathbf{p}_Z - \hat{\mathbf{p}}_Z}_\infty \, ,
    \end{align}
    where $\mathbf{p}_W$ and $\mathbf{p}_Z$ are the true marginal probability distributions of $W$ and $Z$, respectively. Since $\mathbf{p}_Z, \hat{\mathbf{p}}_Z \in \mathcal{S}_d$,
    \begin{align}
        &\equad \P{\norm{\mathbf{p}_W - \tilde{\mathbf{p}}_W}_\infty \leq \norm{\mathbf{P}_{Z|W}^{-1}}_1 \sqrt{\frac{\log_e n}{n}}} \\
        &\geq \P{\norm{\mathbf{p}_Z - \hat{\mathbf{p}}_Z}_\infty \leq \sqrt{\frac{\log_e n}{n}}} \\
        &\stackclap{(a)}{\geq} 1 - \sum_{t=0}^d \P{\abs{\bkt{\mathbf{p}_Z}_t - \bkt{\hat{\mathbf{p}}_Z}_t} > \sqrt{\frac{\log_e n}{n}}} \\
        &\stackclap{(b)}{\geq} 1 - \frac{2(d + 1)}{n^2} \, ,
    \end{align}
    where (a) follows from the union bound. The $Z_1^n$ are conditionally independent given the messages, since the letters $\prn{X_i}_1^n$ are independently generated and the $W_1^n$ are independently passed through the DMC. Hence, (b) follows by applying Hoeffding's inequality (\cref{lemma:hoeffdings-inequality}) with $\tau = \sqrt{\frac{\log_e n}{n}}$.
    
    For Steps 1 to 3 of this proof, we restrict to the subset of the sample space where
    \begin{align}
        \norm{\mathbf{p}_W - \tilde{\mathbf{p}}_W}_\infty \leq \norm{\mathbf{P}_{Z|W}^{-1}}_1 \sqrt{\frac{\log_e n}{n}} \, . \label{eq:sample-space-restriction}
    \end{align}
    
    \emph{Step 1: Upper-bounding stability of roots in unspecified order.} Let $G_W: \mathbb{C} \rightarrow \mathbb{C}$ be the true probability generating function of $W$, given by
    \begin{align}
        G_W(\xi) = \E{\xi^W} = \sum_{t=0}^d \bkt{\mathbf{p}_W}_t \xi^t \, .
    \end{align}
    Let $\brc{\xi_i}_{i=1}^d$ be the roots of $G_W$. In this step, we will show that
    \begin{align} \label{eq:step-1-result}
        \min_{\pi \in \Sg_d} \max_{i \in [d]} \abs{\Re \brc{\tilde{\xi}_{\pi(i)}} - \xi_i} \leq \frac{2 \gamma}{2d + 1} \norm{\mathbf{p}_W - \tilde{\mathbf{p}}_W}_\infty \, .
    \end{align}
    
    By definition of the adder and encoder, $W = \sum_{i=1}^d X_i$ and $X_i \sim \Bernoulli{\theta_i}$. Since the $\brc{X_i}_{i=1}^d$ are independent,
    \begin{align}
        G_W(\xi) = \prod_{i=1}^d G_{X_i}(\xi) = \prod_{i=1}^d (1 - \theta_i + \theta_i \xi) \, .
    \end{align}
    Since $G_W$ is a degree-$d$ polynomial, $G_W(\xi) \propto \prod_{i=1}^d (\xi - \xi_i)$ by the fundamental theorem of algebra. Hence,
    \begin{align} \label{eq:root-to-parameter-transform}
        \xi_i = \frac{\theta_i - 1}{\theta_i} \enspace \text{and therefore} \enspace \theta_i = \frac{1}{1 - \xi_i} \, .
    \end{align}
    
    Define two monic complex-valued polynomials $F$ and $\tilde{F}$ with the same roots as $G_W$ and $\tilde{G}_W$, respectively:
    \begin{align}
        F(\xi) &= \frac{G_W(\xi)}{\bkt{\mathbf{p}_W}_d} = \xi^d + \sum_{t=0}^{d-1} \frac{\bkt{\mathbf{p}_W}_t}{\bkt{\mathbf{p}_W}_d} \, \xi^t \, , \\
        \tilde{F}(\xi) &= \frac{\tilde{G}_W(\xi)}{\bkt{\tilde{\mathbf{p}}_W}_d} = \xi^d + \sum_{t=0}^{d-1} \frac{\bkt{\tilde{\mathbf{p}}_W}_t}{\bkt{\tilde{\mathbf{p}}_W}_d} \, \xi^t \, .
    \end{align}
    Observe that
    \begin{align}
        \bkt{\mathbf{p}_W}_d \stackclap{(a)}{=} \prod_{i=1}^d \theta_i \stackclap{(b)}{\geq} \frac{1}{(2d + 1)^d} \, , \label{eq:probability-w-d-lower-bound}
    \end{align}
    where (a) follows from the definition of $W$ and independence of $\brc{X_i}_{i=1}^d$, and (b) holds because the message sets are padded away from $0$ and so each $\theta_i \geq \frac{1}{2d + 1}$ by \cref{eq:root-stability-message-sets}. Thus, $\tilde{F}$ is well-defined (namely, $\bkt{\tilde{\mathbf{p}}_W}_d \neq 0$) because
    \begin{align}
        \bkt{\tilde{\mathbf{p}}_W}_d &\geq \bkt{\mathbf{p}_W}_d - \abs{\bkt{\tilde{\mathbf{p}}_W}_d - \bkt{\mathbf{p}_W}_d} \\
        &\stackclap{(a)}{\geq} \bkt{\mathbf{p}_W}_d - \norm{\mathbf{P}_{Z|W}^{-1}}_1 \sqrt{\frac{\log_e n}{n}} \\
        &\geq \bkt{\mathbf{p}_W}_d - \norm{\mathbf{P}_{Z|W}^{-1}}_1 \frac{1}{\sqrt[4]{n}} \\
        &\stackclap{(b)}{\geq} \bkt{\mathbf{p}_W}_d - \frac{1}{6 \gamma} \\
        &\stackclap{(c)}{\geq} \bkt{\mathbf{p}_W}_d - \frac{1}{2(2d + 1)^d} \\
        &\stackclap{(d)}{\geq} \frac{\bkt{\mathbf{p}_W}_d}{2} \\
        &> 0 \, ,
    \end{align}
    where (a) follows from the sample space restriction \cref{eq:sample-space-restriction}, (b) follows from the choice of $n_0$, (c) follows from the definition of $\gamma$, and (d) follows from \cref{eq:probability-w-d-lower-bound}.
    
    Since the messages $\brc{\theta_i}_{i=1}^d$ are distinct and in $(0, 1)$, the roots $\brc{\xi_i}_{i=1}^d$ are distinct and negative. Since $\theta(\xi) = \frac{1}{1 - \xi}$ is $1$-Lipschitz for $\xi < 0$,
    \begin{align} \label{eq:separation-of-roots}
        \min_{i \neq j} \abs{\xi_i - \xi_j} \geq \min_{i \neq j} \abs{\theta_i - \theta_j} \stackclap{(a)}{\geq} \frac{1}{2d + 1} \, ,
    \end{align}
    where (a) holds due to the padding between message sets. Invoking \cref{lemma:lipschitz-continuity-of-roots} with $\delta = \frac{1}{2d + 1}$, we have
    \begin{align}
        &\equad \min_{\pi \in \Sg_d} \max_{i \in [d]} \abs{\Re \brc{\tilde{\xi}_{\pi(i)}} - \xi_i} \\
        &\leq (2d - 1) d^2 (2d + 1)^{d-1} \prn{2 + \max_{t \in \bktz{d-1}} \frac{\bkt{\mathbf{p}_W}_t}{\bkt{\mathbf{p}_W}_d}}^{2d} \\
        &\qquad \cdot \max_{t \in \bktz{d-1}} \abs{\frac{\bkt{\mathbf{p}_W}_t}{\bkt{\mathbf{p}_W}_d} - \frac{\bkt{\tilde{\mathbf{p}}_W}_t}{\bkt{\tilde{\mathbf{p}}_W}_d}} \\
        &\leq (2d + 1)^d d^2 \prn{2 + \frac{1}{\bkt{\mathbf{p}_W}_d}}^{2d} \underbrace{\max_{t \in \bktz{d-1}} \abs{\frac{\bkt{\mathbf{p}_W}_t}{\bkt{\mathbf{p}_W}_d} - \frac{\bkt{\tilde{\mathbf{p}}_W}_t}{\bkt{\tilde{\mathbf{p}}_W}_d}}}_{\circled{1}} \, .
    \end{align}
    Next, we upper-bound $\circled{1}$:
    \begin{align}
        \circled{1} &\stackclap{(a)}{\leq} \max_{t \in \bktz{d-1}} \brc{\frac{\abs{\bkt{\mathbf{p}_W}_t - \bkt{\tilde{\mathbf{p}}_W}_t}}{\bkt{\tilde{\mathbf{p}}_W}_d} + \frac{\bkt{\mathbf{p}_W}_t \abs{\bkt{\mathbf{p}_W}_d - \bkt{\tilde{\mathbf{p}}_W}_d}}{\bkt{\mathbf{p}_W}_d \bkt{\tilde{\mathbf{p}}_W}_d}} \\
        &\leq \frac{\norm{\mathbf{p}_W - \tilde{\mathbf{p}}_W}_\infty}{\bkt{\tilde{\mathbf{p}}_W}_d} + \frac{\norm{\mathbf{p}_W - \tilde{\mathbf{p}}_W}_\infty}{\bkt{\mathbf{p}_W}_d \bkt{\tilde{\mathbf{p}}_W}_d} \\
        &\stackclap{(b)}{\leq} \prn{1 + \frac{1}{\bkt{\mathbf{p}_W}_d}} \frac{2}{\bkt{\mathbf{p}_W}_d} \norm{\mathbf{p}_W - \tilde{\mathbf{p}}_W}_\infty \\
        &\leq 2 \prn{2 + \frac{1}{\bkt{\mathbf{p}_W}_d}}^2 \norm{\mathbf{p}_W - \tilde{\mathbf{p}}_W}_\infty \, ,
    \end{align}
    where (a) follows from \cref{lemma:difference-of-quotients} and (b) follows from the fact that $\bkt{\tilde{\mathbf{p}}_W}_d \geq \frac{\bkt{\mathbf{p}_W}_d}{2}$. Therefore,
    \begin{align}
        \min_{\pi \in \Sg_d} \max_{i \in [d]} \abs{\Re \brc{\tilde{\xi}_{\pi(i)}} - \xi_i} \leq \frac{2 \gamma}{2d + 1} \norm{\mathbf{p}_W - \tilde{\mathbf{p}}_W}_\infty
    \end{align}
    as desired, using the definition of $\gamma$ and \cref{eq:probability-w-d-lower-bound}.
    
    \emph{Step 2: Upper-bounding stability of roots in sorted order.} By definition of the encoders, $\theta_1 < \cdots < \theta_d$ and therefore $\xi_1 < \cdots < \xi_d$. Consider an ascending order on $\ibrc{\Re \ibrc{\tilde{\xi}_i}}_{i=1}^d$ so that $\Re \ibrc{\tilde{\xi}_1} \leq \cdots \leq \Re \ibrc{\tilde{\xi}_d}$. In this step, we will strengthen the result from Step 1 to
    \begin{align} \label{eq:step-2-result}
        \max_{i \in [d]} \abs{\Re \brc{\tilde{\xi}_i} - \xi_i} \leq \frac{2 \gamma}{2d + 1} \norm{\mathbf{p}_W - \tilde{\mathbf{p}}_W}_\infty \, .
    \end{align}
    
    Let $\sigma$ be a permutation achieving the minimum in \cref{eq:step-1-result}:
    \begin{align}
        \sigma \in \arg \min_{\pi \in \Sg_d} \max_{i \in [d]} \abs{\Re \brc{\tilde{\xi}_{\pi(i)}} - \xi_i} \, .
    \end{align}
    We have
    \begin{align}
        &\equad \max_{i \in [d]} \abs{\Re \brc{\tilde{\xi}_{\sigma(i)}} - \xi_i} \\
        &\stackclap{(a)}{\leq} \frac{2 \gamma}{2d + 1} \norm{\mathbf{p}_W - \tilde{\mathbf{p}}_W}_\infty \\
        &\stackclap{(b)}{\leq} \frac{2 \gamma}{2d + 1} \norm{\mathbf{P}_{Z|W}^{-1}}_1 \sqrt{\frac{\log_e n}{n}} \\
        &\leq \frac{2 \gamma}{2d + 1} \norm{\mathbf{P}_{Z|W}^{-1}}_1 \frac{1}{\sqrt[4]{n}} \\
        &\stackclap{(c)}{\leq} \frac{2 \gamma}{2d + 1} \norm{\mathbf{P}_{Z|W}^{-1}}_1 \prn{6 \gamma \norm{\mathbf{P}_{Z|W}^{-1}}_1}^{-1} \\
        &= \frac{1}{3(2d + 1)} \\
        &< \frac{\delta}{2} \, , \label{eq:root-error-ub-delta}
    \end{align}
    where (a) follows from Step 1, (b) follows from the sample space restriction \cref{eq:sample-space-restriction}, and (c) follows from the choice of $n_0$. Therefore, for any $i \in [d - 1]$,
    \begin{align}
        \Re \brc{\tilde{\xi}_{\sigma(i)}} &\stackclap{(a)}{<} \xi_i + \frac{\delta}{2} \\
        &\stackclap{(b)}{\leq} \xi_{i+1} - \frac{\delta}{2} \\
        &\stackclap{(c)}{<} \Re \brc{\tilde{\xi}_{\sigma(i+1)}} \, ,
    \end{align}
    where (a) and (c) follow from \cref{eq:root-error-ub-delta} and (b) follows from \cref{eq:separation-of-roots}. Thus, $\sigma$ is the identity permutation as desired.
    
    \emph{Step 3: Upper-bounding error in predicted parameters.} In this step, we will show that for every $i \in [d]$,
    \begin{align}
        \abs{\tilde{\theta}_i - \theta_i} < \frac{1}{2(2d + 1)(\abs{\mathcal{M}_i} - 1)} \, .
    \end{align}
    For each $i \in [d]$, we have
    \begin{align}
        \Re \brc{\tilde{\xi}_i} &\leq \xi_i + \abs{\Re \brc{\tilde{\xi}_i} - \xi_i} \\
        &\stackclap{(a)}{=} \prn{1 - \frac{1}{\theta_i}} + \abs{\Re \brc{\tilde{\xi}_i} - \xi_i} \\
        &\stackclap{(b)}{\leq} -\frac{1}{2d} + \max_{j \in [d]} \abs{\Re \brc{\tilde{\xi}_j} - \xi_j} \\
        &\stackclap{(c)}{<} -\frac{1}{2d} + \frac{\delta}{2} \\
        &\stackclap{(d)}{<} 0 \, ,
    \end{align}
    where (a) follows from \cref{eq:root-to-parameter-transform}, (b) holds due to the message sets being padded away from $1$, (c) follows from \cref{eq:root-error-ub-delta}, and (d) follows from the fact that $\delta = \frac{1}{2d + 1}$. 
    
    The ascending order on $\ibrc{\Re \ibrc{\tilde{\xi}_i}}_{i=1}^d$ coincides with the sorted ordering $\tilde{\theta}_1 \leq \cdots \leq \tilde{\theta}_d$ of the estimated Bernoulli parameters in the decoder, because the mapping \cref{eq:estimated-root-to-parameter-transform} is monotonically increasing for $\Re \ibrc{\tilde{\xi}_i} < 0$. Thus,
    \begin{align}
        \abs{\tilde{\theta}_i - \theta_i} &\stackclap{(a)}{\leq} \abs{\Re \brc{\tilde{\xi}_i} - \xi_i} \\
        &\stackclap{(b)}{\leq} \frac{2 \gamma}{2d + 1} \norm{\mathbf{p}_W - \tilde{\mathbf{p}}_W}_\infty \\
        &\stackclap{(c)}{\leq} \frac{2 \gamma}{2d + 1} \norm{\mathbf{P}_{Z|W}^{-1}}_1 \sqrt{\frac{\log_e n}{n}} \\
        &\stackclap{(d)}{\leq} \frac{n^{-\frac{1}{2} + \alpha}}{2(2d + 1)} \\
        &\stackclap{(e)}{<} \frac{1}{2(2d + 1)(\abs{\mathcal{M}_i} - 1)} \, ,
    \end{align}
    where (a) follows from the fact that $u \mapsto \frac{1}{1 - u}$ is $1$-Lipschitz for $u < 0$, (b) follows from \cref{eq:step-2-result}, (c) follows from the sample space restriction \cref{eq:sample-space-restriction}, (d) follows because $n_0 \geq n_1$, and (e) follows because $\abs{\mathcal{M}_i} = n^{\frac{1}{2} - \alpha}$.
    
    \emph{Step 4: Upper-bounding probability of decoding error.} By definition of the message sets and decoder, $\hat{\theta}_i \neq \theta_i$ only if $\iabs{\tilde{\theta}_i - \theta_i} \geq \frac{1}{2(2d + 1)(\abs{\mathcal{M}_i} - 1)}$. Therefore,
    \begin{align}
        &\equad \P{\exists i \in [d], \, \hat{\theta}_i \neq \theta_i} \\
        &\leq \P{\exists i \in [d], \, \abs{\tilde{\theta}_i - \theta_i} \geq \frac{1}{2(2d + 1)(\abs{\mathcal{M}_i} - 1)}} \\
        &\stackclap{(a)}{\leq} \frac{2(d + 1)}{n^2} \\
        &\stackclap{(b)}{\leq} \epsilon \, ,
    \end{align}
    where (a) follows from the results of Steps 0 and 3 and (b) holds because $n \geq n_0 \geq \sqrt{\frac{2(d + 1)}{\epsilon}}$. Finally, taking expectation with respect to the messages yields $\Perror{n} \leq \epsilon$ as desired. 
\end{proof}

Below, we present the technical lemmas used in the proof of \cref{theorem:binary-achievability-root-stability}. The first lemma provides an upper bound on the root stability of a monic polynomial:

\begin{lemma}[Lipschitz Continuity of Roots] \label{lemma:lipschitz-continuity-of-roots}
    Let $f, g: \mathbb{C} \rightarrow \mathbb{C}$ be monic polynomials of degree $d$ with real coefficients. Write:
    \begin{align}
        f(x) &= x^d + \sum_{t=0}^{d-1} a_t x^t = \prod_{i=1}^d (x - \lambda_i) \, , \\
        g(x) &= x^d + \sum_{t=0}^{d-1} b_t x^t = \prod_{i=1}^d (x - \mu_i) \, .
    \end{align}
    Assume $f$ has all distinct, real, negative roots (and thus all positive coefficients). Assume $\min_{i \neq j} \abs{\lambda_i - \lambda_j} \geq \delta$ for some constant $\delta > 0$. Then
    \begin{align}
        &\equad \min_{\pi \in \Sg_d} \max_{i \in [d]} \abs{\Re \brc{\mu_{\pi(i)}} - \lambda_i} \\
        &\leq \frac{(2d - 1) d^2}{\delta^{d-1}} \prn{2 + \max_{t \in \bktz{d-1}} a_t}^{2d} \max_{t \in \bktz{d-1}} \abs{a_t - b_t} \, .
    \end{align}
\end{lemma}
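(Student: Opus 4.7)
The plan is to reduce the polynomial root perturbation problem to a matrix eigenvalue perturbation problem and invoke the Bauer--Fike theorem on companion matrices. Let $C_f, C_g \in \R^{d \times d}$ be the Frobenius companion matrices of $f$ and $g$, whose eigenvalue multisets are precisely $\brc{\lambda_i}_{i=1}^d$ and $\brc{\mu_i}_{i=1}^d$. Since $C_f - C_g$ has all but one column vanishing and that column has entries $a_t - b_t$, we immediately obtain $\norm{C_f - C_g}_2 \leq \frnorm{C_f - C_g} \leq \sqrt{d}\,\max_t \abs{a_t - b_t}$. The crucial structural fact is that $f$'s distinct roots make $C_f$ diagonalizable as $C_f = V D V^{-1}$, where $V \in \R^{d \times d}$ is the Vandermonde matrix with nodes $\lambda_1, \ldots, \lambda_d$ and $D = \diag(\lambda_1, \ldots, \lambda_d)$.

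Next I would apply the Bauer--Fike theorem: for each eigenvalue $\mu$ of $C_g$ there is some $\lambda_i$ with $\abs{\mu - \lambda_i} \leq \kappa_2(V)\,\norm{C_f - C_g}_2$, where $\kappa_2(V) = \norm{V}_2 \norm{V^{-1}}_2$. I would bound the two factors separately. The entries $\lambda_i^{k-1}$ populating $V$ give $\norm{V}_2 \leq d \prn{1 + \max_i \abs{\lambda_i}}^{d-1}$, while a classical Gautschi-type inversion formula produces $\norm{V^{-1}}_\infty \leq \max_i \prod_{j \neq i} \frac{1 + \abs{\lambda_j}}{\abs{\lambda_i - \lambda_j}} \leq \prn{1 + \max_i \abs{\lambda_i}}^{d-1}/\delta^{d-1}$, exploiting the separation hypothesis. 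Cauchy's classical root bound for monic polynomials then yields $\max_i \abs{\lambda_i} \leq 1 + \max_t a_t$ (all $a_t$ are positive under our assumptions), so $1 + \max_i \abs{\lambda_i} \leq 2 + \max_t a_t$. Assembling these pieces produces the stated right-hand side, with the prefactor $(2d-1)d^2 (2 + \max_t a_t)^{2d}/\delta^{d-1}$ absorbing the remaining polynomial-in-$d$ and operator-versus-Frobenius-norm constants.

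The subtlest step --- the one I expect to be the main obstacle --- is upgrading the per-eigenvalue Bauer--Fike bound to a simultaneous bound under a single permutation $\pi \in \Sg_d$. When this bound lies strictly below $\delta/2$, the open disks of radius $\delta/2$ around each $\lambda_i$ are pairwise disjoint and each contains at least one $\mu_j$; I would then argue via the homotopy $f_t = (1-t)f + t g$ that the roots of $f_t$ move continuously in $t$ and never cross the boundary of their enclosing disk (since Bauer--Fike applied to $C_f$ versus $C_{f_t}$ at intermediate $t$ yields an even smaller bound), so each disk retains exactly one root of $g = f_1$, furnishing the desired bijection. When the Bauer--Fike bound instead exceeds $\delta/2$, the target right-hand side already dominates $\max_i \abs{\lambda_i} + \max_j \abs{\mu_j}$ (each bounded via Cauchy applied to $f$ and $g$ and the estimate $\abs{b_t} \leq a_t + \abs{a_t - b_t}$), so the inequality holds trivially under any bijection. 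Finally, since each $\lambda_i$ is real, $\abs{\Re\brc{\mu} - \lambda_i} \leq \abs{\mu - \lambda_i}$ transfers the estimate to the real-part form stated in the lemma.
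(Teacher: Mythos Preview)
Your approach is essentially the paper's: companion matrices, Vandermonde diagonalization, Bauer--Fike, Gautschi's bound on $\norm{V^{-1}}$, and a root-magnitude bound. The one substantive difference is \emph{which} Bauer--Fike you invoke. The paper cites the Stewart--Sun variant (their Theorem~3.3, Chapter~IV) that bounds the \emph{matching distance} directly, producing the factor $(2d-1)$ and using $\ell^1$ operator norms throughout; this is exactly where the $(2d-1)$ in the statement comes from, and it dispenses with any case split or continuity argument. You instead use the standard per-eigenvalue Bauer--Fike and then manufacture a bijection by a homotopy in the ``small perturbation'' regime and a crude magnitude bound in the ``large perturbation'' regime. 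The homotopy argument is fine. The trivial case, however, is not fully justified as written: you need the stated right-hand side to dominate $\max_i\abs{\lambda_i}+\max_j\abs{\mu_j}$ whenever your Bauer--Fike quantity exceeds $\delta/2$, and with the constants you have sketched (and the $\delta^{-(d-1)}$ in the denominator versus only $\delta/2$ as your hypothesis) this does not fall out automatically---you would need to track the constants more carefully, or observe that $\delta \leq 2(1+\max_t a_t)$ and chase the inequalities. The paper's route sidesteps this entirely.

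Two minor points: the paper bounds $\abs{\lambda_i}$ via Gershgorin discs on the companion matrix rather than Cauchy's bound (both yield $1+\max_t a_t$); and the paper's eigendecomposition is $\mathbf{C}(f)=\mathbf{V}^{-1}\mathbf{D}\mathbf{V}$ (the Vandermonde rows are \emph{left} eigenvectors of the companion matrix in their convention), so check your $C_f = VDV^{-1}$ against whichever companion-matrix orientation you are using.
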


\begin{proof}
    Let $\mathbf{C}(f) \in \R^{d \times d}$ and $\mathbf{C}(g) \in \R^{d \times d}$ be the Frobenius companion matrices of $f$ and $g$, respectively. For example,
    \begin{align}
        \mathbf{C}(f) = \begin{bmatrix}
            0 & 0 & \cdots & 0 & -a_0 \\
            1 & 0 & \cdots & 0 & -a_1 \\
            0 & 1 & \cdots & 0 & -a_2 \\
            \vdots & \vdots & \ddots & \vdots & \vdots \\
            0 & 0 & \cdots & 1 & -a_{d-1}
        \end{bmatrix} .
    \end{align}
    The eigenvalues of $\mathbf{C}(f)$ and $\mathbf{C}(g)$ are $\brc{\lambda_i}_{i=1}^d$ and $\brc{\mu_i}_{i=1}^d$, respectively. Let $\mathbf{V}  \in \R^{d \times d}$ be the Vandermonde matrix with parameters $(\lambda_1, \dots, \lambda_d)$:
    \begin{align}
        \mathbf{V} = \begin{bmatrix}
            1 & \lambda_1 & \lambda_1^2 & \cdots & \lambda_1^{d-1} \\
            1 & \lambda_2 & \lambda_2^2 & \cdots & \lambda_2^{d-1} \\
            1 & \lambda_3 & \lambda_3^2 & \cdots & \lambda_3^{d-1} \\
            \vdots & \vdots & \vdots & \ddots & \vdots \\
            1 & \lambda_d & \lambda_d^2 & \cdots & \lambda_d^{d-1}
        \end{bmatrix} .
    \end{align}
    Let $\mathbf{D} = \mathbf{diag}(\lambda_1, \dots, \lambda_d) \in \R^{d \times d}$. We have the eigendecomposition $\mathbf{C}(f) = \mathbf{V}^{-1} \mathbf{D} \mathbf{V}$:
    \begin{align}
        \mathbf{V} \cdot \mathbf{C}(f) &= \begin{bmatrix}
            \lambda_1 & \lambda_1^2 & \cdots & \lambda_1^{d-1} & -\sum_{t=0}^{d-1} a_t \lambda_1^t \\
            \lambda_2 & \lambda_2^2 & \cdots & \lambda_2^{d-1} & -\sum_{t=0}^{d-1} a_t \lambda_2^t \\
            \vdots & \vdots & \ddots & \vdots & \vdots \\
            \lambda_d & \lambda_d^2 & \cdots & \lambda_d^{d-1} & -\sum_{t=0}^{d-1} a_t \lambda_d^t
        \end{bmatrix} \\
        &\stackclap{(a)}{=} \begin{bmatrix}
            \lambda_1 & \lambda_1^2 & \cdots & \lambda_1^{d-1} & \lambda_1^d - f(\lambda_1) \\
            \lambda_2 & \lambda_2^2 & \cdots & \lambda_2^{d-1} & \lambda_2^d - f(\lambda_2) \\
            \vdots & \vdots & \ddots & \vdots & \vdots \\
            \lambda_d & \lambda_d^2 & \cdots & \lambda_d^{d-1} & \lambda_d^d - f(\lambda_d)
        \end{bmatrix} \\
        &\stackclap{(b)}{=} \begin{bmatrix}
            \lambda_1 & \lambda_1^2 & \cdots & \lambda_1^{d-1} & \lambda_1^d \\
            \lambda_2 & \lambda_2^2 & \cdots & \lambda_2^{d-1} & \lambda_2^d \\
            \vdots & \vdots & \ddots & \vdots & \vdots \\
            \lambda_d & \lambda_d^2 & \cdots & \lambda_d^{d-1} & \lambda_d^d \\
        \end{bmatrix} \\
        &= \mathbf{D} \mathbf{V} \, ,
    \end{align}
    where (a) follows from rearranging $f$ and (b) follows because $\brc{\lambda_i}_{i=1}^d$ are roots of $f$.
    Our proof uses three standard results from the matrix analysis literature, which we restate below for convenience.
    
    \begin{lemma}[Bauer-Fike Theorem {\cite[Theorem 3.3, Chapter IV]{StewartSun1990}}] \label{lemma:bauer-fike}
        Let $\mathbf{A} \in \mathbb{R}^{d \times d}$ have eigenvalues $\brc{\lambda_i}_{i=1}^d$ and eigendecomposition $\mathbf{A} = \mathbf{X} \boldsymbol{\Lambda} \mathbf{X}^{-1}$. Let $\mathbf{B} \in \mathbb{R}^{d \times d}$ have eigenvalues $\brc{\mu_i}_{i=1}^d$. Define the matching distance \cite[Definition 1.2, Chapter IV]{StewartSun1990} between the eigenvalues of $\mathbf{A}$ and $\mathbf{B}$ as
        \begin{align}
            \mathrm{md}(\mathbf{A}, \mathbf{B}) = \min_{\pi \in \Sg_d} \max_{i \in [d]} \abs{\mu_{\pi(i)} - \lambda_i} \, .
        \end{align}
        Then the matching distance satisfies the upper bound
        \begin{align}
            \mathrm{md}(\mathbf{A}, \mathbf{B}) \leq (2d - 1) \norm{\mathbf{X}}_1 \norm{\mathbf{X}^{-1}}_1 \norm{\mathbf{B} - \mathbf{A}}_1 \, .
        \end{align}
    \end{lemma}
    
    \begin{lemma}[Inverse Vandermonde Norm {\cite[Theorem 2.1]{Gautschi1990}}] \label{lemma:vandermonde-l1-norm}
        Let $\brc{x_i}_{i=1}^d$ be distinct real numbers. Let $\mathbf{V} \in \mathbb{R}^{d \times d}$ be the Vandermonde matrix with parameters $(x_1, \dots, x_d)$. Then
        \begin{align}
            \norm{\mathbf{V}^{-1}}_1 \leq \max_{i \in [d]} \prod_{j \neq i} \frac{1 + \abs{x_j}}{\abs{x_i - x_j}} \, .
        \end{align}
    \end{lemma}
    
    \begin{lemma}[Gershgorin Circle Theorem {\cite[Theorem 6.1.1]{HornJohnson2013}}] \label{lemma:gershgorin-circle-theorem}
        For any $\mathbf{A} \in \mathbb{R}^{d \times d}$, the eigenvalues of $\mathbf{A}$ are in the union of Gershgorin discs
        \begin{align}
            \bigcup_{i=1}^d \brc{z \in \mathbb{C}: \abs{z - \bkt{\mathbf{A}}_{i,i}} \leq \sum_{j \neq i} \abs{\bkt{\mathbf{A}}_{i,j}}} \, .
        \end{align}
    \end{lemma}
    
    Invoking \cref{lemma:bauer-fike} on $\mathbf{C}(f)$ and $\mathbf{C}(g)$, we have
    \begin{align}
        &\equad \min_{\pi \in \Sg_d} \max_{i \in [d]} \abs{\mu_{\pi(i)} - \lambda_i} \\
        &\leq (2d - 1) \norm{\mathbf{V}^{-1}}_1 \norm{\mathbf{V}}_1 \norm{\mathbf{C}(g) - \mathbf{C}(f)}_1 \, .
    \end{align}
    Next, we upper-bound $\norm{\mathbf{V}^{-1}}_1$. The Gershgorin discs $\brc{D_i}_{i=1}^d$ of $\mathbf{C}(f)$ are
    \begin{align}
        D_i &= \brc{z \in \mathbb{C}: \abs{z - \bkt{\mathbf{C}(f)}_{i,i}} \leq \sum_{j \neq i} \abs{\bkt{\mathbf{C}(f)}_{i,j}}} \\
        &= \begin{cases}
            \brc{z \in \mathbb{C}: \abs{z} \leq \abs{a_0}} \, , & \text{if $i = 1$} \\
            \brc{z \in \mathbb{C}: \abs{z} \leq 1 + \abs{a_{i-1}}} \, , & \text{if $i \in \bktz{2, d - 1}$} \\
            \brc{z \in \mathbb{C}: \abs{z + a_{d-1}} \leq 1} \, , & \text{if $i = d$}
        \end{cases} \\
        &\stackclap{(a)}{\subseteq} \brc{z \in \mathbb{C}: \abs{z} \leq 1 + \abs{a_{i-1}}} \, , \label{eq:gershgorin-discs}
    \end{align}
    where (a) follows in the $i = d$ case by the triangle inequality:
    \begin{align}
        \abs{z} &= \abs{z + a_{d-1} - a_{d-1}} \\
        &\leq \abs{z + a_{d-1}} + \abs{a_{d-1}} \leq 1 + \abs{a_{d-1}} \, .
    \end{align}
    Therefore,
    \begin{align}
        \norm{\mathbf{V}^{-1}}_1 &\stackclap{(a)}{\leq} \max_{i \in [d]} \prod_{j \neq i} \frac{1 + \abs{\lambda_j}}{\abs{\lambda_i - \lambda_j}} \\
        &\stackclap{(b)}{\leq} \max_{i \in [d]} \prod_{j \neq i} \frac{2 + \max_{t \in \bktz{d-1}} \abs{a_t}}{\abs{\lambda_i - \lambda_j}} \\
        &\stackclap{(c)}{\leq} \max_{i \in [d]} \prod_{j \neq i} \frac{2 + \max_{t \in \bktz{d-1}} a_t}{\delta} \\
        &= \frac{\prn{2 + \max_{t \in \bktz{d-1}} a_t}^{d-1}}{\delta^{d-1}} \, ,
    \end{align}
    where (a) follows from \cref{lemma:vandermonde-l1-norm}, (b) follows from \cref{lemma:gershgorin-circle-theorem,eq:gershgorin-discs}, and (c) follows from the assumption that $f$ has all positive coefficients and roots separated by at least $\delta$.

    Next, we upper-bound $\norm{\mathbf{V}}_1$:
    \begin{align}
        \norm{\mathbf{V}}_1 &= \max_{j \in [d]} \sum_{i=1}^d \abs{\lambda_i}^{j-1} \\
        &\stackclap{(a)}{=} \max \brc{d, \sum_{i=1}^d \abs{\lambda_i}^{d-1}} \\
        &\stackclap{(b)}{\leq} \max \brc{d, \sum_{i=1}^d \prn{1 + \max_{t \in \bktz{d-1}} \abs{a_t}}^{d-1}} \\
        &= d \prn{1 + \max_{t \in \bktz{d-1}} \abs{a_t}}^{d-1} \\
        &\stackclap{(c)}{=} d \prn{1 + \max_{t \in \bktz{d-1}} a_t}^{d-1},
    \end{align}
    where (a) follows because a sum of exponentials is convex and the maximum of a convex function is attained at an endpoint of its domain, (b) follows from \cref{lemma:gershgorin-circle-theorem,eq:gershgorin-discs}, and (c) follows from the assumption that $f$ has all positive coefficients.
    
    By inspection, $\norm{\mathbf{C}(g) - \mathbf{C}(f)}_1 \leq d \max_{t \in \bktz{d-1}} \abs{a_t - b_t}$. Therefore,
    \begin{align}
        &\equad \min_{\pi \in \Sg_d} \max_{i \in [d]} \abs{\Re \brc{\mu_{\pi(i)}} - \lambda_i} \\
        &\stackclap{(a)}{\leq} \min_{\pi \in \Sg_d} \max_{i \in [d]} \abs{\mu_{\pi(i)} - \lambda_i} \\
        &\leq \frac{(2d - 1) d^2}{\delta^{d-1}} \prn{2 + \max_{t \in \bktz{d-1}} a_t}^{2d} \max_{t \in \bktz{d-1}} \abs{a_t - b_t}
    \end{align}
    as desired, where (a) follows because $\brc{\lambda_i}_{i=1}^d \subset \mathbb{R}$.
\end{proof}

The second lemma is an upper bound on the absolute difference between two quotients:

\begin{lemma}[Difference of Quotients] \label{lemma:difference-of-quotients}
    For all $x_1, x_2 \in \mathbb{R}$ and $y_1, y_2 \neq 0$,
    \begin{align}
        \abs{\frac{x_1}{y_1} - \frac{x_2}{y_2}} \leq \frac{\abs{x_1 - x_2}}{\abs{y_2}} + \frac{\abs{x_1} \abs{y_1 - y_2}}{\abs{y_1} \abs{y_2}} \, .
    \end{align}
\end{lemma}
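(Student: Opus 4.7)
The statement is a routine algebraic inequality: it estimates the difference of two quotients by two terms, one measuring the discrepancy in the numerators (scaled by $|y_2|^{-1}$) and one measuring the discrepancy in the denominators (scaled by $|x_1|/(|y_1||y_2|)$). The natural plan is a standard ``add and subtract'' argument, paired with the triangle inequality, where the quantity inserted and subtracted is chosen so that the two resulting pieces match the right-hand side of the inequality exactly.

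Concretely, I would first bring the two fractions to a common denominator, writing
\begin{align}
    \frac{x_1}{y_1} - \frac{x_2}{y_2} = \frac{x_1 y_2 - x_2 y_1}{y_1 y_2} \, ,
\end{align}
which is valid because $y_1, y_2 \neq 0$. Next, in the numerator I would add and subtract $x_1 y_1$ to rewrite
\begin{align}
    x_1 y_2 - x_2 y_1 = x_1 (y_2 - y_1) + y_1 (x_1 - x_2) \, ,
\end{align}
so that the difference splits cleanly as
\begin{align}
    \frac{x_1}{y_1} - \frac{x_2}{y_2} = \frac{x_1 - x_2}{y_2} + \frac{x_1 (y_2 - y_1)}{y_1 y_2} \, .
\end{align}
Applying the triangle inequality and the multiplicativity of the absolute value on each of the two resulting summands then yields exactly the stated bound, since $|y_2 - y_1| = |y_1 - y_2|$.

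There is no real obstacle here; the only ``choice'' in the argument is which cross term to add and subtract, and the form of the right-hand side dictates the choice uniquely (adding $x_1 y_1$ rather than $x_2 y_2$ is what produces the factor $|x_1|$ in the second term and the clean factor $|y_2|^{-1}$ in the first). The proof fits in a couple of lines and does not rely on any of the earlier machinery of the paper.
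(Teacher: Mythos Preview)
Your proof is correct and is essentially the same as the paper's: the paper adds and subtracts $\frac{x_1}{y_2}$ directly (writing $\frac{x_1}{y_1} - \frac{x_2}{y_2} = \bigl(\frac{x_1}{y_2} - \frac{x_2}{y_2}\bigr) + \bigl(\frac{x_1}{y_1} - \frac{x_1}{y_2}\bigr)$) and then applies the triangle inequality, which is algebraically identical to your common-denominator route with the cross term $x_1 y_1$.
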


\begin{proof}
    Observe via the triangle inequality that
    \begin{align}
        \abs{\frac{x_1}{y_1} - \frac{x_2}{y_2}} &= \abs{\frac{x_1}{y_2} - \frac{x_2}{y_2} + \frac{x_1}{y_1} - \frac{x_1}{y_2}} \\
        &\leq \abs{\frac{x_1}{y_2} - \frac{x_2}{y_2}} + \abs{\frac{x_1}{y_1} - \frac{x_1}{y_2}} \\
        &= \frac{\abs{x_1 - x_2}}{\abs{y_2}} + \abs{x_1} \abs{\frac{1}{y_1} - \frac{1}{y_2}} \\
        &= \frac{\abs{x_1 - x_2}}{\abs{y_2}} + \frac{\abs{x_1} \abs{y_1 - y_2}}{\abs{y_1} \abs{y_2}} \, .
    \end{align}
\end{proof}

\section{Proof of General Achievability Using Time Sharing} \label{section:general-achievability}

In this section, we prove \cref{theorem:general-achievability}. Our argument makes use of three auxiliary results (\cref{lemma:y-marginal,lemma:c-tilde-inverse,lemma:minimum-singular-value}), which we prove at the end of this section and in \cref{appendix:auxiliary-results}. We also provide alternative justifications (\cref{lemma:cardinality,lemma:subsegment-well-definedness}) in \cref{appendix:auxiliary-results} for some parts of our proof.

\begin{proof}[Proof of \cref{theorem:general-achievability}]
    By definition of permutation capacity region, it suffices to show that for all $\alpha > 0$, any rate $d$-tuple satisfying
    \begin{align}
        \sum_{i=1}^d R_i = \frac{d(p - 1)}{2} - \alpha \enspace \text{and} \enspace \forall i \in [d], \, R_i > 0
    \end{align}
    is achievable. Fix $\alpha > 0$ and $(R_1, \dots, R_d) \in \mathbb{R}_+^d$ satisfying the above. By definition of achievable rate tuples, we want to show that
    \begin{align}
        \forall \epsilon > 0, \, \exists n_0 \in \mathbb{N}, \, \forall n \geq n_0, \, \Perror{n} \leq \epsilon \, .
    \end{align}
    Fix $\epsilon > 0$. Choose
    \begin{align}
        n_0 = \max \Biggl\{ &(p - 1)^{\frac{2d^2 (p - 1)}{\alpha}}, p^{\frac{d(p - 1)^2}{R_d}}, 2^{\frac{d(p - 1)}{\min_{i \in [d]} R_i}}, \\
        &n_1, \sqrt{\frac{2(d(p - 1) + 1)}{\epsilon}} \Biggr\} \, ,
    \end{align}
    where $n_1 \in \mathbb{N}$ is sufficiently large such that
    \begin{align}
        \forall n \geq n_1, \, \frac{\sigma_{\min}^2 \prn{\mathbf{P}_{Z|W}}}{2 \sqrt{2} \prn{dp + 1}^{\frac{7}{2}} d} \, n^{-\frac{1}{2} + \frac{\alpha}{2d(p - 1)}} \geq \sqrt{\frac{\log_e n}{n}} \, .
    \end{align}
    Such an $n_1$ exists because
    \begin{align}
        \lim_{n \rightarrow \infty} \frac{n^{-\frac{1}{2} + \frac{\alpha}{2d(p - 1)}}}{\sqrt{\frac{\log_e n}{n}}} = \lim_{n \rightarrow \infty} \frac{n^{\frac{\alpha}{2d(p - 1)}}}{\sqrt{\log_e n}} = \infty \, .
    \end{align}
    Fix $n \geq n_0$. For notational simplicity, let
    \begin{align}
        m_i = n^{\frac{1}{d(p - 1)} \prn{R_i + \frac{\alpha}{2d}}}
    \end{align}
    for each $i \in [d]$, and assume without loss of generality that $m_i \in \mathbb{N}$. Consider the following message sets, encoders, and decoder.
    
    \textbf{Message sets.} Without loss of generality, let each message in $\mathcal{M}_i$ be a $d$-tuple of points, where each point lies on a lattice $\mathcal{L}_i$ embedded in the $(p - 1)$-dimensional probability simplex, as visualized in \cref{figure:message-set}. (Contrary to the situation in the proof of \cref{theorem:binary-achievability-root-stability}, there is no need to specifically incorporate padding around the boundary of the simplex.) Formally,
    \begin{align}
        \forall i \in [d], \enspace \mathcal{M}_i &= \mathcal{L}_i^d \, , \\
        \forall i \in [d], \enspace \mathcal{L}_i &= \brc{\brc{\theta_k}_{k=0}^{p-1} \in \Theta_i^p: \sum_{k=0}^{p-1} \theta_k = 1} \, , \\
        \forall i \in [d - 1], \enspace \Theta_i &= \brc{\frac{\ell_i}{m_i - 1}: \ell_i \in \bktz{m_i - 1}} \, , \\
        \Theta_d &= \brc{\frac{\ell_d}{m_d}: \ell_d \in \bktz{m_d - 1}} \, . \label{eq:message-sets}
    \end{align}
    The denominators $m_i - 1$ are positive, and thus the message sets are well-defined, because
    \begin{align}
        m_i &= n^{\frac{1}{d(p - 1)} \prn{R_i + \frac{\alpha}{2d}}} \\
        &\geq n_0^{\frac{1}{d(p - 1)} \prn{R_i + \frac{\alpha}{2d}}} \\
        &\geq \prn{2^{\frac{d(p - 1)}{\min_{i \in [d]} R_i}}}^{\frac{1}{d(p - 1)} \prn{R_i + \frac{\alpha}{2d}}} \\
        &\geq 2
    \end{align}
    for each $i \in [d]$. We represent a message $\boldsymbol{\mu}_i \in \mathcal{M}_i$ sent by sender $i$ as a flattened $(dp)$-tuple of triple-subscripted variables, whose latter two subscripts arise from the definitions $\mathcal{M}_i = \mathcal{L}_i^d$ and $\mathcal{L}_i \subset \Theta_i^p$:
    \begin{align}
        \boldsymbol{\mu}_i &= \brc{\theta_{i,c,k}}_{c=1, k=0}^{d, p-1} \\
        &= (\theta_{i,1,0}, \dots, \theta_{i,1,p-1}, \dots, \theta_{i,d,0}, \dots, \theta_{i,d,p-1}) \, .
    \end{align}
    
    \begin{figure}
        \centering
        \includegraphics[width=0.75\linewidth]{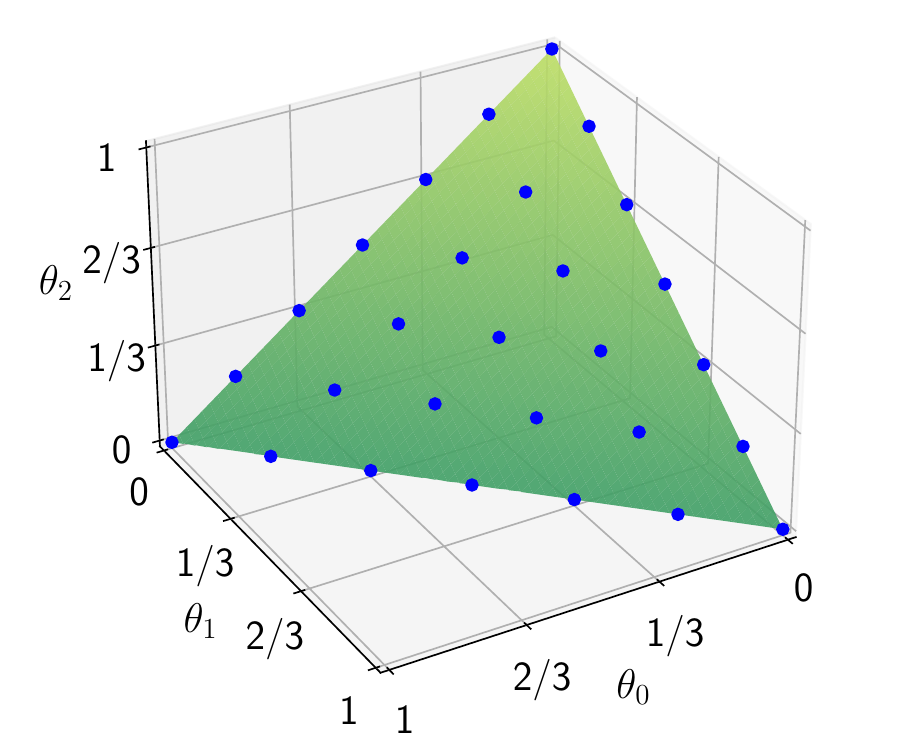}
        \caption{Visualization of the lattice $\mathcal{L}_i \subset \mathcal{S}_{p-1}$ in the case $i < d$, $p = 3$, and $m_i = 7$. The green surface is $\mathcal{S}_2$ and the blue dots are $\mathcal{L}_i$.}
        \label{figure:message-set}   % label must be declared after caption - https://latex.org/forum/viewtopic.php?t=3659
    \end{figure}
    
    \textbf{Proof of cardinality.} In this part, we verify that $\brc{\mathcal{M}_i}_{i=1}^d$ satisfies the definition of rate $d$-tuple, namely
    \begin{align}
        \forall i \in [d], \enspace \abs{\mathcal{M}_i} \geq n^{R_i} \, .
    \end{align}

    \emph{Case 1: $i < d$.} Each $\brc{\theta_k}_{k=0}^{p-1} \in \mathcal{L}_i$ corresponds to a \emph{weak $p$-composition} \cite[p. 18]{Stanley2011} of $m_i - 1$. Therefore,
    \begin{align}
        \abs{\mathcal{M}_i} = \abs{\mathcal{L}_i}^d \stackclap{(a)}{=} \multibinom{m_i}{p - 1}^d \, ,
    \end{align}
    where (a) holds by the \emph{stars and bars} method \cite[p. 19]{Stanley2011}. (\cref{lemma:cardinality} provides an alternative proof of this fact using a \emph{prefix sum} argument.) The message set size satisfies the lower bound
    \begin{align}
        \abs{\mathcal{M}_i} \stackclap{(a)}{\geq} \binom{m_i}{p - 1}^d \stackclap{(b)}{\geq} \frac{m_i^{d(p-1)}}{(p - 1)^{d(p-1)}} = \frac{n^{R_i + \frac{\alpha}{2d}}}{(p - 1)^{d(p-1)}} \, , \label{eq:mi-lower-bound}
    \end{align}
    where (a) holds because all subsets are trivially multisubsets and (b) follows from the lower bound on binomial coefficients in \cite[Eq. C.5]{CormenLeisersonRivestStein2022}. The message set size also asymptotically satisfies the upper bound
    \begin{align}
        \abs{\mathcal{M}_i} &= \binom{m_i + p - 2}{p - 1}^d \\
        &\stackclap{(a)}{\leq} e^{d(p-1)} \frac{(m_i + p - 2)^{d(p-1)}}{(p - 1)^{d(p-1)}} \\
        &\stackclap{(b)}{\lesssim} e^{d(p-1)} \frac{(2 m_i)^{d(p-1)}}{(p - 1)^{d(p-1)}} \\
        &\stackclap{(c)}{=} \prn{\frac{2e}{p - 1}}^{d(p-1)} n^{R_i + \frac{\alpha}{2d}} \, ,
    \end{align}
    where (a) follows from the upper bound on binomial coefficients in \cite[Eq. C.6]{CormenLeisersonRivestStein2022}, (b) holds because $p$ is constant with respect to $n$, and (c) follows from the definition of $m_i$. Hence, we remark that the message set size satisfies $\abs{\mathcal{M}_i} = \bigTheta{n^{R_i + \frac{\alpha}{2d}}}$. Lastly, $\abs{\mathcal{M}_i} \geq n^{R_i}$ follows from \cref{eq:mi-lower-bound} as desired because
    \begin{align}
        n \geq n_0 \geq (p - 1)^{\frac{2d^2 (p - 1)}{\alpha}} \, .
    \end{align}

    \emph{Case 2: $i = d$.} Each $\brc{\theta_k}_{k=0}^{p-1} \in \mathcal{L}_d$ corresponds to a weak $p$-composition of $m_d$, with the property $(*)$ that each part in the composition is less than $m_d$. Therefore,
    \begin{align}
        \abs{\mathcal{M}_d} = \abs{\mathcal{L}_d}^d \stackclap{(a)}{=} \prn{\multibinom{m_d + 1}{p - 1} - p}^d \, ,
    \end{align}
    where (a) holds because there are $p$ weak $p$-compositions of $m_d$ which do not satisfy $(*)$.\footnote{A weak $p$-composition of $m_d$ does not satisfy $(*)$ iff it contains one $m_d$ part and all other parts $0$; the $p$ such compositions place the $m_d$ part at different indices within the composition.} The message set size satisfies the lower bound
    \begin{align}
        \abs{\mathcal{M}_d} &\stackclap{(a)}{\geq} \prn{\binom{m_d + 1}{p - 1} - p}^d \\
        &\stackclap{(b)}{\geq} \prn{\frac{(m_d + 1)^{p-1}}{(p - 1)^{p-1}} - p}^d \\
        &\stackclap{(c)}{\geq} \prn{\frac{m_d^{p-1}}{(p - 1)^{p-1}} + \frac{m_d}{(p - 1)^{p-2}} - p}^d \\
        &\stackclap{(d)}{\geq} \frac{m_d^{d(p - 1)}}{(p - 1)^{d(p - 1)}} \\
        &= \frac{n^{R_d + \frac{\alpha}{2d}}}{(p - 1)^{d(p - 1)}} \, , \label{eq:md-lower-bound}
    \end{align}
    where (a) holds because all subsets are trivially multisubsets, (b) follows from the lower bound on binomial coefficients in \cite[Eq. C.5]{CormenLeisersonRivestStein2022}, (c) follows from the binomial theorem, and (d) holds because $n \geq p^{\frac{d(p - 1)^2}{R_d}}$ and thus
    \begin{align}
        \frac{m_d}{(p - 1)^{p-2}} &= \frac{n^{\frac{1}{d(p - 1)} \prn{R_d + \frac{\alpha}{2d}}}}{(p - 1)^{p-2}} \geq \frac{p^{p-1}}{(p - 1)^{p-2}} \geq p \, .
    \end{align}
    The message set size also asymptotically satisfies the upper bound
    \begin{align}
        \abs{\mathcal{M}_d} &= \prn{\binom{m_d + p - 1}{p - 1} - p}^d \\
        &\stackclap{(a)}{\leq} e^{d(p-1)} \frac{(m_d + p - 1)^{d(p-1)}}{(p - 1)^{d(p-1)}} \\
        &\stackclap{(b)}{\lesssim} e^{d(p-1)} \frac{(2 m_d)^{d(p-1)}}{(p - 1)^{d(p-1)}} \\
        &\stackclap{(c)}{=} \prn{\frac{2e}{p - 1}}^{d(p-1)} n^{R_d + \frac{\alpha}{2d}} \, ,
    \end{align}
    where (a) follows from the upper bound on binomial coefficients in \cite[Eq. C.6]{CormenLeisersonRivestStein2022}, (b) holds because $p$ is constant with respect to $n$, and (c) follows from the definition of $m_d$. Hence, we remark that the message set size satisfies $\abs{\mathcal{M}_d} = \bigTheta{n^{R_d + \frac{\alpha}{2d}}}$. Lastly, $\abs{\mathcal{M}_d} \geq n^{R_d}$ follows from \cref{eq:md-lower-bound} as desired because
    \begin{align}
        n \geq n_0 \geq (p - 1)^{\frac{2d^2 (p - 1)}{\alpha}} \, .
    \end{align}
    
    \textbf{Codeword segmentation.} For simplicity, assume $n$ is a multiple of $d$. In our analysis, we partition the codeword indices $1$ through $n$ into $d$ contiguous segments, each of length $\frac{n}{d}$. The $c$th segment consists of indices
    \begin{align}
        \mathcal{J}_c = \bktz{(c - 1) \frac{n}{d} + 1, \frac{cn}{d}} \, .
    \end{align}
    We further partition each segment into $d$ subsegments, and denote the indices in the $b$th subsegment by $\mathcal{J}_{c,b}$. The $b$th subsegment has cardinality $\abs{\mathcal{J}_{c,b}} = \rho_b \, \frac{n}{d}$, where
    \begin{align}
        \rho_b = \begin{cases}
            \frac{m_b - 1}{\prod_{i=1}^b m_i} \, , & \text{if $b \in [d - 1]$} \, , \\
            \frac{m_b}{\prod_{i=1}^b m_i} \, , & \text{if $b = d$} \, .
        \end{cases} \label{eq:subsegment-proportions}
    \end{align}
    
    \textbf{Proof of subsegment well-definedness.} In this part, we verify that $\sum_{b=1}^d \abs{\mathcal{J}_{c,b}} = \abs{\mathcal{J}_c}$, and so the subsegments are well-defined. Observe that
    \begin{align}
        \sum_{b=1}^d \rho_b &= \sum_{b=1}^{d-1} \frac{m_b - 1}{\prod_{i=1}^b m_i} + \frac{m_d}{\prod_{i=1}^d m_i} \\
        &= \prn{\prod_{i=1}^d m_i}^{-1} \prn{\sum_{b=1}^{d-1} (m_b - 1) \prod_{i=b+1}^d m_i + m_d} \\
        &= \prn{\prod_{i=1}^d m_i}^{-1} \prn{\sum_{b=1}^{d-1} \prod_{i=b}^d m_i - \sum_{b=1}^{d-1} \prod_{i=b+1}^d m_i + m_d} \\
        &= \prn{\prod_{i=1}^d m_i}^{-1} \prn{\sum_{b=1}^d \prod_{i=b}^d m_i - \sum_{b=2}^d \prod_{i=b}^d m_i} \\
        &= 1 \, , \label{eq:sum-of-subsegment-proportions}
    \end{align}
    so therefore
    \begin{align}
        \sum_{b=1}^d \abs{\mathcal{J}_{c,b}} = \frac{n}{d} \sum_{b=1}^d \rho_b = \frac{n}{d} = \abs{\mathcal{J}_c}
    \end{align}
    as desired. (\cref{lemma:subsegment-well-definedness} provides an alternative proof of this fact using recurrences.)
    
    \textbf{Encoders.} For notational simplicity, let
    \begin{align}
        \boldsymbol{\theta}_{i,c} = (\theta_{i,c,0}, \dots, \theta_{i,c,p-1})
    \end{align}
    for each $i, c \in [d]$, where $(\boldsymbol{\theta}_{i,1}, \dots, \boldsymbol{\theta}_{i,d}) = \boldsymbol{\mu}_i$ is the message sent by sender $i$. Each encoder $f_{i,n}: \mathcal{M}_i \rightarrow \bktz{p - 1}^n$ is randomized and outputs a codeword of length $n$ with alphabet size $p$. In each segment $c$, sender $i$ is active in subsegment $i$ and passive in all other subsegments. During its active phase, sender $i$'s encoder generates letters by sampling i.i.d. from $\Categorical{\boldsymbol{\theta}_{i,c}}$. During its passive phase, sender $i$ outputs all values $p - 1$ if it is one of the first $c - 1$ passive senders, and all zeros otherwise. Formally, for $j \in [n]$, let $c(j)$ and $b(j)$ be the indices of the segment and subsegment containing the $j$th letter, respectively. Then,
    \begin{align}
        &f_{i,n}(\boldsymbol{\mu}_i) = \prn{x_i}_1^n \enspace \text{where} \\
        &\forall j \in [n], \, \begin{cases}
            x_{i,j} \sim \Categorical{\boldsymbol{\theta}_{i,c(j)}} \, , & \text{if $i = b(j)$} \, , \\
            x_{i,j} = \prn{p - 1} \Iv{i \leq c(j) - 1} \, , & \text{if $i < b(j)$} \, , \\
            x_{i,j} = \prn{p - 1} \Iv{i \leq c(j)} \, , & \text{if $i > b(j)$} \, .
        \end{cases}
    \end{align}
    This coding scheme ensures the alphabets of $W_j$ from adjacent segments overlap at only one value, namely $c(p - 1)$ for segments $c$ and $c + 1$. \cref{table:general-encoder} visualizes the encoders' behavior in the case of $d = 3$ senders.
    \begin{table*}[t]
        \caption{Behavior of the encoders in the case of $d = 3$ senders. Each column corresponds to a subsegment in a segment. Each row corresponds to a sender. Each cell contains the value returned by the encoder, which is a sample from a categorical distribution (in active phase) or a constant (in passive phase).}
        \centering
        \begin{tabular}{c|c|c|c|c|c|c|c|c|c|}
            \cline{2-10}
            & \multicolumn{3}{c|}{segment 1} & \multicolumn{3}{c|}{segment 2} & \multicolumn{3}{c|}{segment 3} \\
            \cline{2-10}
            & sub 1 & sub 2 & sub 3 & sub 1 & sub 2 & sub 3 & sub 1 & sub 2 & sub 3 \\
            \hline
            \multicolumn{1}{|c|}{sender 1} & $\mathsf{Cat}(\boldsymbol{\theta}_{1,1})$ & $0$ & $0$ & $\mathsf{Cat}(\boldsymbol{\theta}_{1,2})$ & $p - 1$ & $p - 1$ & $\mathsf{Cat}(\boldsymbol{\theta}_{1,3})$ & $p - 1$ & $p - 1$ \\
            \multicolumn{1}{|c|}{sender 2} & $0$ & $\mathsf{Cat}(\boldsymbol{\theta}_{2,1})$ & $0$ & $p - 1$ & $\mathsf{Cat}(\boldsymbol{\theta}_{2,2})$ & $0$ & $p - 1$ & $\mathsf{Cat}(\boldsymbol{\theta}_{2,3})$ & $p - 1$ \\
            \multicolumn{1}{|c|}{sender 3} & $0$ & $0$ & $\mathsf{Cat}(\boldsymbol{\theta}_{3,1})$ & $0$ & $0$ & $\mathsf{Cat}(\boldsymbol{\theta}_{3,2})$ & $p - 1$ & $p - 1$ & $\mathsf{Cat}(\boldsymbol{\theta}_{3,3})$ \\
            \hline
        \end{tabular}
        \label{table:general-encoder}
    \end{table*}
    
    \textbf{Time sharing.} In this part, we present several definitions pertaining to the time sharing scheme used in our proof. Let $m = \prod_{i=1}^d m_i$. Let $\Phi$ be the set
    \begin{align}
        \Phi = \brc{\frac{\ell}{m}: \ell \in \bktz{m - 1}} \, .
    \end{align}
    Let $h: \Theta_1 \times \cdots \times \Theta_d \rightarrow \Phi$ be the mapping
    \begin{align}
        \phi = h(\theta_1, \dots, \theta_d) = \sum_{b=1}^d \rho_b \theta_b \, . \label{eq:weighted-sum-of-subsegment-thetas}
    \end{align}
    Observe that $h$ is a bijection, because
    \begin{align}
        \phi &\stackclap{(a)}{=} \sum_{b=1}^{d-1} \frac{m_b - 1}{\prod_{i=1}^b m_i} \, \theta_b + \frac{m_d}{\prod_{i=1}^d m_i} \, \theta_d \\
        &= \frac{1}{m} \prn{\sum_{b=1}^{d-1} \theta_b (m_b - 1) \prod_{i=b+1}^d m_i + \theta_d m_d} \\
        &\stackclap{(b)}{=} \frac{1}{m} \sum_{b=1}^d \ell_b \prod_{i=b+1}^d m_i \, , \label{eq:phi-characterization}
    \end{align}
    where (a) follows from substituting \cref{eq:subsegment-proportions} into \cref{eq:weighted-sum-of-subsegment-thetas}, and $\ell_b$ in (b) arises from the definition of message sets, namely
    \begin{align}
        \ell_b = \begin{cases}
            \theta_b (m_b - 1) \, , & \text{if $b \in [d - 1]$} \, , \\
            \theta_b m_b \, , & \text{if $b = d$} \, .
        \end{cases} \label{eq:ell-characterization}
    \end{align}
    Since $\ell_b \in \bktz{m_b - 1}$ for each $b \in [d]$, the mapping $h$ can be interpreted as converting a multi-dimensional array index into the corresponding one-dimensional index for the flattened view of the array. More explicitly, the inverse mapping $h^{-1}: \Phi \rightarrow \Theta_1 \times \cdots \times \Theta_d$ is given by
    \begin{align}
        (\theta_1, \dots, \theta_d) &= h^{-1}(\phi) \, , \\
        \forall b \in [d - 1], \enspace \theta_b &= \frac{1}{m_b - 1} \prn{\floor{\frac{m \phi}{\prod_{i=b+1}^d m_i}} \mod{m_b}} \, , \\
        \theta_d &= \frac{1}{m_d} \prn{m \phi \mod{m_d}} \, ,
    \end{align}
    thus showing injectivity of $h$.\footnote{Note that $\phi$ is not an integer in general, so $m \phi \mod{m_d}$ cannot be simplified to $0$.} The inverse mapping is readily justified by simple algebra: For $b < d$, we have
    \begin{align}
        \prn{m_b - 1} \theta_b &\stackclap{(a)}{=} \ell_b \\
        &\stackclap{(b)}{=} \prn{\sum_{b'=1}^{b-1} \ell_{b'} \prod_{i=b'+1}^b m_i + \ell_b} \mod{m_b} \\
        &\stackclap{(c)}{=} \Biggl\lfloor \sum_{b'=1}^{b-1} \ell_{b'} \prod_{i=b'+1}^b m_i + \ell_b \\
        &\quad + \frac{\sum_{b'=b+1}^d \ell_{b'} \prod_{i=b'+1}^d m_i}{\prod_{i=b+1}^d m_i} \Biggr\rfloor \mod{m_b} \\
        &= \floor{\frac{\sum_{b'=1}^d \ell_{b'} \prod_{i=b'+1}^d m_i}{\prod_{i=b+1}^d m_i}} \mod{m_b} \\
        &\stackclap{(d)}{=} \floor{\frac{m \phi}{\prod_{i=b+1}^d m_i}} \mod{m_b} \, ,
    \end{align}
    where (a) follows from \cref{eq:ell-characterization}, (b) holds because $\ell_b \in \bktz{m_b - 1}$, (c) holds because
    \begin{align}
        \sum_{b'=b+1}^d \ell_{b'} \prod_{i=b'+1}^d m_i &\leq \sum_{b'=b+1}^d (m_{b'} - 1) \prod_{i=b'+1}^d m_i \\
        &= \sum_{b'=b+1}^d \prn{\prod_{i=b'}^d m_i - \prod_{i=b'+1}^d m_i} \\
        &= \sum_{b'=b+1}^d \prod_{i=b'}^d m_i - \sum_{b'=b+2}^{d+1} \prod_{i=b'}^d m_i \\
        &= \prod_{i=b+1}^d m_i - 1 \\
        &< \prod_{i=b+1}^d m_i \, ,
    \end{align}
    and (d) follows from \cref{eq:phi-characterization}. For $b = d$, we have
    \begin{align}
        m_d \theta_d &\stackclap{(a)}{=} \ell_d \\
        &\stackclap{(b)}{=} \prn{\sum_{b'=1}^{d-1} \ell_{b'} \prod_{i=b'+1}^d m_i + \ell_d} \mod{m_d} \\
        &\stackclap{(c)}{=} m \phi \mod{m_d} \, ,
    \end{align}
    where (a) follows from \cref{eq:ell-characterization}, (b) holds because $\ell_d \in \bktz{m_d - 1}$, and (c) follows from \cref{eq:phi-characterization}. Lastly, we have
    \begin{align}
        \abs{\Theta_1 \times \cdots \times \Theta_d} = \abs{\Theta_1} \times \cdots \times \abs{\Theta_d} = \prod_{i=1}^d m_i = m = \abs{\Phi} \, ,
    \end{align}
    and so $h$ is bijective as mentioned.
    
    \textbf{Decoder.} Given the output codeword $y_1^n$, the decoder $g_n: \bktz{d(p - 1)}^n \rightarrow \mathcal{M} \cup \brc{\mathsf{error}}$ executes the following:
    \begin{enumerate}
        \item Form the matrix
        \begin{align}
            \mathbf{B} = \begin{bmatrix}
                \mathbf{P}_{Z|W}^\top & \mathbf{0} \\
                \mathbf{0} & \mathbf{I}_d
            \end{bmatrix} \in \mathbb{R}^{(dp+1) \times (dp+1)} \, .
        \end{align}
        \item Form the matrix $\mathbf{C}_1 \in \mathbb{B}^{(d(p-1)+1) \times dp}$, given by:
        \begin{align}
            \bkt{\mathbf{C}_1}_{s,(c,k)} = \begin{cases}
                \uline{1} \, , & \text{if $(c, k) =$} \\
                & \prn{\floor{\frac{s}{p - 1}} + 1, \, s \mod{p - 1}} \, , \\
                \smashwave{1} \, , & \text{if $0 \equiv s \mod{p - 1}$} \\
                & \text{and $(c, k) = \prn{\frac{s}{p - 1}, \, p - 1} \, ,$} \\
                0 \, , & \text{otherwise} \, .
            \end{cases} \label{eq:c1-matrix}
        \end{align}
        We index the rows by $s \in \bktz{d(p - 1)}$, as in $\mathbf{P}_{Z|W}$. We index the columns by $(c,k) \in [d] \times \bktz{p-1}$, matching the latter two indices of the message variables $\theta_{i,c,k}$. The $1$s in the formula above are underlined to match \cref{figure:c-matrix}. Intuitively, $\mathbf{C}_1$ specifies the output value $s$ from the adder (and hence the input value to the DMC $\mathbf{P}_{Z|W}$) when an active sender in segment $c$ sends value $k$. We mention two further remarks:
        \begin{enumerate}
            \item For each row $s$, at most one column $(c,k)$ satisfies the \uline{first} case in \cref{eq:c1-matrix}, which uniquely determines $c$ and $k$ by the quotient and remainder of $s$ divided by $p - 1$, respectively.
            \item Any row of the form $s = c(p - 1)$ for some $c \in [d - 1]$ contains two $1$s, from the \uwave{second} and \uline{first} cases in \cref{eq:c1-matrix} respectively, due to the possibility of this value $s$ appearing in two adjacent segments $c$ and $c + 1$. All other rows contain one $1$. This is readily observed by rearranging the conditions in \cref{eq:c1-matrix} to find $s$ in terms of $(c, k)$:
            \begin{align}
                \bkt{\mathbf{C}_1}_{s,(c,k)} = \Iv{s = (c - 1)(p - 1) + k} \, . \label{eq:c1-matrix-characterization}
            \end{align}
        \end{enumerate}
        \item Form the matrix $\mathbf{C}_2 \in \mathbb{B}^{d \times dp}$, given by
        \begin{align}
            \bkt{\mathbf{C}_2}_{s,(c,k)} = \Iv{c = s} \, . \label{eq:c2-matrix}
        \end{align}
        We index the rows by $s \in [d]$.  We index the columns by $(c,k) \in [d] \times \bktz{p - 1}$, as in $\mathbf{C}_1$. The matrix $\mathbf{C}_2$ is used to encode the sum-to-one constraint on each $\boldsymbol{\phi}_c = \brc{\phi_{c,k}}_{k=0}^{p-1}$.
        \item Collect $\mathbf{C}_1$ and $\mathbf{C}_2$ into the block matrix
        \begin{align}
            \mathbf{C} = \begin{bmatrix}
                \mathbf{C}_1 \\
                \mathbf{C}_2
            \end{bmatrix} \in \mathbb{B}^{(dp+1) \times dp} \, .
        \end{align}
        An example of this matrix is visualized in \cref{figure:c-matrix}.
        \item Compute the matrix $\mathbf{A} = \mathbf{B} \mathbf{C} \in \mathbb{R}^{(dp+1) \times dp}$. By block matrix multiplication, observe that
        \begin{align}
            \mathbf{A} = \begin{bmatrix}
                \mathbf{P}_{Z|W}^\top & \mathbf{0} \\
                \mathbf{0} & \mathbf{I}_d
            \end{bmatrix} \begin{bmatrix}
                \mathbf{C}_1 \\
                \mathbf{C}_2
            \end{bmatrix} = \begin{bmatrix}
                \mathbf{P}_{Z|W}^\top \mathbf{C}_1 \\
                \mathbf{C}_2
            \end{bmatrix} \, .
        \end{align}
        At a high level, $\mathbf{A}$ encodes the system of $dp + 1$ equations which we compute the least-squares solution of, as discussed in Step 3 of \cref{subsection:general-achievability}. The first $d(p - 1) + 1$ equations are given by the submatrix $\mathbf{P}_{Z|W}^\top \mathbf{C}_1$, whose columns specify the conditional distributions in $\mathbf{P}_{Z|W}$ which $\mathbf{p}_Y$ is a linear combination of. The last $d$ equations are given by the submatrix $\mathbf{C}_2$, which encodes the sum-to-one constraint on each $\boldsymbol{\phi}_c$.
        \item Compute the empirical distribution $\hat{\mathbf{p}}_Y \in \mathcal{S}_{d(p - 1)}$, given by
        \begin{align}
            \forall t \in \bktz{d(p - 1)}, \enspace \bkt{\hat{\mathbf{p}}_Y}_t = \frac{1}{n} \sum_{j=1}^n \Iv{y_j = t} \, .
        \end{align}
        \item Form the vector
        \begin{align}
            \hat{\mathbf{b}} = \begin{bmatrix}
                d \hat{\mathbf{p}}_Y^\top \\
                \mathbf{1}_d
            \end{bmatrix} \in \mathbb{R}^{dp + 1} \, .
        \end{align}
        \item Using the normal equations, compute the least-squares solution (cf. \cite{Makur2020b})
        \begin{align}
            \tilde{\boldsymbol{\phi}} &= \brc{\tilde{\phi}_{c,k}}_{c=1,k=0}^{d,p-1} \\
            &= \arg \min_{\boldsymbol{\phi} \in \mathbb{R}^{dp}} \norm{\mathbf{A} \boldsymbol{\phi} - \hat{\mathbf{b}}}_2 \\
            &= \underbrace{\prn{\mathbf{A}^\top \mathbf{A}}^{-1} \mathbf{A}^\top}_{\mathbf{A}^\dagger} \hat{\mathbf{b}} \, ,
        \end{align}
        where $\mathbf{A}^\dagger$ is the \emph{Moore-Penrose pseudoinverse} of $\mathbf{A}$.
        \item Compute $\hat{\boldsymbol{\phi}} = \brc{\hat{\phi}_{c,k}}_{c=1,k=0}^{d,p-1} \in \Phi^{dp}$ by rounding each entry of $\tilde{\boldsymbol{\phi}}$ to the nearest element in $\Phi$ (cf. \cite{Makur2020b}):
        \begin{align}
            \forall (c,k) \in [d] \times \bktz{p - 1}, \enspace \hat{\phi}_{c,k} = \arg \min_{\phi \in \Phi} \abs{\tilde{\phi}_{c,k} - \phi} \, .
        \end{align}
        \item Convert each $\hat{\phi}_{c,k}$ into a $d$-tuple in $\Theta_1 \times \cdots \times \Theta_d$ using the inverse mapping $h^{-1}$:
        \begin{align}
            \forall (c,k) \in [d] \times \bktz{p - 1}, \enspace \brc{\hat{\theta}_{i,c,k}}_{i=1}^d = h^{-1} \mprn{\hat{\phi}_{c,k}} \, .
        \end{align}
        \item Form the predicted messages
        \begin{align}
            \forall i \in [d], \enspace \hat{\boldsymbol{\mu}}_i = \brc{\hat{\theta}_{i,c,k}}_{c=1,k=0}^{d,p-1} \, .
        \end{align}
        \item If $\hat{\boldsymbol{\mu}}_i \notin \mathcal{M}_i$ for any $i \in [d]$, return $\mathsf{error}$. Otherwise, return the predicted messages $\prn{\hat{\boldsymbol{\mu}}_1, \dots, \hat{\boldsymbol{\mu}}_d}$.
    \end{enumerate}

    \begin{figure}
        \begin{align}
            \mathbf{C} &= \bkt{\begin{array}{c@{}|c@{}}
                \begin{matrix}
                    \uline{1} & \zero & \zero \\
                    \zero & \uline{1} & \zero
                \end{matrix} & \begin{matrix}
                    \zero & \zero & \zero \\
                    \zero & \zero & \zero
                \end{matrix} \\
                \begin{matrix}
                    \zero & \zero & \smashwave{1} \\
                    \zero & \zero & \zero
                \end{matrix} & \begin{matrix}
                    \uline{1} & \zero & \zero \\
                    \zero & \uline{1} & \zero
                \end{matrix} \\
                \begin{matrix}
                    \zero & \zero & \zero
                \end{matrix} & \begin{matrix}
                    \zero & \zero & \smashwave{1}
                \end{matrix} \\
                \hline
                \begin{matrix}
                    1 & 1 & 1 \\
                    \zero & \zero & \zero
                \end{matrix} & \begin{matrix}
                    \zero & \zero & \zero \\
                    1 & 1 & 1
                \end{matrix}
            \end{array}} \in \mathbb{B}^{7 \times 6}
        \end{align}
        \caption{Visualization of $\mathbf{C}$ in the case $d = 2$ and $p = 3$. Zeros are omitted for clarity. The two block rows are $\mathbf{C}_1 \in \mathbb{B}^{5 \times 6}$ and $\mathbf{C}_2 \in \mathbb{B}^{2 \times 6}$ respectively. Each block column is a group of columns with the same $c$ index. The number of block columns is $d = 2$. The width of each block column is $p = 3$. The underlined $1$s were generated by the corresponding cases in \cref{eq:c1-matrix}.}
        \label{figure:c-matrix}
    \end{figure}
    
    \textbf{Proof of decoder well-definedness.} In this part, we verify that $\mathbf{A}^\top \mathbf{A}$ is indeed invertible, and so the decoder is well-defined. First, observe that $\mathbf{C}$ has full column rank because each column is not a linear combination of the columns to its left, i.e.,
    \begin{align}
        &\forall (c, k) \in [d] \times \bktz{p - 1}, \, \bkt{\mathbf{C}}_{(c,k)} \notin \spn \Bigl\{ \bkt{\mathbf{C}}_{(c',k')}: \\
        &\quad \text{$(c',k') \in [d] \times \bktz{p-1}$ such that $c'p + k' < cp + k$} \Bigr\} \, , \label{eq:full-column-rank}
    \end{align}
    due to column $(c, k)$ being the leftmost column that contains a non-zero entry in
    \begin{itemize}
        \item Row $c$ of $\mathbf{C}_2$, if $k = 0$; or
        \item Row $(c - 1)(p - 1) + k$ of $\mathbf{C}_1$, if $k > 0$. This is readily observed from \cref{eq:c1-matrix-characterization}.
    \end{itemize}
    By inspection, since $\mathbf{P}_{Z|W}$ has full rank, $\mathbf{B}$ has full rank. Since $\mathbf{A}$ is the product of two matrices with full column rank, $\mathbf{A}$ has full column rank. Since a Gramian matrix has the same rank as its vector realization,
    \begin{align}
        \rank{\mathbf{A}^\top \mathbf{A}} = \rank{\mathbf{A}} = dp \, ,
    \end{align}
    and thus $\mathbf{A}^\top \mathbf{A}$ is invertible as desired.
    
    \textbf{Proof of achievability.} In this proof, all probabilities $\P{\cdot}$ are conditioned on sending the true messages $(\boldsymbol{\mu}_1, \dots, \boldsymbol{\mu}_d)$.
    
    \emph{Step 1: Upper-bounding error in least-squares solution.} Let $\mathbf{p}_{W,c,b}$ and $\mathbf{p}_{Z,c,b}$ be the true distributions of $W_j$ and $Z_j$, respectively, given $j \in \mathcal{J}_{c,b}$. The dynamics of the formal model are governed by the equations:
    \begin{align}
        \forall c, b \in [d], \enspace \mathbf{p}_{W,c,b} &= \sum_{k=0}^{p-1} \theta_{b,c,k} \mathbf{e}_{(c-1)(p-1) + k}^\top \, , \label{eq:general-encoders-and-adder} \\
        \forall c, b \in [d], \enspace \mathbf{p}_{Z,c,b} &= \mathbf{p}_{W,c,b} \mathbf{P}_{Z|W} \, , \label{eq:general-dmc} \\
        \mathbf{p}_Y &= \sum_{c=1}^d \sum_{b=1}^d \frac{\rho_b}{d} \mathbf{p}_{Z,c,b} \, , \label{eq:general-random-permutation-block} \\
        \forall i, c \in [d], \enspace \sum_{k=0}^{p-1} \theta_{i,c,k} &= 1 \, , \label{eq:sum-to-one}
    \end{align}
    where \cref{eq:general-encoders-and-adder} models the encoders and adder, \cref{eq:general-dmc} models the DMC, \cref{eq:general-random-permutation-block} models the random permutation block by \cref{lemma:y-marginal}, and \cref{eq:sum-to-one} encodes the fact that $\brc{\theta_{i,c,k}}_{k=0}^{p-1} \in \mathcal{L}_i$.
    
    Define $\phi_{c,k} = h(\theta_{1,c,k}, \dots, \theta_{d,c,k})$ for each $(c, k) \in [d] \times \bktz{p - 1}$. Collect these variables into a vector $\boldsymbol{\phi}^* = \brc{\phi_{c,k}}_{c=1,k=0}^{d,p-1} \in \mathbb{R}^{dp}$. Combining equations \cref{eq:general-encoders-and-adder}-\cref{eq:general-random-permutation-block},
    \begin{align}
        d \mathbf{p}_Y^\top &\stackclap{(a)}{=} \sum_{c=1}^d \sum_{b=1}^d \rho_b \mathbf{P}_{Z|W}^\top \sum_{k=0}^{p-1} \theta_{b,c,k} \mathbf{e}_{(c-1)(p-1) + k} \\
        &\stackclap{(b)}{=} \sum_{c=1}^d \mathbf{P}_{Z|W}^\top \sum_{k=0}^{p-1} \phi_{c,k} \mathbf{e}_{(c-1)(p-1) + k} \\
        &= \sum_{c=1}^d \sum_{k=0}^{p-1} \phi_{c,k} \bkt{\mathbf{P}_{Z|W}}_{\ang{(c-1)(p-1) + k}} \\
        &\stackclap{(c)}{=} \sum_{s=0}^{d(p-1)} \bkt{\mathbf{P}_{Z|W}}_{\ang{s}} \begin{cases}
            \uwave{\phi_{\frac{s}{p-1}, \, p-1}} + \uline{\phi_{\frac{s}{p-1} + 1, \, 0}} \, , \\
            \qquad \text{if $0 \equiv s \mod{p - 1}$} \\
            \uline{\phi_{\floor{\frac{s}{p-1}} + 1, \, s \mod{p - 1}}} \, , \\
            \qquad \text{otherwise}
        \end{cases} \\
        &\stackclap{(d)}{=} \sum_{s=0}^{d(p-1)} \bkt{\mathbf{P}_{Z|W}}_{\ang{s}} \bkt{\mathbf{C}_1 \boldsymbol{\phi}^*}_s \\
        &= \mathbf{P}_{Z|W}^\top \mathbf{C}_1 \boldsymbol{\phi}^* \, ,
    \end{align}
    where (a) follows from substituting $\mathbf{p}_{W,c,b}$ and $\mathbf{p}_{Z,c,b}$ into $\mathbf{p}_Y$, (b) holds by \cref{eq:weighted-sum-of-subsegment-thetas}, (c) reindexes the double sum in terms of the rows of $\mathbf{P}_{Z|W}$, (d) follows from \cref{eq:c1-matrix}, and we zero-index $\mathbf{C}_1 \boldsymbol{\phi}^*$ in (d). To avoid creating special cases in (c) for $s = 0$ and $s = d(p - 1)$, we assume that $\phi_{c,k} = 0$ for any $c \notin [d]$. The terms in (c) are underlined to match their corresponding entries in $\mathbf{C}_1$, as defined in \cref{eq:c1-matrix}, to show why (d) follows. Note that the case structure in (c) mirrors remark (2b) from the decoder definition.
    
    For each $c \in [d]$, $\brc{\phi_{c,k}}_{k=0}^{p-1}$ sums to $1$, because
    \begin{align}
        \sum_{k=0}^{p-1} \phi_{c,k} &\stackclap{(a)}{=} \sum_{k=0}^{p-1} \sum_{b=1}^d \rho_b \theta_{b,c,k} \\
        &= \sum_{b=1}^d \rho_b \sum_{k=0}^{p-1} \theta_{b,c,k} \\
        &\stackclap{(b)}{=} \sum_{b=1}^d \rho_b \\
        &\stackclap{(c)}{=} 1 \, ,
    \end{align}
    where (a) follows from \cref{eq:weighted-sum-of-subsegment-thetas}, (b) holds because $\brc{\theta_{b,c,k}}_{k=0}^{p-1} \in \mathcal{L}_b$ sums to $1$, and (c) follows from \cref{eq:sum-of-subsegment-proportions}. This can be vectorized as $\mathbf{C}_2 \mathbf{\boldsymbol{\phi}}^* = \mathbf{1}_d$. Define a vector
    \begin{align}
        \mathbf{b} = \begin{bmatrix}
            d \mathbf{p}_Y^\top \\
            \mathbf{1}_d
        \end{bmatrix} \in \mathbb{R}^{dp + 1} \, .
    \end{align}
    Combining the above,
    \begin{align}
        \mathbf{A} \boldsymbol{\phi}^* = \begin{bmatrix}
            \mathbf{P}_{Z|W}^\top & \mathbf{0} \\
            \mathbf{0} & \mathbf{I}_d
        \end{bmatrix} \begin{bmatrix}
            \mathbf{C}_1 \\
            \mathbf{C}_2
        \end{bmatrix} \boldsymbol{\phi}^* = \begin{bmatrix}
            \mathbf{P}_{Z|W}^\top \mathbf{C}_1 \boldsymbol{\phi}^* \\
            \mathbf{C}_2 \boldsymbol{\phi}^*
        \end{bmatrix} = \mathbf{b} \, ,
    \end{align}
    and so $\boldsymbol{\phi}^* = \prn{\mathbf{A}^\top \mathbf{A}}^{-1} \mathbf{A}^\top \mathbf{b}$. Therefore,
    \begin{align}
        \norm{\tilde{\boldsymbol{\phi}} - \boldsymbol{\phi}^*}_\infty &\leq \norm{\prn{\mathbf{A}^\top \mathbf{A}}^{-1} \mathbf{A}^\top}_\infty \norm{\hat{\mathbf{b}} - \mathbf{b}}_\infty \\
        &\stackclap{(a)}{\leq} \sqrt{dp + 1} \norm{\prn{\mathbf{A}^\top \mathbf{A}}^{-1} \mathbf{A}^\top}_2 \norm{\hat{\mathbf{b}} - \mathbf{b}}_\infty \\
        &\stackclap{(b)}{\leq} \sqrt{dp + 1} \underbrace{\norm{\prn{\mathbf{A}^\top \mathbf{A}}^{-1}}_2}_{\circled{1}} \underbrace{\norm{\mathbf{A}^\top}_2}_{\circled{2}} \underbrace{\norm{\hat{\mathbf{b}} - \mathbf{b}}_\infty}_{\circled{3}} \, ,
    \end{align}
    where (a) holds by the equivalence of $\ell^\infty$ and $\ell^2$ norms and (b) holds by the submultiplicativity of the $\ell^2$ matrix norm.
    
    Next, we upper-bound $\circled{1}$:
    \begin{align}
        \circled{1} &= \frac{1}{\sigma_{\min}(\mathbf{A}^\top \mathbf{A})} \\
        &= \frac{1}{\sigma_{\min}^2(\mathbf{A})} \\
        &\leq \frac{1}{\sigma_{\min}^2(\mathbf{B}) \, \sigma_{\min}^2(\mathbf{C})} \\
        &\stackclap{(a)}{=} \frac{1}{\min \brc{\sigma_{\min} \mprn{\mathbf{P}_{Z|W}}, 1}^2 \sigma_{\min}^2(\mathbf{C})} \\
        &\stackclap{(b)}{=} \frac{1}{\sigma_{\min}^2 \mprn{\mathbf{P}_{Z|W}} \, \sigma_{\min}^2(\mathbf{C})} \, ,
    \end{align}
    where (a) holds by definition of $\mathbf{B}$, and (b) holds because $\mathbf{P}_{Z|W}$ is row stochastic and so
    \begin{align}
        \sigma_{\min} \mprn{\mathbf{P}_{Z|W}} &= \prn{\norm{\mathbf{p}}_1 \sigma_{\min}^t \mprn{\mathbf{P}_{Z|W}}}^{\frac{1}{t}} \\
        &\leq \prn{\sqrt{d(p - 1) + 1} \norm{\mathbf{p}}_2 \sigma_{\min}^t \mprn{\mathbf{P}_{Z|W}}}^{\frac{1}{t}} \\
        &\leq \prn{\sqrt{d(p - 1) + 1} \norm{\mathbf{p} \mathbf{P}_{Z|W}^t}_2}^{\frac{1}{t}} \\
        &\leq \prn{\sqrt{d(p - 1) + 1} \norm{\mathbf{p} \mathbf{P}_{Z|W}^t}_1}^{\frac{1}{t}} \\
        &= \sqrt[2t]{d(p - 1) + 1}
    \end{align}
    for any $t \in \mathbb{N}$ and $\mathbf{p} \in \mathcal{S}_{d(p-1)}$; taking the limit as $t \rightarrow \infty$, we have $\sigma_{\min} \mprn{\mathbf{P}_{Z|W}} \leq 1$. Define a matrix $\tilde{\mathbf{C}} \in \mathbb{B}^{(dp+1) \times (dp+1)}$ by prepending one column to $\mathbf{C}$:
    \begin{align}
        \tilde{\mathbf{C}} = \begin{bmatrix}
            \tilde{\mathbf{C}}_1 \\
            \tilde{\mathbf{C}}_2
        \end{bmatrix}, \, \tilde{\mathbf{C}}_1 &= \bkt{\begin{array}{@{}c@{}|@{}c@{}}
            \begin{matrix}
                1 \\
                0 \\
                \vdots \\
                0
            \end{matrix} & \makebox[2em]{$\mathbf{C}_1$}
        \end{array}}, \, \tilde{\mathbf{C}}_2 = \bkt{\begin{array}{@{}c@{}|@{}c@{}}
            \begin{matrix}
                0 \\
                \vdots \\
                0
            \end{matrix} & \makebox[2em]{$\mathbf{C}_2$}
        \end{array}} \, . \label{eq:c-tilde-matrix}
    \end{align}
    It follows that
    \begin{align}
        \frac{1}{\sigma_{\min}(\mathbf{C})} \stackclap{(a)}{\leq} \frac{1}{\sigma_{\min} \mprn{\tilde{\mathbf{C}}}} = \norm{\tilde{\mathbf{C}}^{-1}}_2 \leq \frnorm{\tilde{\mathbf{C}}^{-1}} \stackclap{(b)}{\leq} dp + 1 \, ,
    \end{align}
    where (a) holds because adding a column to a tall matrix does not increase its minimum singular value (\cref{lemma:minimum-singular-value}), and (b) holds because all non-zero entries of $\tilde{\mathbf{C}}^{-1}$ have unit magnitude by \cref{lemma:c-tilde-inverse}. Combining the above,
    \begin{align}
        \circled{1} \leq \frac{(dp + 1)^2}{\sigma_{\min}^2 \mprn{\mathbf{P}_{Z|W}}} \, .
    \end{align}
    
    Next, we upper-bound $\circled{2}$:
    \begin{align}
        \circled{2} &\stackclap{(a)}{\leq} \norm{\mathbf{B}}_2 \frnorm{\mathbf{C}} \\
        &\stackclap{(b)}{=} \max \brc{\norm{\mathbf{P}_{Z|W}}_2, \norm{\mathbf{I}_d}_2} \frnorm{\mathbf{C}} \\
        &\stackclap{(c)}{=} \norm{\mathbf{P}_{Z|W}}_2 \frnorm{\mathbf{C}} \\
        &\stackclap{(d)}{=} \norm{\mathbf{P}_{Z|W}}_2 \sqrt{2dp} \\
        &\stackclap{(e)}{\leq} \sqrt{d(p - 1) + 1} \norm{\mathbf{P}_{Z|W}}_\infty \sqrt{2dp} \\
        &\stackclap{(f)}{=} \sqrt{d(p - 1) + 1} \sqrt{2dp} \\
        &\leq \sqrt{2} \prn{dp + 1} \, ,
    \end{align}
    where (a) holds by the submultiplicativity of the $\ell^2$ matrix norm, (b) holds by the block diagonal structure of $\mathbf{B}$ \cite[Eq. 1.2]{LiMathias2003}, (c) holds because $\mathbf{P}_{Z|W}$ is row stochastic and so
    \begin{align}
        \norm{\mathbf{P}_{Z|W}}_2 \stackclap{(g)}{\geq} \rho \mprn{\mathbf{P}_{Z|W}} = 1
    \end{align}
    by \cite[Theorem 5.6.9]{HornJohnson2013} and \cite[Section 8.7]{HornJohnson2013}, (d) holds because each column of $\mathbf{C}$ contains two $1$s and all other entries $0$, (e) holds by the equivalence of $\ell^2$ and $\ell^\infty$ norms, (f) holds because each row of $\mathbf{P}_{Z|W}$ sums to $1$ and so the maximum row sum matrix norm \cite[Example 5.6.5]{HornJohnson2013} of $\mathbf{P}_{Z|W}$ is $1$, and the notation $\rho(\cdot)$ in (g) refers to the spectral radius.
    
    By definition of $\mathbf{b}$ and $\hat{\mathbf{b}}$, we have $\circled{3} = d \norm{\hat{\mathbf{p}}_Y - \mathbf{p}_Y}_\infty$. Combining the bounds on $\circled{1}$ to $\circled{3}$,
    \begin{align}
        \norm{\tilde{\boldsymbol{\phi}} - \boldsymbol{\phi}^*}_\infty \leq \frac{\sqrt{2} \prn{dp + 1}^{\frac{7}{2}} d}{\sigma_{\min}^2 \prn{\mathbf{P}_{Z|W}}} \norm{\hat{\mathbf{p}}_Y - \mathbf{p}_Y}_\infty \, .
    \end{align}
    
    \emph{Step 2: Concentration bound for empirical distribution of $Y$.} We have
    \begin{align}
        &\equad \P{g_n(Y_1^n) \neq (\boldsymbol{\mu}_1, \dots, \boldsymbol{\mu}_d)} \\
        &= \P{g_n(Y_1^n) = \mathsf{error} \lor \exists i \in [d], \, \hat{\boldsymbol{\mu}}_i \neq \boldsymbol{\mu}_i} \\
        &\stackclap{(a)}{=} \P{\exists i \in [d], \, \exists (c,k) \in [d] \times \bktz{p - 1}, \, \hat{\theta}_{i,c,k} \neq \theta_{i,c,k}} \\
        &\stackclap{(b)}{=} \P{\exists (c,k) \in [d] \times \bktz{p - 1}, \, \hat{\phi}_{c,k} \neq \phi_{c,k}} \\
        &\stackclap{(c)}{\leq} \P{\norm{\tilde{\boldsymbol{\phi}} - \boldsymbol{\phi}^*}_\infty \geq \frac{1}{2m}} \\
        &\stackclap{(d)}{=} \P{\norm{\tilde{\boldsymbol{\phi}} - \boldsymbol{\phi}^*}_\infty \geq \frac{1}{2} n^{-\frac{1}{d(p - 1)} \sum_{i=1}^d \brc{R_i + \frac{\alpha}{2d}}}} \\
        &\stackclap{(e)}{=} \P{\norm{\tilde{\boldsymbol{\phi}} - \boldsymbol{\phi}^*}_\infty \geq \frac{1}{2} n^{-\frac{1}{2} + \frac{\alpha}{2d(p - 1)}}} \\
        &\stackclap{(f)}{\leq} \P{\frac{\sqrt{2} \prn{dp + 1}^{\frac{7}{2}} d}{\sigma_{\min}^2 \prn{\mathbf{P}_{Z|W}}} \norm{\hat{\mathbf{p}}_Y - \mathbf{p}_Y}_\infty \geq \frac{1}{2} n^{-\frac{1}{2} + \frac{\alpha}{2d(p - 1)}}} \\
        &\stackclap{(g)}{\leq} \P{\norm{\hat{\mathbf{p}}_Y - \mathbf{p}_Y}_\infty \geq \sqrt{\frac{\log_e n}{n}}} \\
        &\stackclap{(h)}{\leq} \sum_{t=0}^{d(p-1)} \P{\abs{\bkt{\hat{\mathbf{p}}_Y}_t - \bkt{\mathbf{p}_Y}_t} \geq \sqrt{\frac{\log_e n}{n}}} \, ,
    \end{align}
    where (a) follows from the definitions of predicted and true messages, (b) holds because $h$ is a bijection, (c) follows from the distance between adjacent elements in $\Phi$, (d) follows from the definitions of $m$ and $m_i$, (e) holds because $\sum_{i=1}^d R_i = \frac{d(p - 1)}{2} - \alpha$, (f) holds due to the upper-bound in Step 1, (g) holds because $n \geq n_1$, and (h) follows from the union bound.
    
    Next, observe that
    \begin{align}
        \bkt{\hat{\mathbf{p}}_Y}_t - \bkt{\mathbf{p}_Y}_t &\stackclap{(a)}{=} \frac{1}{n} \sum_{j=1}^n \Iv{Z_j = t} - \bkt{\mathbf{p}_Y}_t \\
        &\stackclap{(b)}{=} \frac{1}{n} \sum_{j=1}^n \Iv{Z_j = t} - \sum_{c=1}^d \sum_{b=1}^d \frac{\rho_b}{d} \bkt{\mathbf{p}_{Z,c,b}}_t \\
        &\stackclap{(c)}{=} \frac{1}{n} \sum_{j=1}^n \Iv{Z_j = t} \\
        &\quad - \frac{1}{n} \sum_{c=1}^d \sum_{b=1}^d \sum_{j \in \mathcal{J}_{c,b}} \bkt{\mathbf{p}_{Z,c,b}}_t \\
        &= \frac{1}{n} \sum_{j=1}^n \Iv{Z_j = t} - \frac{1}{n} \sum_{j=1}^n \P{Z_j = t} \, ,
    \end{align}
    where (a) holds because $Y_1^n$ is a permutation of $Z_1^n$, (b) follows from substituting in \cref{eq:general-random-permutation-block}, and (c) holds because $\abs{\mathcal{J}_{c,b}} = \rho_b \, \frac{n}{d}$ for all $c, b \in [d]$. The $Z_1^n$ are conditionally independent given the messages, since the letters $\prn{X_i}_1^n$ are independently generated and the $W_1^n$ are independently passed through the DMC. Applying Hoeffding's inequality (\cref{lemma:hoeffdings-inequality}) with $\tau = \sqrt{\frac{\log_e n}{n}}$,\footnote{We cannot apply Hoeffding directly on $\hat{\mathbf{p}}_Y$ because the $Y_1^n$ are \emph{not} independent, by virtue of being the outputs of a random permutation block whose inputs $Z_1^n$ are \emph{not} identically distributed.}
    \begin{align}
        \P{g_n(Y_1^n) \neq (\boldsymbol{\mu}_1, \dots, \boldsymbol{\mu}_d)} &\leq \sum_{t=0}^{d(p-1)} 2e^{-2n \prn{\sqrt{\frac{\log_e n}{n}}}^2} \\
        &= \frac{2(d(p - 1) + 1)}{n^2} \\
        &\stackclap{(a)}{\leq} \epsilon \, , \label{eq:high-probability-bound}
    \end{align}
    where (a) holds because $n \geq n_0 \geq \sqrt{\frac{2(d(p - 1) + 1)}{\epsilon}}$. Finally, taking expectation with respect to the messages yields $\Perror{n} \leq \epsilon$ as desired.
\end{proof}

We remark that our randomized encoders and decoder have polynomial time complexity with respect to $n$. The encoders each run in $\bigO{n}$ time since they take $n$ samples from a $\Categorical{\theta_0, \dots, \theta_{p-1}}$ distribution; each sample can be done in $\bigO{p} = \bigO{1}$ time by sampling $U \sim \Uniform{0}{1}$ to a fixed precision and computing
\begin{align}
    X = \min \brc{x \in \bktz{p - 1}: U < \sum_{k=0}^x \theta_k} \, .
\end{align}
Steps 1 through 5 of the decoder cost $\bigO{\poly(d, p)} = \bigO{1}$ time, step 6 costs $\bigO{n + dp} = \bigO{n}$, steps 7 and 8 cost $\bigO{\poly(d, p)} = \bigO{1}$, step 9 costs
\begin{align}
    \bigO{dpm} = \bigO{\prod_{i=1}^d m_i} = \bigO{n^{\frac{1}{2} - \frac{\alpha}{2d(p - 1)}}} \, ,
\end{align}
and steps 10 through 12 cost $\bigO{\poly(d, p)} = \bigO{1}$. Hence, our randomized coding scheme does not suffer from intractable decoding complexity.

Below, we present some technical lemmas used in the proof of \cref{theorem:general-achievability}. The first lemma provides a formal derivation of the marginal distribution of the output letters:

\begin{lemma}[Marginal Distribution of $Y$] \label{lemma:y-marginal}
    Conditioned on sending the messages $(\boldsymbol{\mu}_1, \dots, \boldsymbol{\mu}_d)$,
    \begin{align}
        p_{Y_j}(y) = \sum_{c=1}^d \sum_{b=1}^d \frac{\rho_b}{d} \, p_{Z_{c,b}}(y)
    \end{align}
    for each $j \in [n]$, where $p_{Z_{c,b}}$ is the marginal distribution of $Z_t$ for $t \in \mathcal{J}_{c,b}$.
\end{lemma}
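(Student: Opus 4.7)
The plan is to directly compute the marginal of $Y_j$ by conditioning on the random permutation $\sigma$ and then grouping indices by subsegment. Since $Y_j = Z_{\sigma(j)}$ with $\sigma$ drawn uniformly from $\Sg_n$ independently of everything else, the distribution of $\sigma(j)$ is itself uniform on $[n]$ for any fixed $j$. Formally, for each $t \in [n]$, the number of permutations with $\sigma(j) = t$ is $(n - 1)!$, so $\P{\sigma(j) = t} = \lfrac{1}{n}$.

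Next, I would use the law of total probability (conditioning on $\sigma(j)$) and independence of $\sigma$ from $Z_1^n$ (given the messages) to write
\begin{align}
    p_{Y_j}(y) = \sum_{t=1}^n \P{\sigma(j) = t} p_{Z_t}(y) = \frac{1}{n} \sum_{t=1}^n p_{Z_t}(y) \, .
\end{align}
Then I would partition $[n] = \bigsqcup_{c,b \in [d]} \mathcal{J}_{c,b}$ and use the fact that, conditioned on the sent messages, the marginal distribution of $Z_t$ depends on $t$ only through the subsegment $(c,b)$ containing $t$ (since, within any fixed subsegment, each sender's encoder emits i.i.d.\ letters or a constant, and the $W_t$'s are pushed through the DMC independently). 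Thus $p_{Z_t} = p_{Z_{c,b}}$ for all $t \in \mathcal{J}_{c,b}$.

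Substituting and using $\abs{\mathcal{J}_{c,b}} = \rho_b \, \lfrac{n}{d}$ gives
\begin{align}
    p_{Y_j}(y) = \frac{1}{n} \sum_{c=1}^d \sum_{b=1}^d \abs{\mathcal{J}_{c,b}} \, p_{Z_{c,b}}(y) = \sum_{c=1}^d \sum_{b=1}^d \frac{\rho_b}{d} \, p_{Z_{c,b}}(y) \, ,
\end{align}
which is exactly the claimed identity. There is no real obstacle here; the only subtlety worth noting explicitly is that the two uses of independence (the independence of $\sigma$ from $Z_1^n$, and the fact that the marginal $p_{Z_t}$ only depends on the subsegment of $t$) both follow immediately from the formal model and from the structure of the encoders described in \cref{section:general-achievability}.
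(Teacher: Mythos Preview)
Your proposal is correct and follows essentially the same approach as the paper: both arguments reduce $p_{Y_j}(y)$ to $\frac{1}{n}\sum_{t=1}^n p_{Z_t}(y)$ via the uniformity of the random permutation and its independence from $Z_1^n$, then group indices by subsegment using $\abs{\mathcal{J}_{c,b}} = \rho_b \, \lfrac{n}{d}$. The paper's version is more explicit (summing over all of $\mathcal{Y}^n \times \Sg_n$ and factoring), whereas you shortcut by conditioning directly on $\sigma(j)$, but the underlying argument is the same.
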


\begin{proof}
    Let $\Pi \in \mathrm{S}_n$ denote the permutation sampled by the random permutation block. Recall from \cref{subsection:formal-model} that $\mathcal{Y}$ denotes the output alphabet of the adder MAC. Then,
    \begin{align}
        p_{Y_j}(y) &\stackclap{(a)}{=} \sum_{z_1^n \in \mathcal{Y}^n} \sum_{\pi \in \mathrm{S}_n} p_{Z_1^n}(z_1^n) \, p_\Pi(\pi) \, p_{Y_j | Z_1^n, \Pi}(y | z_1^n, \pi) \\
        &\stackclap{(b)}{=} \sum_{z_1^n \in \mathcal{Y}^n} \sum_{\pi \in \mathrm{S}_n} \prn{\prod_{t=1}^n p_{Z_t}(z_t)} p_\Pi(\pi) \, p_{Y_j | Z_1^n, \Pi}(y | z_1^n, \pi) \\
        &\stackclap{(c)}{=} \frac{1}{n!} \sum_{z_1^n \in \mathcal{Y}^n} \sum_{\pi \in \mathrm{S}_n} \prn{\prod_{t=1}^n p_{Z_t}(z_t)} p_{Y_j | Z_1^n, \Pi}(y | z_1^n, \pi) \\
        &\stackclap{(d)}{=} \frac{1}{n!} \sum_{z_1^n \in \mathcal{Y}^n} \sum_{\pi \in \mathrm{S}_n} \prn{\prod_{t=1}^n p_{Z_t}(z_t)} \Iv{y = z_{\pi(j)}} \\
        &= \frac{1}{n!} \sum_{\pi \in \mathrm{S}_n} \sum_{z_1^n \in \mathcal{Y}^n} \Biggl\{ p_{Z_{\pi(j)}} \mprn{z_{\pi(j)}} \Iv{y = z_{\pi(j)}} \\
        &\qquad \cdot \prod_{t \neq \pi(j)} p_{Z_t}(z_t) \Biggr\} \\
        &\stackclap{(e)}{=} \frac{1}{n!} \sum_{\pi \in \mathrm{S}_n} \Biggl\{ \Biggl( \sum_{z_{\pi(j)} \in \mathcal{Y}} p_{Z_{\pi(j)}} \mprn{z_{\pi(j)}} \Iv{y = z_{\pi(j)}} \Biggr) \\
        &\qquad \cdot \prod_{t \neq \pi(j)} \sum_{z_t \in \mathcal{Y}} p_{Z_t}(z_t) \Biggr\} \\
        &\stackclap{(f)}{=} \frac{1}{n!} \sum_{\pi \in \mathrm{S}_n} p_{Z_{\pi(j)}}(y) \\
        &= \frac{1}{n!} \sum_{t=1}^n \sum_{\substack{\pi \in \mathrm{S}_n: \\ \pi(j) = t}} p_{Z_t}(y) \\
        &= \frac{1}{n} \sum_{t=1}^n p_{Z_t}(y) \\
        &= \frac{1}{n} \sum_{c=1}^d \sum_{b=1}^d \sum_{t \in \mathcal{J}_{c,b}} p_{Z_t}(y) \\
        &\stackclap{(g)}{=} \sum_{c=1}^d \sum_{b=1}^d \frac{\rho_b}{d} \, p_{Z_{c,b}}(y) \, ,
    \end{align}
    where (a) follows from the assumptions of the formal model, (b) follows from independence of the $Z_1^n$, (c) follows from uniformity of $\Pi$, (d) follows from the meaning of a permutation block, (e) holds due to the distributive property
    \begin{align}
        \sum_{x_1 \in \mathcal{X}_1} \cdots \sum_{x_n \in \mathcal{X}_n} \prod_{i=1}^n f_i(x_i) = \prod_{i=1}^n \sum_{x_i \in \mathcal{X}_i} f_i(x_i) \, ,
    \end{align}
    (f) holds because probabilities in a distribution sum to $1$, and (g) holds because $\abs{\mathcal{J}_{c,b}} = \rho_b \, \frac{n}{d}$ for all $c, b \in [d]$.
\end{proof}

The second lemma characterizes the entries of $\tilde{\mathbf{C}}^{-1}$, where the matrix $\tilde{\mathbf{C}}$ is defined as in \cref{eq:c-tilde-matrix}:

\begin{lemma} \label{lemma:c-tilde-inverse}
    Let $\tilde{\mathbf{C}}$ be defined as in \cref{eq:c-tilde-matrix}. Then $\tilde{\mathbf{C}}$ is invertible, and all non-zero entries of $\tilde{\mathbf{C}}^{-1}$ have unit magnitude, i.e.,
    \begin{align}
        \tilde{\mathbf{C}}^{-1} \in \brc{-1, 0, 1}^{(dp+1) \times (dp+1)} \, .
    \end{align}
\end{lemma}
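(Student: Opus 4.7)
The strategy is to show that $\tilde{\mathbf{C}}$ is both totally unimodular and invertible, and then combine these facts via the adjugate formula. Let $\mathbf{c}_0 \in \R^{dp+1}$ denote the prepended column of $\tilde{\mathbf{C}}$; by construction, $\mathbf{c}_0$ has a $1$ in its first entry and zeros elsewhere.

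For invertibility, I would argue that $\mathbf{c}_0$ does not lie in the column span of $\mathbf{C}$. Since the earlier display (eq:full-column-rank) shows $\mathbf{C}$ has full column rank $dp$, this suffices for $\tilde{\mathbf{C}}$ to have $dp+1$ linearly independent columns. Specifically, consider $\mathbf{v} \in \R^{dp+1}$ with value $-1$ in its first $d(p-1)+1$ entries (indexing the rows of $\mathbf{C}_1$) and value $+1$ in its last $d$ entries (indexing the rows of $\mathbf{C}_2$). Using the characterizations in (eq:c1-matrix-characterization) and (eq:c2-matrix), for any column $(c,k)$ the inner product $[\mathbf{v}^\top \mathbf{C}]_{(c,k)}$ picks up $-1$ from the unique $1$ at row $(c-1)(p-1)+k$ of $\mathbf{C}_1$ and $+1$ from the unique $1$ at row $d(p-1)+c$ of $\mathbf{C}_2$, hence $\mathbf{v}^\top \mathbf{C} = \mathbf{0}$. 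Since $\mathbf{v}^\top \mathbf{c}_0 = -1 \neq 0$, we conclude $\mathbf{c}_0 \notin \spn(\text{columns of } \mathbf{C})$ and $\tilde{\mathbf{C}}$ is invertible.

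For total unimodularity, I would invoke the classical Heller-Tompkins criterion: a $\{0, \pm 1\}$ matrix with at most two nonzero entries per column is totally unimodular whenever the rows admit a bipartition such that every column containing two nonzeros of the same sign has those entries in different parts. The prepended column has only one $1$ (imposing no constraint), and each remaining column of $\tilde{\mathbf{C}}$ contains exactly two $1$s, one inherited from $\mathbf{C}_1$ and one from $\mathbf{C}_2$. Taking $R_1 = \bktz{0, d(p-1)}$ (the $\tilde{\mathbf{C}}_1$ rows) and $R_2 = \bktz{d(p-1)+1, dp}$ (the $\tilde{\mathbf{C}}_2$ rows) places the two $1$s of each such column in different parts, so the criterion applies and every square submatrix of $\tilde{\mathbf{C}}$ has determinant in $\{-1, 0, 1\}$.

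Combining the two facts yields $\det(\tilde{\mathbf{C}}) \in \{-1, 1\}$. By the adjugate formula, each entry of $\tilde{\mathbf{C}}^{-1}$ equals $\pm \det(M) / \det(\tilde{\mathbf{C}})$ for some $(dp) \times (dp)$ submatrix $M$ of $\tilde{\mathbf{C}}$, and total unimodularity forces $\det(M) \in \{-1, 0, 1\}$. Hence $\tilde{\mathbf{C}}^{-1} \in \brc{-1, 0, 1}^{(dp+1) \times (dp+1)}$ as claimed. The main obstacle is identifying the specific left null-space vector $\mathbf{v}$ that certifies $\mathbf{c}_0 \notin \spn(\text{columns of } \mathbf{C})$; the total-unimodularity step, while requiring a careful bipartition, follows routinely from the block structure of $\tilde{\mathbf{C}}$ once the Heller-Tompkins criterion is in hand.
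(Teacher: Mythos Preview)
Your proof is correct and takes a genuinely different route from the paper's. The paper argues invertibility by the same ``leftmost nonzero'' device used for \cref{eq:full-column-rank}, and then establishes the $\{-1,0,1\}$ claim by an explicit induction: it shows that each column of $\tilde{\mathbf{C}}^{-1}$ satisfies the recurrence $[\tilde{\mathbf{C}}^{-1}]_s = \mathbf{e}_{t_s} - [\tilde{\mathbf{C}}^{-1}]_{s_2}$, where $s_2$ is the other row sharing a $1$ in column $t_s$, and propagates the unit-magnitude property along this recurrence. Your argument instead recognises $\tilde{\mathbf{C}}$ as (essentially) the incidence matrix of a bipartite graph on the row bipartition $(\text{$\mathbf{C}_1$-rows}, \text{$\mathbf{C}_2$-rows})$, invokes the Heller--Tompkins criterion to get total unimodularity, and reads off the inverse entries from the adjugate formula; your invertibility certificate via the left null vector $\mathbf{v}$ is also different from the paper's triangular-style argument. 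Your approach is shorter and more conceptual, importing a standard combinatorial-optimization tool; the paper's approach is fully self-contained and, as a byproduct, yields an explicit recursive description of the columns of $\tilde{\mathbf{C}}^{-1}$ (including their support structure), which your argument does not supply but which the statement does not require.
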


\begin{proof}
    By an argument similar to the justification of \cref{eq:full-column-rank}, each column of $\tilde{\mathbf{C}}$ is the leftmost column that contains a non-zero entry in some row of $\tilde{\mathbf{C}}$. Thus, $\tilde{\mathbf{C}}$ has full rank as desired.
    
    Index the rows and columns of $\tilde{\mathbf{C}}$ by $s \in [dp + 1]$ and $t \in [dp + 1]$ respectively, and vice-versa for $\tilde{\mathbf{C}}^{-1}$. Given $s$, let
    \begin{align}
        t_s = \min \brc{t \in [dp + 1]: \bkt{\tilde{\mathbf{C}}}_{s,t} \neq 0}
    \end{align}
    be the leftmost column of $\tilde{\mathbf{C}}$ with a non-zero entry in row $s$. We will show by induction on $t_s$ that for each $s \in [dp + 1]$, all non-zero entries of $\ibkt{\tilde{\mathbf{C}}^{-1}}_s$ have unit magnitude and are located in or above row $t_s$. Fix $s \in [dp + 1]$.
    
    Base case: $t_s = 1$. By definition of $t_s$, it follows that $s = 1$, since $\ibkt{\tilde{\mathbf{C}}}_{s,1} = \bkt{\mathbf{e}_1}_s$ is non-zero only when $s = 1$. We have
    \begin{align}
        \bkt{\tilde{\mathbf{C}}^{-1}}_s = \bkt{\tilde{\mathbf{C}}^{-1}}_1 \stackclap{(a)}{=} \mathbf{e}_1 = \mathbf{e}_{t_s} \, ,
    \end{align}
    where (a) holds because $\tilde{\mathbf{C}} \mathbf{e}_1 = \mathbf{e}_1$. This is what we wanted to show.
    
    Inductive step: $t_s > 1$. By definition of $\tilde{\mathbf{C}}$, column $t_s$ of $\tilde{\mathbf{C}}$ corresponds to some column $(c, k)$ of $\mathbf{C}$. Two properties follow from \cref{eq:c1-matrix-characterization} and \cref{eq:c2-matrix}:
    \begin{enumerate}
        \item Row $(c - 1)(p - 1) + k$ in $\mathbf{C}_1$ and row $c$ in $\mathbf{C}_2$ are the only rows in $\mathbf{C}_1$ and $\mathbf{C}_2$ with a non-zero entry at column $(c, k)$. Thus, there are exactly two rows $s_1$ and $s_2$ in $\tilde{\mathbf{C}}$ with a non-zero entry at column $t_s$.
        \item Iff $k = 0$, column $(c, k)$ is the leftmost column in $\mathbf{C}_2$ with a non-zero entry at row $c$. Iff $k > 0$,\footnote{The one exception to this ``only if'' is $(c, k) = (1, 0)$, but observe that the corresponding column in $\tilde{\mathbf{C}}_1$ is not the leftmost column with a non-zero entry at the top row. Therefore, this ``iff'' holds in the context of $\tilde{\mathbf{C}}$.} column $(c, k)$ is the leftmost column in $\mathbf{C}_1$ with a non-zero entry at row $(c - 1)(p - 1) + k$. Thus, exactly one of $t_s = t_{s_1}$ and $t_s = t_{s_2}$ is true.
    \end{enumerate}
    Assume without loss of generality that $t_s = t_{s_1}$. By the two properties, it follows that $s = s_1$. Then
    \begin{align}
        \bkt{\tilde{\mathbf{C}}^{-1}}_s = \mathbf{e}_{t_s} - \bkt{\tilde{\mathbf{C}}^{-1}}_{s_2} \, , \label{eq:inverse-column-s}
    \end{align}
    because
    \begin{align}
        \tilde{\mathbf{C}} \prn{\mathbf{e}_{t_s} - \bkt{\tilde{\mathbf{C}}^{-1}}_{s_2}} &= \bkt{\tilde{\mathbf{C}}}_{t_s} - \tilde{\mathbf{C}} \bkt{\tilde{\mathbf{C}}^{-1}}_{s_2} \\
        &\stackclap{(a)}{=} \prn{\mathbf{e}_{s_1} + \mathbf{e}_{s_2}} - \mathbf{e}_{s_2} \\
        &\stackclap{(b)}{=} \mathbf{e}_s \, ,
    \end{align}
    where (a) holds by the first property and the fact that $\tilde{\mathbf{C}}$ is a binary matrix, and (b) holds because $s = s_1$. Since $t_{s_2} < t_s$, by the induction hypothesis, all non-zero entries of $\ibkt{\tilde{\mathbf{C}}^{-1}}_{s_2}$ have unit magnitude and are located in or above row $t_{s_2}$. Combined with \cref{eq:inverse-column-s}, this shows all non-zero entries of $\ibkt{\tilde{\mathbf{C}}^{-1}}_s$ have unit magnitude and are located in or above row $t_s$, as desired.
\end{proof}

\section{Converse Proof} \label{section:converse-proof}

In this section, we prove \cref{theorem:general-converse}.

\begin{proof}
    By definition of permutation capacity region, we want to show that every achievable rate $d$-tuple $R = (R_1, \dots, R_d)$ has
    \begin{align}
        \sum_{i=1}^d R_i \leq \frac{d(p - 1)}{2} \, .
    \end{align}
    Fix an achievable $R$. Assume there exists a family of encoders and decoders with rate $d$-tuple $R$ such that $\lim_{n \rightarrow \infty} \Perror{n} = 0$. For any $n \geq 2$,
    \begin{align}
        \sum_{i=1}^d R_i &= \frac{1}{\log_2 n} \sum_{i=1}^d \log_2 n^{R_i} \\
        &\stackclap{(a)}{=} \frac{1}{\log_2 n} \sum_{i=1}^d \H{M_i} \\
        &\stackclap{(b)}{=} \frac{1}{\log_2 n} \H{M} \\
        &= \frac{1}{\log_2 n} \underbrace{\H{M \given \hat{M}}}_{\circled{1}} + \frac{1}{\log_2 n} \underbrace{\I{M; \hat{M}}}_{\circled{2}} \, ,
    \end{align}
    where (a) follows because the messages are uniformly distributed and (b) follows because the messages are independent.
    
    We now follow the argument in \cite[Section III-C]{Makur2020b}. We upper-bound $\circled{1}$ using Fano's inequality:
    \begin{align}
        \circled{1} &\stackclap{(a)}{\leq} 1 + \P{\hat{M} \neq M} \H{M \given \hat{M} \neq M} \\
        &= 1 + \Perror{n} \sum_{i=1}^d \H{M_i \given M_1, \dots, M_{i-1}, \hat{M} \neq M} \\
        &\stackclap{(b)}{\leq} 1 + \Perror{n} \sum_{i=1}^d \log_2 n^{R_i} \\
        &= 1 + \Perror{n} \log_2(n) \sum_{i=1}^d R_i \, ,
    \end{align}
    where (a) follows from Fano's inequality and (b) follows from the standard upper bound on Shannon entropy.

    Next, we upper-bound $\circled{2}$. Consider the Markov chain
    \begin{align}
        M \rightarrow \prn{X_1^d}_1^n \rightarrow W_1^n \rightarrow Z_1^n \rightarrow Y_1^n \rightarrow \hat{M} \, .
    \end{align}
    Observe that for every $y_1^n \in \bktz{d(p - 1)}^n$ and $m \in \mathcal{M}$,
    \begin{align}
        p_{Y_1^n | M}(y_1^n | m) = \frac{\prod_{t=0}^{d(p-1)} (n \bkt{\hat{\mathbf{p}}_Y}_t)!}{n!} \, \P{\hat{\mathbf{p}}_Z = \hat{\mathbf{p}}_Y \given M = m} \, .
    \end{align}
    Since $p_{Y_1^n | M}(y_1^n | m)$ depends on $y_1^n$ through $\hat{\mathbf{p}}_Y$, by the Fisher-Neyman factorization theorem, $\hat{\mathbf{p}}_Y$ is a sufficient statistic of $Y_1^n$. Thus,
    \begin{align}
        \circled{2} &\stackclap{(a)}{\leq} \I{M; Y_1^n} \\
        &\stackclap{(b)}{=} \I{M; \hat{\mathbf{p}}_Y} \\ 
        &\stackclap{(c)}{\leq} \I{W_1^n; \hat{\mathbf{p}}_Y} \, ,
    \end{align}
    where (a) and (c) follow from the data processing inequality and (b) follows from sufficiency.

    Our proof uses the following result which we distill from the literature and restate for convenience.

    \begin{lemma}[Permutation Channel Mutual Information {\cite[Eq. 56]{Makur2020b}}] \label{lemma:perturbation-channel-mutual-information}
        Consider the model
        \begin{equation}
            \begin{tikzcd}
                X_1^n \in \bktz{q}^n \arrow{r}{\text{channel}} & Z_1^n \in \bktz{q}^n \arrow{r}{\text{permute}} & Y_1^n \in \bktz{q}^n
            \end{tikzcd} ,
        \end{equation}
        where $X_1^n$ is a codeword, $Z_1^n$ is the result of passing $X_1^n$ letter-wise through a strictly positive $(q+1) \times (q+1)$ DMC, and $Y_1^n$ is a uniformly random permutation of $Z_1^n$. Then, there exist positive constants $\alpha$, $\beta$, and $\gamma$ such that for all sufficiently large $n$,
        \begin{align}
            &\equad \I{X_1^n; \hat{\mathbf{p}}_Y} \\
            &\leq q \prn{\log_2(n + 1) - \frac{1}{2} \log_2(\alpha n) + \frac{\beta}{n} \log_2(\gamma n)} \, .
        \end{align}
    \end{lemma}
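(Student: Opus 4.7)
The plan is to expand the mutual information as $\I{X_1^n; \hat{\mathbf{p}}_Y} = \H{\hat{\mathbf{p}}_Y} - \H{\hat{\mathbf{p}}_Y \given X_1^n}$ and to control the two terms separately: a crude combinatorial upper bound on the unconditional entropy, matched with a sharper multinomial-entropy lower bound on the conditional entropy, should combine to give the stated inequality.

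For the unconditional term, I would observe that $\hat{\mathbf{p}}_Y$ is a type vector whose support is the set of types of length-$n$ sequences over $\bktz{q}$. By a standard stars-and-bars count, this set has cardinality $\multibinom{n + 1}{q} = \binom{n + q}{q}$, which satisfies $\binom{n + q}{q} = \prod_{i=1}^q \frac{n + i}{i} \leq (n + 1)^q$. Applying the uniform upper bound on Shannon entropy then yields $\H{\hat{\mathbf{p}}_Y} \leq q \log_2(n + 1)$, which supplies the first term of the inequality.

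For the conditional term, I would exploit the fact that the permutation block preserves the empirical distribution, so $\hat{\mathbf{p}}_Y = \hat{\mathbf{p}}_Z$ and therefore $\H{\hat{\mathbf{p}}_Y \given X_1^n} = \H{\hat{\mathbf{p}}_Z \given X_1^n}$. Conditioned on $X_1^n = x_1^n$, the symbols $Z_1, \dots, Z_n$ are mutually independent (each distributed as the $x_j$th row of the channel), so $n \hat{\mathbf{p}}_Z$ is a sum of $n$ independent categorical indicator vectors. I would decompose the joint entropy by the chain rule over the $q$ free coordinates of $\hat{\mathbf{p}}_Z$ (the last being redundant by the sum-to-one constraint) and apply the sharp binomial entropy expansion $\H{\mathrm{Bin}(n, p)} = \frac{1}{2} \log_2(2 \pi e n p(1 - p)) + \bigO{\lfrac{1}{n}}$ to each conditional increment. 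Strict positivity of $\mathbf{P}_{Z|X}$ guarantees that each success probability is bounded away from $0$ and $1$ uniformly in $x_1^n$, so the variance of each count is $\bigTheta{n}$, yielding a lower bound of the form $\H{\hat{\mathbf{p}}_Z \given X_1^n = x_1^n} \geq \frac{q}{2} \log_2(\alpha n) - \frac{q \beta}{n} \log_2(\gamma n)$ with constants depending only on the extremal entries of $\mathbf{P}_{Z|X}$.

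The main obstacle is producing this conditional-entropy lower bound \emph{uniformly} in $x_1^n$ while keeping track of explicit constants: although the multinomial is asymptotically Gaussian with covariance of order $n$, care is needed when a given row of $\mathbf{P}_{Z|X}$ is rarely selected by $x_1^n$, since then the corresponding binomial term has small parameter and is not well approximated by its Gaussian asymptotics. I would handle this either by a direct telescoping chain-rule argument along the lines of \cite[Section III-C]{Makur2020b}, or by reducing first to the case where the empirical distribution of $x_1^n$ is bounded away from the simplex boundary and then handling the residual degenerate case separately. Subtracting the conditional-entropy lower bound from the unconditional upper bound, and absorbing the $\bigO{\lfrac{1}{n}}$ corrections into the $\frac{\beta}{n} \log_2(\gamma n)$ term, then gives the stated inequality.
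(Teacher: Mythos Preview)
The paper does not prove this lemma; it is restated verbatim from \cite[Eq.~56]{Makur2020b} as a known result and invoked as a black box in the converse. So there is no in-paper proof to compare against.

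Your outline is essentially the argument in \cite{Makur2020b}: split $\I{X_1^n;\hat{\mathbf{p}}_Y}=\H{\hat{\mathbf{p}}_Y}-\H{\hat{\mathbf{p}}_Y\mid X_1^n}$, bound the first term by $q\log_2(n+1)$ via type counting, and lower-bound the second by $\tfrac{q}{2}\log_2(\alpha n)+O\!\left(\tfrac{\log n}{n}\right)$ using a chain-rule decomposition into $q$ Poisson-binomial counts followed by the $\tfrac{1}{2}\log_2(2\pi e\,\mathrm{var})$ entropy estimate.

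One clarification: the ``obstacle'' you flag---a row of $\mathbf{P}_{Z|X}$ being rarely selected by $x_1^n$---is not actually an issue. Conditioned on $X_1^n=x_1^n$, the count $n[\hat{\mathbf{p}}_Z]_k=\sum_{j=1}^n\Iv{Z_j=k}$ is a sum of independent Bernoullis with parameters $[\mathbf{P}_{Z|X}]_{x_j,k}$, and strict positivity forces \emph{every} such parameter into $[p_{\min},1-p_{\min}]$ regardless of $x_j$. Hence the variance of each count is at least $np_{\min}(1-p_{\min})$ uniformly in $x_1^n$, with no degenerate case to split off. The genuine technical step is the entropy lower bound for a Poisson-binomial (non-identical Bernoullis) rather than an exact binomial; this follows, for instance, from log-concavity of the Poisson-binomial together with the variance bound, or from the explicit treatment in \cite{Makur2020b}.
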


    Combining the results above and taking the limit as $n \rightarrow \infty$,
    \begin{align}
        \sum_{i=1}^d R_i &\leq \lim_{n \rightarrow \infty} \Biggl\{ \frac{1}{\log_2 n} + \Perror{n} \sum_{i=1}^d R_i + d(p - 1) \\
        &\quad \cdot \prn{\frac{\log_2(n + 1)}{\log_2 n} - \frac{\log_2(\alpha n)}{2 \log_2 n} + \frac{\beta \log_2(\gamma n)}{n \log_2 n}} \Biggr\} \\
        &= d(p - 1) \prn{1 - \frac{1}{2}} \\
        &= \frac{d(p - 1)}{2}
    \end{align}
    as desired.
\end{proof}

\section{Conclusion}

In this paper, we formulated the PAMAC network model as a natural abstraction of many-to-one communication over a permutation channel. Motivated primarily by theoretical interest in mathematical techniques with overarching relevance to information theory, along with the aforementioned applications to multipath routed networks and wireless communications, we undertook a comprehensive study of the PAMAC's information-theoretic properties and ultimately derived an exact characterization of its permutation capacity region. Our work underscores the fundamental role played by time sharing in establishing achievability results for multiple-access channels, and sheds light on nascent connections between mixed-radix numerical systems and coding schemes for time sharing. Our achievability proofs reaffirm the suitability of encoding messages in the permutation channel setting as samples from a Bernoulli or categorical distribution by defining a correspondence between distribution parameters and messages. Secondarily, we presented a contrasting achievability result for the binary PAMAC, underpinned by the observation that the additive structure of the PAMAC encodes the relevant Bernoulli parameters within the roots of the probability generating function of the adder's output distribution. Leveraging properties of Frobenius companion matrices, we framed our analysis through the lens of spectral stability and notably exploited the Bauer-Fike theorem from matrix perturbation theory to obtain explicit bounds on decoding performance.

We propose three directions for future work. Firstly, our analysis in this paper treats the number of senders $d$ as a constant that is independent of the codeword length $n$. A natural continuation of our line of research may tighten the bounds in our achievability proofs to improve their asymptotic dependence on $d$. Secondly, future work may adapt our results to a variant of the PAMAC with the adder and DMC swapped. As this alternative model entails passing each sender's codeword through a separate DMC, qualitatively distinct subcases may arise depending on whether the DMCs share the same transition probabilities. Lastly, a promising follow-up goal is to extend our time sharing proofs to general MACs, wherein the senders' letters are combined by a general function $\eta: \mathcal{X}^d \rightarrow \mathcal{Y}$ instead of an adder to produce the letters $W_j = \eta(X_{1,j}, \dots, X_{d,j})$.

Overall, our main contributions and proposed future directions highlight the continuing importance of multiple-access permutation channels as a captivating object of theoretical interest, which nonetheless enjoys relevance to a diverse range of downstream applications.

\appendices
\crefalias{section}{appendix}   % make crefs display "Appendix X" instead of "Section X" - https://tex.stackexchange.com/a/121055

\section{Proof of Binary Achievability Using Time Sharing} \label{appendix:binary-achievability-time-sharing}

In this appendix, we prove \cref{theorem:binary-achievability-time-sharing}. Our argument makes use of two auxiliary results (\cref{lemma:y-marginal,lemma:minimum-singular-value}), which we prove at the end of \cref{section:general-achievability} and in \cref{appendix:auxiliary-results}, respectively.

\begin{proof}[Proof of \cref{theorem:binary-achievability-time-sharing}]
    By definition of permutation capacity region, it suffices to show that for all $\alpha > 0$, any rate $d$-tuple satisfying
    \begin{align}
        \sum_{i=1}^d R_i = \frac{d}{2} - \alpha \enspace \text{and} \enspace \forall i \in [d], \, R_i > 0
    \end{align}
    is achievable. Fix $\alpha > 0$ and $(R_1, \dots, R_d) \in \mathbb{R}_+^d$ satisfying the above. By definition of achievable rate tuples, we want to show that
    \begin{align}
        \forall \epsilon > 0, \, \exists n_0 \in \mathbb{N}, \, \forall n \geq n_0, \, \Perror{n} \leq \epsilon \, .
    \end{align}
    Fix $\epsilon > 0$. Choose
    \begin{align}
        n_0 = \max \brc{2^{\frac{d}{\min_{i \in [d]} R_i}}, n_1, \sqrt{\frac{2(d + 1)}{\epsilon}}} \, ,
    \end{align}
    where $n_1 \in \mathbb{N}$ is sufficiently large such that
    \begin{align}
        \forall n \geq n_1, \, \frac{\sigma_{\min}^2 \mprn{\mathbf{P}_{Z|W}}}{\sqrt{2} \prn{d + 2}^{\frac{9}{2}}} \, n^{-\frac{1}{2} + \frac{\alpha}{d}} \geq \sqrt{\frac{\log_e n}{n}} \, .
    \end{align}
    Such an $n_1$ exists because
    \begin{align}
        \lim_{n \rightarrow \infty} \frac{n^{-\frac{1}{2} + \frac{\alpha}{d}}}{\sqrt{\frac{\log_e n}{n}}} = \lim_{n \rightarrow \infty} \frac{n^{\frac{\alpha}{d}}}{\sqrt{\log_e n}} = \infty \, .
    \end{align}
    Fix $n \geq n_0$. For notational simplicity, let $m_i = n^{\frac{R_i}{d}}$ for each $i \in [d]$, and assume without loss of generality that each $m_i \in \mathbb{N}$. Consider the following message sets, encoders, and decoder.
    
    \textbf{Message sets.} Without loss of generality, let $\mathcal{M}_i$ be a $d$-dimensional lattice of evenly spaced points in $[0, 1]^d$. (Contrary to the situation in the proof of \cref{theorem:binary-achievability-root-stability}, there is no need to specifically incorporate padding around the boundary of the hypercube.) Formally,
    \begin{align}
        \forall i \in [d], \enspace \mathcal{M}_i &= \Theta_i^d \, , \\
        \forall i \in [d - 1], \enspace \Theta_i &= \brc{\frac{\ell_i}{m_i - 1}: \ell_i \in \bktz{m_i - 1}} \, , \\
        \Theta_d &= \brc{\frac{\ell_d}{m_d}: \ell_d \in \bktz{m_d - 1}} \, .
    \end{align}
    The denominators $m_i - 1$ are positive, and thus the message sets are well-defined, because
    \begin{align}
        m_i = n^{\frac{R_i}{d}} \geq n_0^{\frac{R_i}{d}} \geq \prn{2^{\frac{d}{\min_{i \in [d]} R_i}}}^{\frac{R_i}{d}} \geq 2
    \end{align}
    for each $i \in [d]$.
    
    We have $\abs{\mathcal{M}_i} = \abs{\Theta_i}^d = m_i^d = n^{R_i}$ for each $i \in [d]$, which satisfies the definition of rate $d$-tuple. We represent a message $\boldsymbol{\mu}_i \in \mathcal{M}_i$ sent by sender $i$ as a $d$-tuple of variables $\boldsymbol{\mu}_i = (\theta_{i,1}, \dots, \theta_{i,d})$.
    
    \textbf{Encoders.} Each encoder $f_{i,n}: \mathcal{M}_i \rightarrow \brc{0, 1}^n$ is randomized and outputs a binary codeword of length $n$. We adopt the segmentation scheme described in \cref{section:general-achievability}. In each segment $c$, sender $i$ is active in subsegment $i$ and passive in all other subsegments. During its active phase, sender $i$'s encoder generates letters by sampling i.i.d. from $\Bernoulli{\theta_{i,c}}$. During its passive phase, sender $i$ outputs all ones if it is one of the first $c - 1$ passive senders, and all zeros otherwise. Formally, for $j \in [n]$, let $c(j)$ and $b(j)$ be the indices of the segment and subsegment containing the $j$th letter, respectively. Then,
    \begin{align}
        &f_{i,n}(\boldsymbol{\mu}_i) = \prn{x_i}_1^n \\
        \text{where} \enspace &\forall j \in [n], \, \begin{cases}
            x_{i,j} \sim \Bernoulli{\theta_{i,c(j)}} \, , & \text{if $i = b(j)$} \, , \\
            x_{i,j} = \Iv{i \leq c(j) - 1} \, , & \text{if $i < b(j)$} \, , \\
            x_{i,j} = \Iv{i \leq c(j)} \, , & \text{if $i > b(j)$} \, .
        \end{cases}
    \end{align}
    This coding scheme ensures the alphabets of $W_j$ from adjacent segments overlap at only one value, namely value $c$ for segments $c$ and $c + 1$.
    
    \textbf{Decoder.} We adopt the definitions of $m$, $\Phi$, and $h$ from the time sharing discussion in \cref{section:general-achievability}. Upon receiving the output codeword $y_1^n$, the decoder $g_n: \bktz{d}^n \rightarrow \mathcal{M}$ executes the following:
    \begin{enumerate}
        \item Form the matrix $\mathbf{C} \in \mathbb{R}^{(d+1) \times d}$, given by
        \begin{align}
            \bkt{\mathbf{C}}_{s,t} = \begin{cases}
                -1 \, , & \text{if $s = t$} \, , \\
                1 \, , & \text{if $s = t + 1$} \, , \\
                0 \, , & \text{otherwise} \, .
            \end{cases}
        \end{align}
        \item Compute the matrix $\mathbf{A} = \mathbf{P}_{Z|W}^\top \mathbf{C} \in \mathbb{R}^{(d+1) \times d}$, which is equivalent to
        \begin{align}
            \forall t \in [d], \enspace \bkt{\mathbf{A}}_t = \bkt{\mathbf{P}_{Z|W}}_{\ang{t}} - \bkt{\mathbf{P}_{Z|W}}_{\ang{t-1}} \, ,
        \end{align}
        where we \emph{one}-index the matrix $\mathbf{A}$. (Recall that we \emph{zero}-index the channel matrix $\mathbf{P}_{Z|W}$.)
        \item Compute the empirical distribution $\hat{\mathbf{p}}_Y \in \mathcal{S}_d$, given by
        \begin{align}
            \forall t \in \bktz{d}, \enspace \bkt{\hat{\mathbf{p}}_Y}_t = \frac{1}{n} \sum_{j=1}^n \Iv{y_j = t} \, .
        \end{align}
        \item Form the vector $\hat{\mathbf{b}} \in \mathbb{R}^{d+1}$, given by
        \begin{align}
            \hat{\mathbf{b}} = d \hat{\mathbf{p}}_Y^\top - \sum_{t=0}^{d-1} \bkt{\mathbf{P}_{Z|W}}_{\ang{t}} \, .
        \end{align}
        \item Using the normal equations, compute the least-squares solution (cf. \cite{Makur2020b})
        \begin{align}
            \tilde{\boldsymbol{\phi}} &= \brc{\tilde{\phi}_c}_{c=1}^d \\
            &= \arg \min_{\boldsymbol{\phi} \in \mathbb{R}^d} \norm{\mathbf{A} \boldsymbol{\phi} - \hat{\mathbf{b}}}_2 \\
            &= \underbrace{\prn{\mathbf{A}^\top \mathbf{A}}^{-1} \mathbf{A}^\top}_{\mathbf{A}^\dagger} \hat{\mathbf{b}} \, ,
        \end{align}
        where $\mathbf{A}^\dagger$ is the \emph{Moore-Penrose pseudoinverse} of $\mathbf{A}$.
        \item Compute $\hat{\boldsymbol{\phi}} = \brc{\hat{\phi}_c}_{c=1}^d \in \Phi^d$ by rounding each entry of $\tilde{\boldsymbol{\phi}}$ to the nearest element in $\Phi$ (cf. \cite{Makur2020b}):
        \begin{align}
            \forall c \in [d], \enspace \hat{\phi}_c = \arg \min_{\phi \in \Phi} \abs{\tilde{\phi}_c - \phi} \, .
        \end{align}
        \item Convert each $\hat{\phi}_c$ into a $d$-tuple in $\Theta_1 \times \cdots \times \Theta_d$ using the inverse mapping $h^{-1}$:
        \begin{align}
            \forall c \in [d], \enspace \brc{\hat{\theta}_{i,c}}_{i=1}^d = h^{-1} \mprn{\hat{\phi}_c} \, .
        \end{align}
        \item Form the predicted messages
        \begin{align}
            \forall i \in [d], \enspace \hat{\boldsymbol{\mu}}_i = \brc{\hat{\theta}_{i,c}}_{c=1}^d \, .
        \end{align}
        \item Return the predicted messages $\prn{\hat{\boldsymbol{\mu}}_1, \dots, \hat{\boldsymbol{\mu}}_d}$.
    \end{enumerate}
    
    \textbf{Proof of decoder well-definedness.} In this part, we verify that $\mathbf{A}^\top \mathbf{A}$ is indeed invertible, and so the decoder is well-defined. By inspection, $\mathbf{C}^\top$ is in row echelon form with a pivot in each row, and so $\mathbf{C}$ has full column rank. Since $\mathbf{A}$ is the product of two matrices with full column rank, $\mathbf{A}$ has full column rank. Since a Gramian matrix has the same rank as its vector realization,
    \begin{align}
        \rank{\mathbf{A}^\top \mathbf{A}} = \rank{\mathbf{A}} = d \, ,
    \end{align}
    and thus $\mathbf{A}^\top \mathbf{A}$ is invertible as desired.
    
    \textbf{Proof of achievability.} In this proof, all probabilities $\P{\cdot}$ are conditioned on sending the true messages $(\boldsymbol{\mu}_1, \dots, \boldsymbol{\mu}_d)$.
    
    \emph{Step 1: Upper-bounding error in least-squares solution.} Let $\mathbf{p}_{W,c,b}$ and $\mathbf{p}_{Z,c,b}$ be the true distributions of $W_j$ and $Z_j$, respectively, given $j \in \mathcal{J}_{c,b}$. The dynamics of the formal model are governed by the equations:
    \begin{align}
        \forall c, b \in [d], \enspace \mathbf{p}_{W,c,b} &= (1 - \theta_{b,c}) \mathbf{e}_{c-1}^\top + \theta_{b,c} \mathbf{e}_c^\top \, , \label{eq:binary-encoders-and-adder} \\
        \forall c, b \in [d], \enspace \mathbf{p}_{Z,c,b} &= \mathbf{p}_{W,c,b} \mathbf{P}_{Z|W} \, , \label{eq:binary-dmc} \\
        \mathbf{p}_Y &= \sum_{c=1}^d \sum_{b=1}^d \frac{\rho_b}{d} \mathbf{p}_{Z,c,b} \, , \label{eq:binary-random-permutation-block}
    \end{align}
    where \cref{eq:binary-encoders-and-adder} models the encoders and adder, \cref{eq:binary-dmc} models the DMC, and \cref{eq:binary-random-permutation-block} models the random permutation block by \cref{lemma:y-marginal}.
    
    Define $\phi_c = h(\theta_{1,c}, \dots, \theta_{d,c})$ for each $c \in [d]$. Define a vector $\mathbf{b} \in \mathbb{R}^{d+1}$ as
    \begin{align}
        \mathbf{b} = d \mathbf{p}_Y^\top - \sum_{t=0}^{d-1} \bkt{\mathbf{P}_{Z|W}}_{\ang{t}} \, .
    \end{align}
    Combining the above,
    \begin{align}
        \mathbf{p}_Y^\top &\stackclap{(a)}{=} \sum_{c=1}^d \sum_{b=1}^d \frac{\rho_b}{d} \Bigl( (1 - \theta_{b,c}) \bkt{\mathbf{P}_{Z|W}}_{\ang{c-1}} \\
        &\quad + \theta_{b,c} \bkt{\mathbf{P}_{Z|W}}_{\ang{c}} \Bigr) \\
        &= \sum_{c=1}^d \sum_{b=1}^d \frac{\rho_b}{d} \bkt{\mathbf{P}_{Z|W}}_{\ang{c-1}} \\
        &\quad + \sum_{c=1}^d \sum_{b=1}^d \frac{\rho_b}{d} \prn{\bkt{\mathbf{P}_{Z|W}}_{\ang{c}} - \bkt{\mathbf{P}_{Z|W}}_{\ang{c-1}}} \theta_{b,c} \\
        &\stackclap{(b)}{=} \frac{1}{d} \sum_{c=1}^d \bkt{\mathbf{P}_{Z|W}}_{\ang{c-1}} \\
        &\quad + \frac{1}{d} \sum_{c=1}^d \prn{\bkt{\mathbf{P}_{Z|W}}_{\ang{c}} - \bkt{\mathbf{P}_{Z|W}}_{\ang{c-1}}} \phi_c \\
        &= \frac{1}{d} \sum_{t=0}^{d-1} \bkt{\mathbf{P}_{Z|W}}_{\ang{t}} \\
        &\quad + \frac{1}{d} \sum_{c=1}^d \prn{\bkt{\mathbf{P}_{Z|W}}_{\ang{c}} - \bkt{\mathbf{P}_{Z|W}}_{\ang{c-1}}} \phi_c \, ,
    \end{align}
    where (a) follows from substituting $\mathbf{p}_{W,c,b}$ and $\mathbf{p}_{Z,c,b}$ into $\mathbf{p}_Y$, and (b) holds by \cref{eq:sum-of-subsegment-proportions} and \cref{eq:weighted-sum-of-subsegment-thetas}. Rearranging,
    \begin{align}
        &\equad \sum_{c=1}^d \prn{\bkt{\mathbf{P}_{Z|W}}_{\ang{c}} - \bkt{\mathbf{P}_{Z|W}}_{\ang{c-1}}} \phi_c \\
        &= d \mathbf{p}_Y^\top - \sum_{t=0}^{d-1} \bkt{\mathbf{P}_{Z|W}}_{\ang{t}} \, .
    \end{align}
    Letting $\boldsymbol{\phi}^* = (\phi_1, \dots, \phi_d)$, this equation can be vectorized as $\mathbf{A} \boldsymbol{\phi}^* = \mathbf{b}$. Therefore $\boldsymbol{\phi}^* = \prn{\mathbf{A}^\top \mathbf{A}}^{-1} \mathbf{A}^\top \mathbf{b}$, and
    \begin{align}
        \norm{\tilde{\boldsymbol{\phi}} - \boldsymbol{\phi}^*}_\infty &\leq \norm{\prn{\mathbf{A}^\top \mathbf{A}}^{-1} \mathbf{A}^\top}_\infty \norm{\hat{\mathbf{b}} - \mathbf{b}}_\infty \\
        &\stackclap{(a)}{\leq} \sqrt{d + 1} \norm{\prn{\mathbf{A}^\top \mathbf{A}}^{-1} \mathbf{A}^\top}_2 \norm{\hat{\mathbf{b}} - \mathbf{b}}_\infty \\
        &\stackclap{(b)}{\leq} \sqrt{d + 1} \underbrace{\norm{\prn{\mathbf{A}^\top \mathbf{A}}^{-1}}_2}_{\circled{1}} \underbrace{\norm{\mathbf{A}^\top}_2}_{\circled{2}} \underbrace{\norm{\hat{\mathbf{b}} - \mathbf{b}}_\infty}_{\circled{3}} \, ,
    \end{align}
    where (a) holds by the equivalence of $\ell^\infty$ and $\ell^2$ norms and (b) holds by the submultiplicativity of the $\ell^2$ matrix norm.
    
    Next, we upper-bound $\circled{1}$:
    \begin{align}
        \circled{1} &= \frac{1}{\sigma_{\min} \mprn{\mathbf{A}^\top \mathbf{A}}} = \frac{1}{\sigma_{\min}^2(\mathbf{A})} \leq \frac{1}{\sigma_{\min}^2 \mprn{\mathbf{P}_{Z|W}} \, \sigma_{\min}^2(\mathbf{C})} \, .
    \end{align}
    Define a matrix $\tilde{\mathbf{C}}$ by prepending one column to $\mathbf{C}$:
    \begin{align}
        \tilde{\mathbf{C}} = \bkt{\begin{array}{@{}c@{}|@{}c@{}}
            \begin{matrix}
                1 \\
                0 \\
                \vdots \\
                0
            \end{matrix} & \makebox[2em]{$\mathbf{C}$}
        \end{array}} \in \mathbb{R}^{(d+1) \times (d+1)} \, .
    \end{align}
    This matrix is invertible, because it is upper triangular with no zeros on its diagonal. By inspection, its inverse is
    \begin{align}
        \bkt{\tilde{\mathbf{C}}^{-1}}_{s,t} = \begin{cases}
            1 \, , & \text{if $s \leq t$} \, , \\
            0 \, , & \text{otherwise} \, .
        \end{cases}
    \end{align}
    It follows that
    \begin{align}
        \frac{1}{\sigma_{\min}(\mathbf{C})} &\stackclap{(a)}{\leq} \frac{1}{\sigma_{\min} \mprn{\tilde{\mathbf{C}}}} \\
        &= \norm{\tilde{\mathbf{C}}^{-1}}_2 \\
        &\leq \frnorm{\tilde{\mathbf{C}}^{-1}} \\
        &\stackclap{(b)}{=} \sqrt{\frac{(d + 1)(d + 2)}{2}} \, ,
    \end{align}
    where (a) holds because adding a column to a tall matrix does not increase its minimum singular value (\cref{lemma:minimum-singular-value}), and (b) follows from the value of $\tilde{\mathbf{C}}^{-1}$. Combining the above,
    \begin{align}
        \circled{1} \leq \frac{(d + 1)(d + 2)}{2 \sigma_{\min}^2 \mprn{\mathbf{P}_{Z|W}}} \, .
    \end{align}
    
    Next, we upper-bound $\circled{2}$:
    \begin{align}
        \circled{2} &\stackclap{(a)}{\leq} \norm{\mathbf{P}_{Z|W}}_2 \frnorm{\mathbf{C}} \\
        &\stackclap{(b)}{=} \norm{\mathbf{P}_{Z|W}}_2 \sqrt{2d} \\
        &\stackclap{(c)}{\leq} \sqrt{d + 1} \norm{\mathbf{P}_{Z|W}}_\infty \sqrt{2d} \\
        &\stackclap{(d)}{=} \sqrt{2d(d + 1)} \\
        &\leq \sqrt{2} \prn{d + 1} \, ,
    \end{align}
    where (a) holds by the submultiplicativity of the $\ell^2$ matrix norm, (b) holds because each column of $\mathbf{C}$ contains two entries with unit magnitude and all other entries $0$, (c) holds by the equivalence of $\ell^2$ and $\ell^\infty$ norms, and (d) holds because each row of $\mathbf{P}_{Z|W}$ sums to $1$ and so the maximum row sum matrix norm \cite[Example 5.6.5]{HornJohnson2013} of $\mathbf{P}_{Z|W}$ is $1$.
    
    By definition of $\mathbf{b}$ and $\hat{\mathbf{b}}$, we have $\circled{3} = d \norm{\hat{\mathbf{p}}_Y - \mathbf{p}_Y}_\infty$. Combining the bounds on $\circled{1}$ to $\circled{3}$,
    \begin{align}
        \norm{\tilde{\boldsymbol{\phi}} - \boldsymbol{\phi}^*}_\infty \leq \frac{(d + 2)^{\frac{9}{2}}}{\sqrt{2} \, \sigma_{\min}^2 \mprn{\mathbf{P}_{Z|W}}} \norm{\hat{\mathbf{p}}_Y - \mathbf{p}_Y}_\infty \, .
    \end{align}
    
    \emph{Step 2: Concentration bound for empirical distribution of $Y$.} We have
    \begin{align}
        &\equad \P{g_n(Y_1^n) \neq (\boldsymbol{\mu}_1, \dots, \boldsymbol{\mu}_d)} \\
        &= \P{\exists i \in [d], \, \hat{\boldsymbol{\mu}}_i \neq \boldsymbol{\mu}_i} \\
        &\stackclap{(a)}{=} \P{\exists i \in [d], \, \exists c \in [d], \, \hat{\theta}_{i,c} \neq \theta_{i,c}} \\
        &\stackclap{(b)}{=} \P{\exists c \in [d], \, \hat{\phi}_c \neq \phi_c} \\
        &\stackclap{(c)}{\leq} \P{\norm{\tilde{\boldsymbol{\phi}} - \boldsymbol{\phi}^*}_\infty \geq \frac{1}{2m}} \\
        &\stackclap{(d)}{=} \P{\norm{\tilde{\boldsymbol{\phi}} - \boldsymbol{\phi}^*}_\infty \geq \frac{1}{2n^{\frac{1}{d} \sum_{i=1}^d R_i}}} \\
        &\stackclap{(e)}{=} \P{\norm{\tilde{\boldsymbol{\phi}} - \boldsymbol{\phi}^*}_\infty \geq \frac{1}{2} n^{-\frac{1}{2} + \frac{\alpha}{d}}} \\
        &\stackclap{(f)}{\leq} \P{\frac{(d + 2)^{\frac{9}{2}}}{\sqrt{2} \, \sigma_{\min}^2 \mprn{\mathbf{P}_{Z|W}}} \norm{\hat{\mathbf{p}}_Y - \mathbf{p}_Y}_\infty \geq \frac{1}{2} n^{-\frac{1}{2} + \frac{\alpha}{d}}} \\
        &\stackclap{(g)}{\leq} \P{\norm{\hat{\mathbf{p}}_Y - \mathbf{p}_Y}_\infty \geq \sqrt{\frac{\log_e n}{n}}} \\
        &\stackclap{(h)}{\leq} \sum_{t=0}^d \P{\abs{\bkt{\hat{\mathbf{p}}_Y}_t - \bkt{\mathbf{p}_Y}_t} \geq \sqrt{\frac{\log_e n}{n}}} \, ,
    \end{align}
    where (a) follows from the definitions of predicted and true messages, (b) holds because $h$ is a bijection, (c) follows from the distance between adjacent elements in $\Phi$, (d) follows from the definitions of $m$ and $m_i$, (e) holds because $\sum_{i=1}^d R_i = \frac{d}{2} - \alpha$, (f) holds due to the upper-bound in Step 1, (g) holds because $n \geq n_1$, and (h) follows from the union bound.
    
    Next, observe that
    \begin{align}
        \bkt{\hat{\mathbf{p}}_Y}_t - \bkt{\mathbf{p}_Y}_t &\stackclap{(a)}{=} \frac{1}{n} \sum_{j=1}^n \Iv{Z_j = t} - \bkt{\mathbf{p}_Y}_t \\
        &\stackclap{(b)}{=} \frac{1}{n} \sum_{j=1}^n \Iv{Z_j = t} - \sum_{c=1}^d \sum_{b=1}^d \frac{\rho_b}{d} \bkt{\mathbf{p}_{Z,c,b}}_t \\
        &\stackclap{(c)}{=} \frac{1}{n} \sum_{j=1}^n \Iv{Z_j = t} \\
        &\quad - \frac{1}{n} \sum_{c=1}^d \sum_{b=1}^d \sum_{j \in \mathcal{J}_{c,b}} \bkt{\mathbf{p}_{Z,c,b}}_t \\
        &= \frac{1}{n} \sum_{j=1}^n \Iv{Z_j = t} - \frac{1}{n} \sum_{j=1}^n \P{Z_j = t} \, ,
    \end{align}
    where (a) holds because $Y_1^n$ is a permutation of $Z_1^n$, (b) follows from substituting in \cref{eq:binary-random-permutation-block}, and (c) holds because $\abs{\mathcal{J}_{c,b}} = \rho_b \, \frac{n}{d}$ for all $c, b \in [d]$. The $Z_1^n$ are conditionally independent given the messages, since the letters $\prn{X_i}_1^n$ are independently generated and the $W_1^n$ are independently passed through the DMC. Applying Hoeffding's inequality (\cref{lemma:hoeffdings-inequality}) with $\tau = \sqrt{\frac{\log_e n}{n}}$,\footnote{We cannot apply Hoeffding directly on $\hat{\mathbf{p}}_Y$ because the $Y_1^n$ are \emph{not} independent, by virtue of being the outputs of a random permutation block whose inputs $Z_1^n$ are \emph{not} identically distributed.}
    \begin{align}
        \P{g_n(Y_1^n) \neq (\boldsymbol{\mu}_1, \dots, \boldsymbol{\mu}_d)} &\leq \sum_{t=0}^d 2e^{-2n \prn{\sqrt{\frac{\log_e n}{n}}}^2} \\
        &= \frac{2(d + 1)}{n^2} \\
        &\stackclap{(a)}{\leq} \epsilon \, ,
    \end{align}
    where (a) holds because $n \geq n_0 \geq \sqrt{\frac{2(d + 1)}{\epsilon}}$. Finally, taking expectation with respect to the messages yields $\Perror{n} \leq \epsilon$ as desired.
\end{proof}

\section{Auxiliary Results} \label{appendix:auxiliary-results}

The following lemma characterizes the size of the message component sets $\brc{\mathcal{L}_i}_{i=1}^d$ defined in \cref{eq:message-sets} using a prefix sum argument:

\begin{lemma}[Cardinality of Message Sets] \label{lemma:cardinality}
    Let $\brc{\mathcal{L}_i}_{i=1}^d$ be defined as in \cref{eq:message-sets}. Then,
    \begin{align}
        \abs{\mathcal{L}_i} = \begin{cases}
            \multibinom{m_i}{p - 1} \, , & \text{if $i < d$} \, , \\
            \multibinom{m_i + 1}{p - 1} - p \, , & \text{if $i = d$} \, .
        \end{cases}
    \end{align}
\end{lemma}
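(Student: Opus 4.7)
\textbf{Proof proposal for \cref{lemma:cardinality}.} The plan is to biject $\mathcal{L}_i$ with a set of prefix-sum sequences, and then recognize the latter as (essentially) a collection of multisubsets. Given any tuple $\brc{\theta_k}_{k=0}^{p-1} \in \mathcal{L}_i$, write $\theta_k = \ell_k / N_i$ where $N_i = m_i - 1$ if $i < d$ and $N_d = m_d$. The sum-to-one constraint becomes $\sum_{k=0}^{p-1} \ell_k = N_i$, while the definition of $\Theta_i$ forces $\ell_k \in \bktz{m_i - 1}$ (the upper bound being $N_i$ for $i < d$ and $N_d - 1$ for $i = d$). Define the prefix sums $T_j = \sum_{k=0}^{j-1} \ell_k$ for $j \in \bktz{0, p}$, so that $T_0 = 0$, $T_p = N_i$, and the intermediate values form a weakly increasing sequence $T_1 \leq T_2 \leq \cdots \leq T_{p-1}$ in $\bktz{0, N_i}$. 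Conversely, any such weakly increasing sequence determines the $\ell_k$ uniquely via $\ell_k = T_{k+1} - T_k$, so the map $\brc{\theta_k}_{k=0}^{p-1} \mapsto (T_1, \dots, T_{p-1})$ is a bijection onto (a subset of) the weakly increasing $(p-1)$-tuples in $\bktz{0, N_i}$.

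First I would handle the case $i < d$. Here the bound $\ell_k \leq N_i = m_i - 1$ is automatic from the sum constraint, so the bijection above lands on \emph{all} weakly increasing $(p-1)$-tuples with entries in $\bktz{0, m_i - 1}$, a set of cardinality $m_i$. Since weakly increasing $(p-1)$-tuples drawn from an $m_i$-element set correspond exactly to $(p-1)$-multisubsets of $[m_i]$, we obtain $\abs{\mathcal{L}_i} = \multibinom{m_i}{p - 1}$ as desired.

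Next I would handle the case $i = d$. The prefix sum bijection now maps $\mathcal{L}_d$ into the weakly increasing $(p-1)$-tuples with entries in $\bktz{0, m_d}$, a set of cardinality $m_d + 1$, yielding $\multibinom{m_d + 1}{p - 1}$ candidate sequences. The remaining obstacle, which is the main subtlety of the lemma, is to correctly count and subtract the ``degenerate'' sequences: those violating the strict bound $\ell_k \leq m_d - 1$, i.e., those for which some single gap $T_{k+1} - T_k$ equals $m_d$. Since the total is exactly $m_d$, such a configuration forces all other gaps to be zero, so the prefix sequence has the form $(0, \dots, 0, m_d, \dots, m_d)$ with the jump located between position $k$ and $k+1$ for some $k \in \bktz{0, p-1}$. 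This gives exactly $p$ degenerate sequences, one per jump location, and so $\abs{\mathcal{L}_d} = \multibinom{m_d + 1}{p - 1} - p$.

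I do not anticipate any significant technical obstacle, as the argument is fundamentally a stars-and-bars calculation presented via prefix sums; the only place requiring care is verifying that the ``degenerate'' configurations are counted without duplication in the $i = d$ case, which follows from the observation that a single gap of size $m_d$ in a sequence summing to $m_d$ leaves no room for additional positive gaps.
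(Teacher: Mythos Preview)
Your proposal is correct and follows essentially the same approach as the paper's proof: both biject $\mathcal{L}_i$ with weakly increasing $(p-1)$-tuples via prefix sums, interpret these as multisubsets, and in the $i=d$ case subtract the $p$ degenerate ``single-jump'' sequences $(0,\dots,0,m_d,\dots,m_d)$. The only differences are notational.
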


\begin{proof}
    \emph{Case 1: $i < d$.} Let $\mathcal{T}_i$ be the set of non-decreasing $(p - 1)$-tuples of integers in $\bktz{m_i - 1}$:
    \begin{align}
        \mathcal{T}_i = \brc{\brc{t_k}_{k=1}^{p-1} \in \bktz{m_i - 1}^{p-1}: t_1 \leq \cdots \leq t_{p-1}} \, .
    \end{align}
    There exists a bijection between $\mathcal{L}_i$ and $\mathcal{T}_i$, given by the following mapping in which $\brc{t_k}_{k=1}^{p-1} \in \mathcal{T}_i$ are the scaled \emph{prefix sums} of $\brc{\theta_k}_{k=0}^{p-1} \in \mathcal{L}_i$, as visualized in \cref{figure:theta-t-bijection}:
    \begin{align}
        \forall k \in \bktz{p - 1}, \enspace \theta_k &\stackclap{(a)}{=} \frac{t_{k+1} - t_k}{m_i - 1} \, , \\
        \forall k \in [p - 1], \enspace t_k &= (m_i - 1) \sum_{k'=0}^{k-1} \theta_{k'} \, .
    \end{align}
    To avoid creating special cases in (a) for $\theta_0$ and $\theta_{p-1}$, we assume that $t_0 = 0$ and $t_p = m_i - 1$. The lack of a true $t_p$ variable reflects the loss of one degree of freedom incurred by the sum-to-one constraint $\sum_{k=0}^{p-1} \theta_k = 1$ in the definition of $\mathcal{L}_i$. The weak inequalities between consecutive $t_k$ variables correspond to the possibility of some $\theta_k$ variables being zero.
    
    \begin{figure}
        \centering
        \begin{tikzpicture}
            \def\l{8.25}     % length of number line
            \def\R{0.2}      % radius of major ticks
            \def\r{0.1}      % radius of minor ticks
            \def\pad{0.05}   % spacing around underbraces
            
            % draw number line
            \draw[thick] (0,1) -- (\l,1);
            
            % draw major ticks
            \draw[thin] (0, 1 - \R) -- (0, 1 + \R) node [above] {$0$};
            \draw[thin] (2/6 * \l, 1 - \R) -- (2/6 * \l, 1 + \R) node [above] {$t_1 = 2$};
            \draw[thin] (5/6 * \l, 1 - \R) -- (5/6 * \l, 1 + \R) node [above, align=center] {$t_2 = 5$ \\ $t_3 = 5$};
            \draw[thin] (\l, 1 - \R) -- (\l, 1 + \R) node [above] {$1$};
            
            % draw minor ticks
            \draw[thin] (1/6 * \l, 1 - \r) -- (1/6 * \l, 1 + \r);
            \draw[thin] (3/6 * \l, 1 - \r) -- (3/6 * \l, 1 + \r);
            \draw[thin] (4/6 * \l, 1 - \r) -- (4/6 * \l, 1 + \r);
            
            % draw underbraces
            \draw[thick, decorate, decoration={calligraphic brace, mirror}] (\pad, 1 - \R - 0.1) -- (2/6 * \l - \pad, 1 - \R - 0.1)
            node [pos=0.5, anchor=north, yshift=-0.1cm] {$\theta_0 = \frac{2}{6}$};
            \draw[thick, decorate, decoration={calligraphic brace, mirror}] (2/6 * \l + \pad, 1 - \R - 0.1) -- (5/6 * \l - \pad, 1 - \R - 0.1)
            node [pos=0.5, anchor=north, yshift=-0.1cm] {$\theta_1 = \frac{3}{6}$};
            \draw[thick, decorate, decoration={calligraphic brace, mirror}] (5/6 * \l + \pad, 1 - \R - 0.1) -- (\l - \pad, 1 - \R - 0.1)
            node [pos=0.5, anchor=north, yshift=-0.1cm] {$\theta_3 = \frac{1}{6}$};
            
            % draw arrow
            \draw[->] (9/12 * \l, -0.3) node [below] {$\theta_2 = 0$} -- (5/6 * \l, 1 - \R - 0.2);
        \end{tikzpicture}
        \caption{Visualization of the relationship between $\brc{\theta_k}_{k=0}^{p-1}$ and $\brc{t_k}_{k=1}^{p-1}$ in the case $i < d$, $p = 4$, and $m_i = 7$.}
        \label{figure:theta-t-bijection}
    \end{figure}
    
    By virtue of their non-decreasing quality, each tuple $\brc{t_k}_{k=1}^{p-1} \in \mathcal{T}_i$ can be interpreted as a $(p - 1)$-multisubset of $\bktz{m_i - 1}$. Therefore,
    \begin{align}
        \abs{\mathcal{L}_i} \stackclap{(a)}{=} \abs{\mathcal{T}_i} = \multibinom{m_i}{p - 1}
    \end{align}
    as desired, where (a) holds due to bijectivity.
    
    \emph{Case 2: $i = d$.} Let $\mathcal{T}_d$ be the set of non-decreasing $(p - 1)$-tuples of integers in $\bktz{m_d}$, with the property $(*)$ that any two consecutive entries have difference less than $m_d$:
    \begin{align}
        \mathcal{T}_d = \Biggl \{\brc{t_k}_{k=1}^{p-1} \in \bktz{m_d}^{p-1}: \bigwedge_{k=1}^p 0 \leq \underbrace{t_k - t_{k-1} < m_d}_{(*)} \Biggr \} \, .
    \end{align}
    To avoid creating special cases, we assume that $t_0 = 0$ and $t_p = m_d$. The bijective \emph{prefix sum} mapping between $\mathcal{L}_d$ and $\mathcal{T}_d$ is
    \begin{align}
        \forall k \in \bktz{p - 1}, \enspace \theta_k &= \frac{t_{k+1} - t_k}{m_d} \, , \\
        \forall k \in [p - 1], \enspace t_k &= m_d \sum_{k'=0}^{k-1} \theta_{k'} \, .
    \end{align}
    Property $(*)$ reflects the fact that $\theta_k < 1$ for each $k \in \bktz{p - 1}$. Interpreting each tuple $\brc{t_k}_{k=1}^{p-1} \in \mathcal{T}_d$ as a $(p - 1)$-multisubset of $\bktz{m_d}$, we have
    \begin{align}
        \abs{\mathcal{L}_d} \stackclap{(a)}{=} \abs{\mathcal{T}_d} \stackclap{(b)}{=} \multibinom{m_d + 1}{p - 1} - p
    \end{align}
    as desired, where (a) holds due to bijectivity and (b) holds because there are $p$ multisubsets in $\bktz{m_d}^{p-1}$ which do not satisfy $(*)$.\footnote{A multisubset does not satisfy $(*)$ iff it only contains values $0$ and $m_d$; the $p$ such multisubsets contain different numbers of zeros and $m_d$ values.}
\end{proof}

The following lemma confirms that the subsegment proportions defined in \cref{eq:subsegment-proportions} sum to $1$:

\begin{lemma}[Subsegment Well-Definedness] \label{lemma:subsegment-well-definedness}
    Let $\brc{\rho_b}_{b=1}^d$ be defined as in \cref{eq:subsegment-proportions}. Then, $\sum_{b=1}^d \rho_b = 1$.
\end{lemma}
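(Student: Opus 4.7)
The plan is to prove the claim by downward induction on a tail-sum recurrence, which is the ``recurrences'' alternative alluded to in the lemma statement. Concretely, define the partial tail sum
\begin{align}
    T_k \triangleq \sum_{b=k}^d \rho_b
\end{align}
for each $k \in [d]$, and let $P_k \triangleq \prod_{i=1}^k m_i$ with the convention $P_0 = 1$. The goal is to show $T_1 = 1$, and I will do this by establishing the stronger invariant that $T_k = 1/P_{k-1}$ for every $k \in [d]$.

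First I would handle the base case $k = d$: by the $b = d$ branch of \cref{eq:subsegment-proportions}, we have $T_d = \rho_d = m_d / P_d = 1/P_{d-1}$, matching the invariant. Next I would carry out the inductive step for $k \in [d-1]$, assuming $T_{k+1} = 1/P_k$. Using the recurrence $T_k = \rho_k + T_{k+1}$ together with the $b < d$ branch of \cref{eq:subsegment-proportions}, the computation collapses cleanly:
\begin{align}
    T_k = \frac{m_k - 1}{P_k} + \frac{1}{P_k} = \frac{m_k}{P_k} = \frac{1}{P_{k-1}},
\end{align}
completing the induction. Specializing to $k = 1$ then yields $\sum_{b=1}^d \rho_b = T_1 = 1/P_0 = 1$, as required.

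There is no real obstacle here: the argument is essentially a telescoping identity repackaged as a one-step recurrence, and the combination of the special $b = d$ case with the generic $b < d$ formula is exactly what makes each inductive step close. The only point worth being careful about is handling the edge case $d = 1$ (where the inductive step is vacuous and the base case alone suffices) and the convention $P_0 = 1$; both are resolved by inspection.
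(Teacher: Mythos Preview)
Your proof is correct and follows essentially the same approach as the paper's: both argue via a one-step recurrence that peels off one term at a time, with the identical core computation $\frac{m_k - 1}{P_k} + \frac{1}{P_k} = \frac{1}{P_{k-1}}$. The only cosmetic difference is that the paper parameterizes the invariant as a head-sum-plus-remainder quantity $A_t$ and shows $A_t = A_{t+1}$, whereas you track the tail sum $T_k$ directly and prove $T_k = 1/P_{k-1}$; these are reindexings of the same telescoping identity.
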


\begin{proof}
    For $t \in [d]$, define
    \begin{align}
        A_t = \sum_{b=1}^{d-t} \frac{m_b - 1}{\prod_{i=1}^b m_i} + \frac{m_{d-t+1}}{\prod_{i=1}^{d-t+1} m_i} \, .
    \end{align}
    For all $t \in [d - 1]$, observe that $A_t$ satisfies the recurrence
    \begin{align}
        A_t &= \sum_{b=1}^{d-t} \frac{m_b - 1}{\prod_{i=1}^b m_i} + \frac{1}{\prod_{i=1}^{d-t} m_i} \\
        &= \sum_{b=1}^{d-(t+1)} \frac{m_b - 1}{\prod_{i=1}^b m_i} + \frac{m_{d-t} - 1}{\prod_{i=1}^{d-t} m_i} + \frac{1}{\prod_{i=1}^{d-t} m_i} \\
        &= \sum_{b=1}^{d-(t+1)} \frac{m_b - 1}{\prod_{i=1}^b m_i} + \frac{m_{d-t}}{\prod_{i=1}^{d-t} m_i} = A_{t+1} \, .
    \end{align}
    Therefore, $\sum_{b=1}^d \rho_b = A_1 = A_d = 1$ as desired.
\end{proof}

The following lemma provides a justification from first principles of the fact that adding a column to a tall matrix does not increase its minimum singular value:

\begin{lemma}[Minimum Singular Value] \label{lemma:minimum-singular-value}
    Let $\mathbf{A} \in \mathbb{R}^{m \times n}$ be a tall matrix (namely, $m > n$) and let $\mathbf{B} \in \mathbb{R}^{m \times (n+1)}$ be formed by adding a column to $\mathbf{A}$. Then, $\sigma_{\min}(\mathbf{B}) \leq \sigma_{\min}(\mathbf{A})$.
\end{lemma}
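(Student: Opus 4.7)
The plan is to invoke the standard variational characterization of the minimum singular value for a tall-or-square matrix $\mathbf{M} \in \mathbb{R}^{m \times k}$ with $m \geq k$, namely
\begin{align}
    \sigma_{\min}(\mathbf{M}) = \min_{\mathbf{x} \in \mathbb{R}^k, \, \norm{\mathbf{x}}_2 = 1} \norm{\mathbf{M} \mathbf{x}}_2 \, .
\end{align}
Both $\mathbf{A}$ and $\mathbf{B}$ are tall-or-square since $m > n$ implies $m \geq n + 1$, so this characterization applies to both matrices.

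The key observation is that any unit vector in $\mathbb{R}^n$ can be zero-padded into a unit vector in $\mathbb{R}^{n+1}$ in a way that makes $\mathbf{B}$ act on it exactly like $\mathbf{A}$ acts on the original. Concretely, without loss of generality suppose $\mathbf{B} = \mbkt{\mathbf{A} \mid \mathbf{c}}$, where $\mathbf{c} \in \mathbb{R}^m$ is the added column (the argument for a column inserted at an interior position is identical, with the padding zero placed at the corresponding index). For any unit vector $\mathbf{x} \in \mathbb{R}^n$, define $\mathbf{y} = (\mathbf{x}^\top, 0)^\top \in \mathbb{R}^{n+1}$. Then $\norm{\mathbf{y}}_2 = \norm{\mathbf{x}}_2 = 1$ and $\mathbf{B} \mathbf{y} = \mathbf{A} \mathbf{x}$.

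The proof then concludes in one line by bounding the minimum over $\mathbb{R}^{n+1}$ from above by the minimum over the embedded subset of padded vectors:
\begin{align}
    \sigma_{\min}(\mathbf{B}) &= \min_{\mathbf{y} \in \mathbb{R}^{n+1}, \, \norm{\mathbf{y}}_2 = 1} \norm{\mathbf{B} \mathbf{y}}_2 \\
    &\leq \min_{\mathbf{x} \in \mathbb{R}^n, \, \norm{\mathbf{x}}_2 = 1} \norm{\mathbf{B} (\mathbf{x}^\top, 0)^\top}_2 \\
    &= \min_{\mathbf{x} \in \mathbb{R}^n, \, \norm{\mathbf{x}}_2 = 1} \norm{\mathbf{A} \mathbf{x}}_2 \\
    &= \sigma_{\min}(\mathbf{A}) \, ,
\end{align}
which is the desired inequality. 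There is no real obstacle here; the result is essentially a Courant-Fischer-style interlacing statement for singular values (adding a column can only create additional, smaller singular values while leaving a dominating set intact), and the padding trick is the cleanest way to expose this without invoking the full interlacing machinery.
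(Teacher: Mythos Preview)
Your proposal is correct and follows essentially the same approach as the paper: both use the variational characterization $\sigma_{\min}(\mathbf{M}) = \min_{\norm{\mathbf{x}}_2 = 1} \norm{\mathbf{M}\mathbf{x}}_2$ for tall-or-square matrices, then apply the zero-padding trick to embed unit vectors from $\mathbb{R}^n$ into $\mathbb{R}^{n+1}$. The only difference is that the paper derives the variational characterization from scratch via the SVD, whereas you invoke it as standard; the paper also notes (as you do) that the result is an immediate consequence of Cauchy interlacing applied to $\mathbf{A}^\top\mathbf{A}$ and $\mathbf{B}^\top\mathbf{B}$.
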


\begin{proof}
    Let $S^n = \brc{\mathbf{x} \in \mathbb{R}^n: \norm{\mathbf{x}} = 1}$ be the unit $n$-sphere. We will first show that for any tall or square matrix $\mathbf{C} \in \mathbb{R}^{m \times n}$,
    \begin{align}
        \sigma_{\min}(\mathbf{C}) = \min_{\mathbf{x} \in S^n} \norm{\mathbf{C} \mathbf{x}} \, .
    \end{align}
    Fix $\mathbf{C} \in \mathbb{R}^{m \times n}$ and consider its singular value decomposition $\mathbf{C} = \mathbf{U} \boldsymbol{\Sigma} \mathbf{V}^\top$, where $\mathbf{U} \in \mathbb{R}^{m \times m}$ is orthogonal, $\boldsymbol{\Sigma} \in \mathbb{R}^{m \times n}$ is diagonal with monotonically non-increasing diagonal entries, and $\mathbf{V}^\top \in \mathbb{R}^{n \times n}$ is orthogonal. It follows that
    \begin{align}
        \sigma_{\min}(\mathbf{C}) &= \min_{\mathbf{x} \in S^n} \sqrt{\sum_{j=1}^n \sigma_{\min}^2(\mathbf{C}) \bkt{\mathbf{x}}_j^2} \\
        &\stackclap{(a)}{\leq} \min_{\mathbf{x} \in S^n} \sqrt{\sum_{j=1}^n \sigma_j^2(\mathbf{C}) \bkt{\mathbf{x}}_j^2} \\
        &\stackclap{(b)}{=} \min_{\mathbf{x} \in S^n} \norm{\boldsymbol{\Sigma} \mathbf{x}} \\
        &\stackclap{(c)}{=} \min_{\mathbf{x} \in S^n} \norm{\mathbf{U} \boldsymbol{\Sigma} \mathbf{V}^\top \mathbf{x}} \\
        &= \min_{\mathbf{x} \in S^n} \norm{\mathbf{C} \mathbf{x}} \, ,
    \end{align}
    where (b) holds because $\brc{\sigma_j(\mathbf{C})}_{j=1}^n$ are the diagonal entries of $\boldsymbol{\Sigma}$ and (c) holds because $\mathbf{U}$ and $\mathbf{V}^\top$ are orthogonal matrices. Furthermore, (a) holds with equality because
    \begin{align}
        \min_{\mathbf{x} \in S^n} \sqrt{\sum_{j=1}^n \sigma_j^2(\mathbf{C}) \bkt{\mathbf{x}}_j^2} \leq \sqrt{\sum_{j=1}^n \sigma_j^2(\mathbf{C}) \bkt{\mathbf{e}_n}_j^2} = \sigma_{\min}(\mathbf{C}) \, .
    \end{align}
    Next, let $\mathbf{x}^* \in \arg \min_{\mathbf{x} \in S^n} \norm{\mathbf{A} \mathbf{x}}$. Form $\mathbf{x}^\dagger \in S^{n+1}$ by adding a $0$ to $\mathbf{x}^*$ at the index where an extra column was added to $\mathbf{A}$ to form $\mathbf{B}$. Since $\mathbf{A}$ and $\mathbf{B}$ are tall or square matrices,
    \begin{align}
        \sigma_{\min}(\mathbf{B}) = \min_{\mathbf{x} \in S^{n+1}} \norm{\mathbf{B} \mathbf{x}} \leq \norm{\mathbf{B} \mathbf{x}^\dagger} = \norm{\mathbf{A} \mathbf{x}^*} = \sigma_{\min}(\mathbf{A})
    \end{align}
    as desired.
\end{proof}

We note that this lemma can also be seen as a direct corollary of the \emph{Cauchy interlacing theorem} \cite[Theorem 4.3.17]{HornJohnson2013}, because the singular values of $\mathbf{A}$ and $\mathbf{B}$ are the square roots of the eigenvalues of $\mathbf{A}^\top \mathbf{A}$ and $\mathbf{B}^\top \mathbf{B}$, respectively, and $\mathbf{A}^\top \mathbf{A}$ is a principal submatrix of $\mathbf{B}^\top \mathbf{B}$.

Lastly, we restate Hoeffding's inequality below for convenience:

\begin{lemma}[Hoeffding's Inequality {\cite[Theorem 2.8]{BoucheronLugosiMassart2013}}] \label{lemma:hoeffdings-inequality}
    Let $\brc{X_i}_{i=1}^n$ be independent random variables where $X_i \in [a_i, b_i]$ with $a_i < b_i$ for each $i \in [n]$. Then, for any $\tau > 0$,
    \begin{align}
        \P{\abs{\frac{1}{n} \sum_{i=1}^n X_i - \frac{1}{n} \sum_{i=1}^n \E{X_i}} \geq \tau} \leq 2e^{-\frac{2n^2 \tau^2}{\sum_{i=1}^n (b_i - a_i)^2}} \, .
    \end{align}
\end{lemma}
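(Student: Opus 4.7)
The plan is to give the standard Chernoff-Hoeffding argument. First I would set $S_n = \sum_{i=1}^n X_i$ and reduce to a one-sided tail bound on $S_n - \E{S_n}$ via the union bound, which contributes the factor of $2$ (applying the same argument to $-X_i$ handles the lower tail symmetrically). For the one-sided bound, I would apply the Cram\'er-Chernoff method: for any $\lambda > 0$, Markov's inequality gives
\begin{align}
    \P{S_n - \E{S_n} \geq n\tau} \leq e^{-\lambda n \tau} \, \E{e^{\lambda (S_n - \E{S_n})}} \, ,
\end{align}
and by independence of $\brc{X_i}_{i=1}^n$ the moment generating function factorizes as $\prod_{i=1}^n \E{e^{\lambda(X_i - \E{X_i})}}$.

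The main technical step is to control each factor in this product via Hoeffding's lemma: for any random variable $Y \in [a, b]$ with $\E{Y} = 0$,
\begin{align}
    \E{e^{\lambda Y}} \leq \exp \mprn{\tfrac{1}{8} \lambda^2 (b - a)^2} \, .
\end{align}
I would prove this by convexity of $y \mapsto e^{\lambda y}$, which bounds $e^{\lambda y} \leq \frac{b - y}{b - a} e^{\lambda a} + \frac{y - a}{b - a} e^{\lambda b}$ on $[a, b]$. Taking expectations and using $\E{Y} = 0$ produces an explicit function $\varphi(\lambda)$ of $\lambda$; a second-order Taylor expansion of $\log \varphi$ with Lagrange remainder then yields the quadratic bound. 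Applying this to each centered variable $X_i - \E{X_i} \in [a_i - \E{X_i}, b_i - \E{X_i}]$, whose range has width exactly $b_i - a_i$, gives
\begin{align}
    \E{e^{\lambda(S_n - \E{S_n})}} \leq \exp \mprn{\tfrac{1}{8} \lambda^2 \textstyle \sum_{i=1}^n (b_i - a_i)^2} \, .
\end{align}

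Combining this with the Chernoff bound and optimizing over $\lambda > 0$ (the optimum occurs at $\lambda^* = 4 n \tau / \sum_{i=1}^n (b_i - a_i)^2$) yields
\begin{align}
    \P{S_n - \E{S_n} \geq n\tau} \leq \exp \mprn{-\frac{2 n^2 \tau^2}{\sum_{i=1}^n (b_i - a_i)^2}} \, ,
\end{align}
and doubling for the two-sided statement closes the proof. The only non-routine step is Hoeffding's lemma itself, which is why I would present the convexity-plus-Taylor argument explicitly; everything else is bookkeeping with independence and the exponential Markov inequality.
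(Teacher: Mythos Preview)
Your proposal is correct and follows the standard Cram\'er--Chernoff argument underlying Hoeffding's inequality. The paper itself does not prove this lemma; it is merely restated for convenience with a citation to \cite[Theorem 2.8]{BoucheronLugosiMassart2013}, so there is no paper-side proof to compare against beyond noting that your sketch is the textbook derivation found in that reference.
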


\section*{Acknowledgment}

The authors would like to thank Japneet Singh for discussions regarding this work.

\bibliographystyle{myIEEEtran}
\bibliography{journal.bib}

% Generated by IEEEtran.bst, version: 1.14 (2015/08/26)
\begin{thebibliography}{10}
\providecommand{\url}[1]{#1}
\csname url@samestyle\endcsname
\providecommand{\newblock}{\relax}
\providecommand{\bibinfo}[2]{#2}
\providecommand{\BIBentrySTDinterwordspacing}{\spaceskip=0pt\relax}
\providecommand{\BIBentryALTinterwordstretchfactor}{4}
\providecommand{\BIBentryALTinterwordspacing}{\spaceskip=\fontdimen2\font plus
\BIBentryALTinterwordstretchfactor\fontdimen3\font minus
  \fontdimen4\font\relax}
\providecommand{\BIBforeignlanguage}[2]{{%
\expandafter\ifx\csname l@#1\endcsname\relax
\typeout{** WARNING: IEEEtran.bst: No hyphenation pattern has been}%
\typeout{** loaded for the language `#1'. Using the pattern for}%
\typeout{** the default language instead.}%
\else
\language=\csname l@#1\endcsname
\fi
#2}}
\providecommand{\BIBdecl}{\relax}
\BIBdecl

\bibitem{LuMakurISIT2023}
W.~Lu and A.~Makur, ``Permutation sum-capacity of binary adder multiple-access
  channels,'' in \emph{Proceedings of the IEEE International Symposium on
  Information Theory (ISIT)}, Taipei, Taiwan, June 25-30 2023, pp. 933--938.

\bibitem{Makur2018}
A.~Makur, ``Information capacity of {BSC} and {BEC} permutation channels,'' in
  \emph{Proceedings of the 56th Annual Allerton Conference on Communication,
  Control, and Computing}, Monticello, IL, USA, October 2-5 2018, pp.
  1112--1119.

\bibitem{Makur2020a}
A.~Makur, ``Bounds on permutation channel capacity,'' in \emph{Proceedings of
  the IEEE International Symposium on Information Theory (ISIT)}, Los Angeles,
  CA, USA, June 21-26 2020, pp. 2026--2031.

\bibitem{Makur2020b}
A.~Makur, ``Coding theorems for noisy permutation channels,'' \emph{IEEE
  Transactions on Information Theory}, vol.~66, no.~11, pp. 6723--6748,
  November 2020.

\bibitem{TangPolyanskiy2023}
J.~Tang and Y.~Polyanskiy, ``Capacity of noisy permutation channels,''
  \emph{IEEE Transactions on Information Theory}, vol.~69, no.~7, p.~1, July
  2023.

\bibitem{KovacevicTan2018b}
M.~Kova\v{c}evi\'{c} and V.~Y.~F. Tan, ``Codes in the space of
  multisets--{C}oding for permutation channels with impairments,'' \emph{IEEE
  Transactions on Information Theory}, vol.~64, no.~7, pp. 5156--5169, July
  2018.

\bibitem{ShomoronyHeckel2021}
I.~Shomorony and R.~Heckel, ``{DNA}-based storage: {M}odels and fundamental
  limits,'' \emph{IEEE Transactions on Information Theory}, vol.~67, no.~6, pp.
  3675--3689, June 2021.

\bibitem{ShomoronyVahid2021}
I.~Shomorony and A.~Vahid, ``Torn-paper coding,'' \emph{IEEE Transactions on
  Information Theory}, vol.~67, no.~12, pp. 7904--7913, December 2021.

\bibitem{TamirMerhav2021}
R.~Tamir and N.~Merhav, ``Error exponents in the bee identification problem,''
  \emph{IEEE Transactions on Information Theory}, vol.~67, no.~10, pp.
  6564--6582, October 2021.

\bibitem{DiggaviGrossglauser2001}
S.~N. Diggavi and M.~Grossglauser, ``On transmission over deletion channels,''
  in \emph{Proceedings of the 39th Annual Allerton Conference on Communication,
  Control, and Computing}, Monticello, IL, USA, October 3-5 2001, pp. 573--582.

\bibitem{Metzner2009}
J.~J. Metzner, ``Simplification of packet-symbol decoding with errors,
  deletions, misordering of packets, and no sequence numbers,'' \emph{IEEE
  Transactions on Information Theory}, vol.~55, no.~6, pp. 2626--2639, June
  2009.

\bibitem{Mitzenmacher2006}
M.~Mitzenmacher, ``Polynomial time low-density parity-check codes with rates
  very close to the capacity of the $q$-ary random deletion channel for large
  $q$,'' \emph{IEEE Transactions on Information Theory}, vol.~52, no.~12, pp.
  5496--5501, December 2006.

\bibitem{XuZhang2002}
Y.~Xu and T.~Zhang, ``Variable shortened-and-punctured {R}eed-{S}olomon codes
  for packet loss protection,'' \emph{IEEE Transactions on Broadcasting},
  vol.~48, no.~3, pp. 237--245, September 2002.

\bibitem{GadouleauGoupil2010}
M.~Gadouleau and A.~Goupil, ``Binary codes for packet error and packet loss
  correction in store and forward,'' in \emph{Proceedings of the International
  ITG Conference on Source and Channel Coding (SCC)}, no.~25, Siegen, Germany,
  January 18-21 2010, pp. 1--6.

\bibitem{WalshWeberMaina2008}
J.~M. Walsh, S.~Weber, and C.~{wa Maina}, ``Optimal rate delay tradeoffs for
  multipath routed and network coded networks,'' in \emph{2008 IEEE
  International Symposium on Information Theory}.\hskip 1em plus 0.5em minus
  0.4em\relax IEEE, 2008, pp. 682--686.

\bibitem{WalshWeberMaina2009}
J.~M. Walsh, S.~Weber, and C.~{wa Maina}, ``Optimal rate-delay tradeoffs and
  delay mitigating codes for multipath routed and network coded networks,''
  \emph{IEEE Transactions on Information Theory}, vol.~55, no.~12, pp.
  5491--5510, December 2009.

\bibitem{KovacevicVukobratovic2013}
M.~Kova\v{c}evi\'{c} and D.~Vukobratovi\'{c}, ``Subset codes for packet
  networks,'' \emph{IEEE Communications Letters}, vol.~17, no.~4, pp. 729--732,
  April 2013.

\bibitem{KovacevicVukobratovic2015}
M.~Kova\v{c}evi\'{c} and D.~Vukobratovi\'{c}, ``Perfect codes in the discrete
  simplex,'' \emph{Designs, Codes and Cryptography}, vol.~75, no.~1, pp.
  81--95, April 2015.

\bibitem{YazdiKiahGarciaruizMaZhaoMilenkovic2015}
S.~H.~T. Yazdi, H.~M. Kiah, E.~Garcia-Ruiz, J.~Ma, H.~Zhao, and O.~Milenkovic,
  ``{DNA}-based storage: Trends and methods,'' \emph{IEEE Transactions on
  Molecular, Biological and Multi-Scale Communications}, vol.~1, no.~3, pp.
  230--248, 2015.

\bibitem{ErlichZielinski2016}
Y.~Erlich and D.~Zielinski, ``{DNA} fountain enables a robust and efficient
  storage architecture,'' \emph{Science}, vol. 355, no. 6328, pp. 950--954,
  2017.

\bibitem{HeckelShomoronyRamchandranTse2017}
R.~Heckel, I.~Shomorony, K.~Ramchandran, and D.~N.~C. Tse, ``Fundamental limits
  of {DNA} storage systems,'' in \emph{Proceedings of the IEEE International
  Symposium on Information Theory (ISIT)}, Aachen, Germany, June 25-30 2017,
  pp. 3130--3134.

\bibitem{ShomoronyHeckel2019}
I.~Shomorony and R.~Heckel, ``Capacity results for the noisy shuffling
  channel,'' in \emph{Proceedings of the IEEE International Symposium on
  Information Theory (ISIT)}, Paris, France, July 7-12 2019, pp. 762--766.

\bibitem{KiahPuleoMilenkovic2016}
H.~M. Kiah, G.~J. Puleo, and O.~Milenkovic, ``Codes for {DNA} sequence
  profiles,'' \emph{IEEE Transactions on Information Theory}, vol.~62, no.~6,
  pp. 3125--3146, June 2016.

\bibitem{KovacevicTan2018a}
M.~Kova\v{c}evi\'{c} and V.~Y.~F. Tan, ``Asymptotically optimal codes
  correcting fixed-length duplication errors in {DNA} storage systems,''
  \emph{IEEE Communications Letters}, vol.~22, no.~11, pp. 2194--2197, November
  2018.

\bibitem{ChangWeldon1979}
S.-C. Chang and E.~J. {Weldon Jr.}, ``Coding for {$T$}-user multiple-access
  channels,'' \emph{IEEE Transactions on Information Theory}, vol. IT-25,
  no.~6, pp. 684--691, November 1979.

\bibitem{ElGamalKim2011}
A.~{El Gamal} and Y.-H. Kim, \emph{Network Information Theory}.\hskip 1em plus
  0.5em minus 0.4em\relax New York, NY, USA: Cambridge University Press, 2011.

\bibitem{KasamiLin1976}
T.~Kasami and S.~Lin, ``Coding for a multiple-access channel,'' \emph{IEEE
  Transactions on Information Theory}, vol. IT-22, no.~2, pp. 129--137, March
  1976.

\bibitem{GyorfiLaczay2004}
L.~Gyorfi and B.~Laczay, ``Signature coding and information transfer for the
  multiple access adder channel,'' in \emph{Information Theory Workshop}.\hskip
  1em plus 0.5em minus 0.4em\relax IEEE, 2004, pp. 242--246.

\bibitem{Chevillat1981}
P.~Chevillat, ``N-user trellis coding for a class of multiple-access
  channels,'' \emph{IEEE Transactions on Information Theory}, vol. IT-27,
  no.~1, pp. 114--120, January 1981.

\bibitem{NazerGastpar2006}
B.~Nazer and M.~Gastpar, ``Computing over multiple-access channels with
  connections to wireless network coding,'' in \emph{2006 IEEE International
  Symposium on Information Theory}.\hskip 1em plus 0.5em minus 0.4em\relax
  IEEE, 2006, pp. 1354--1358.

\bibitem{Weldon1978}
E.~J. {Weldon Jr.}, ``Coding for a multiple-access channel,'' \emph{Information
  and Control}, vol.~36, no.~3, pp. 256--274, 1978.

\bibitem{PolyanskiyWu2017Notes}
Y.~Polyanskiy and Y.~Wu, ``Lecture notes on information theory,'' May 2019,
  {D}epartment of Electrical Engineering and Computer Science, MIT, Cambridge,
  MA, USA, Lecture Notes 6.441.

\bibitem{Gu2018}
Y.~Gu, ``Zero-error communication over adder {MAC},'' Master's thesis,
  Massachusetts Institute of Technology, 2018.

\bibitem{KuroseRoss2016}
J.~F. Kurose and K.~W. Ross, \emph{Computer Networking: A Top-Down Approach},
  7th~ed.\hskip 1em plus 0.5em minus 0.4em\relax Hoboken, NJ, USA: Pearson
  Education, Inc., 2016.

\bibitem{SaitoKishiyamaBenjebbourNakamuraLiHiguchi2013}
Y.~Saito, Y.~Kishiyama, A.~Benjebbour, T.~Nakamura, A.~Li, and K.~Higuchi,
  ``Non-orthogonal multiple access ({NOMA}) for cellular future radio access,''
  in \emph{2013 IEEE 77th Vehicular Technology Conference (VTC Spring)}.\hskip
  1em plus 0.5em minus 0.4em\relax IEEE, 2013, pp. 1--5.

\bibitem{Stanley2011}
R.~P. Stanley, \emph{Enumerative Combinatorics, Volume 1}, 2nd~ed.\hskip 1em
  plus 0.5em minus 0.4em\relax Cambridge University Press, 2011.

\bibitem{Makur2019}
A.~Makur, ``Information contraction and decomposition,'' Sc.D. Thesis in
  Electrical Engineering and Computer Science, Massachusetts Institute of
  Technology, Cambridge, MA, USA, May 2019.

\bibitem{AjjanagaddePolyanskiy2015}
G.~Ajjanagadde and Y.~Polyanskiy, ``Adder {MAC} and estimates for {R}\'{e}nyi
  entropy,'' in \emph{Proceedings of the 53rd Annual Allerton Conference on
  Communication, Control, and Computing}, Monticello, IL, USA, September
  29-October 2 2015, pp. 434--441.

\bibitem{HornJohnson2013}
R.~A. Horn and C.~R. Johnson, \emph{Matrix Analysis}, 2nd~ed.\hskip 1em plus
  0.5em minus 0.4em\relax New York, NY, USA: Cambridge University Press, 2013.

\bibitem{StewartSun1990}
G.~W. Stewart and J.-G. Sun, \emph{Matrix Perturbation Theory}, ser. Computer
  Science and Scientific Computing.\hskip 1em plus 0.5em minus 0.4em\relax New
  York, NY, USA: Academic Press, 1990.

\bibitem{Gautschi1990}
W.~Gautschi, ``How (un)stable are {V}andermonde systems?'' in \emph{Asymptotic
  and Computational Analysis}, ser. Lecture Notes in Pure and Applied
  Mathematics, R.~Wong, Ed., vol. 124.\hskip 1em plus 0.5em minus 0.4em\relax
  Boca Raton, FL, USA: CRC Press, 1990, pp. 193--210.

\bibitem{Brink2010}
D.~Brink, ``H\"{o}lder continuity of roots of complex and p-adic polynomials,''
  \emph{Communications in Algebra}, vol.~38, no.~5, pp. 1658--1662, 2010.

\bibitem{Knuth1997}
D.~E. Knuth, \emph{The Art of Computer Programming, Volume 2: Seminumerical
  Algorithms}, 3rd~ed.\hskip 1em plus 0.5em minus 0.4em\relax Addison-Wesley,
  1997.

\bibitem{Cantor1869}
G.~Cantor, ``Ueber die einfachen zahlensysteme,'' \emph{Zeitschrift f{\"u}r
  Mathematik und Physik}, vol.~14, pp. 121--128, 1869.

\bibitem{CoverThomas2006}
T.~M. Cover and J.~A. Thomas, \emph{Elements of Information Theory},
  2nd~ed.\hskip 1em plus 0.5em minus 0.4em\relax Hoboken, NJ, USA: John Wiley
  \& Sons, Inc., 2006.

\bibitem{Keener2010}
R.~W. Keener, \emph{Theoretical Statistics: Topics For a Core Course}.\hskip
  1em plus 0.5em minus 0.4em\relax Springer, 2010.

\bibitem{CormenLeisersonRivestStein2022}
T.~H. Cormen, C.~E. Leiserson, R.~L. Rivest, and C.~Stein, \emph{Introduction
  to Algorithms}, 4th~ed.\hskip 1em plus 0.5em minus 0.4em\relax The MIT Press,
  2022.

\bibitem{LiMathias2003}
C.-K. Li and R.~Mathias, ``Norm bounds on the sum of block diagonal matrices,''
  \emph{Mathematical Inequalities and Applications}, vol.~6, pp. 215--222,
  2003.

\bibitem{BoucheronLugosiMassart2013}
S.~Boucheron, G.~Lugosi, and P.~Massart, \emph{Concentration Inequalities: {A}
  Nonasymptotic Theory of Independence}.\hskip 1em plus 0.5em minus 0.4em\relax
  New York, NY, USA: Oxford University Press, 2013.

\end{thebibliography}

\end{document}